\documentclass[12pt]{extarticle}
\usepackage{amsmath, amsthm, amssymb, color}
\usepackage{graphicx}
\usepackage{caption}
\usepackage{subcaption}
\usepackage{diagbox}
\usepackage[shortlabels]{enumitem}
\usepackage{makecell}
\usepackage[version=4]{mhchem} 

\usepackage[ruled,vlined,linesnumbered]{algorithm2e}
\usepackage{expl3}

\usepackage{dsfont}

\usepackage{tikz}
\usepackage{pgfplots} 
\usepgfplotslibrary{fillbetween}
\pgfplotsset{compat=1.14}
\usepackage{comment}
\usetikzlibrary{arrows,positioning,matrix,decorations.pathmorphing,calc} 

\usepackage{booktabs}

\usepackage{lipsum}

\usepackage[affil-it]{authblk}

\tolerance 10000
\headheight 0in
\headsep 0in
\evensidemargin 0in
\oddsidemargin \evensidemargin
\textwidth 6.5in
\topmargin .25in
\textheight 8.7in

\theoremstyle{plain}
\newtheorem{theorem}{Theorem}[section]
\newtheorem{corollary}[theorem]{Corollary}
\newtheorem{proposition}[theorem]{Proposition}
\newtheorem{lemma}[theorem]{Lemma}
\newtheorem{conjecture}[theorem]{Conjecture} 
\newtheorem{question}[theorem]{Question}

\theoremstyle{definition}
\newtheorem{defn}[theorem]{Definition}

\newtheorem{notation}[theorem]{Notation}

\newtheorem{remark}[theorem]{Remark}


\newcommand{\scc}{\mathcal{S}}
\newcommand{\F}{f_{c, \kappa}}
\newcommand{\augmentH}{h_{c, a}}

\newcommand{\St}{{S}}


\usepackage[colorlinks=true, allcolors=blue]{hyperref} 

\newcommand{\newt}{\mathrm{Newt}}

\newcommand{\argmin}{\mathrm{argmin}}
\newcommand{\vol}{\mathrm{Vol}}

\newcommand{\sign}{\text{sign}}
\newcommand{\R}{\mathbb{R}}
\newcommand{\C}{\mathbb{C}}

\newcommand{\koff}{k_{\mathrm{off}}}
\newcommand{\kon}{k_{\mathrm{on}}}
\newcommand{\kcat}{k_{\mathrm{cat}}}
\newcommand{\lcat}{{\ell}_{\mathrm{cat}}}
\newcommand{\loff}{{\ell}_{\mathrm{off}}}
\newcommand{\lon}{{\ell}_{\mathrm{on}}}
\newcommand{\stot}{\text{S}_{\text{tot}}}
\newcommand{\etot}{\text{E}_{\text{tot}}}
\newcommand{\ftot}{\text{F}_{\text{tot}}}

\usepackage{mathtools}

\newcommand{\defword}[1]{\textcolor{purple}{\underline{#1}}}

\begin{document}

\title{Oscillations and bistability in a model of ERK regulation}

\date{March 6, 2019}

\author[1]{Nida Obatake} 
\author[1]{Anne Shiu}
\author[1]{Xiaoxian Tang}
\author[2]{Ang\'elica Torres}
\affil[1]{Department of Mathematics, Texas A\&M University, USA}
\affil[2]{Department of Mathematical Sciences, University of Copenhagen, Denmark}

\maketitle

\begin{abstract}
This work concerns the question of how two important dynamical properties, oscillations and bistability, emerge in an important biological signaling network.
Specifically, we consider a model for dual-site phosphorylation and dephosphorylation of extracellular signal-regulated kinase (ERK).  We prove that oscillations persist even as the model is greatly simplified (reactions are made irreversible and intermediates are removed).  Bistability, however, is much less robust -- this property is lost when intermediates are removed or even when all reactions are made irreversible.  Moreover, bistability is characterized by the presence of two reversible, catalytic reactions: 
as other reactions are made irreversible, 
bistability persists as long as one or both of the specified reactions is preserved.  
Finally, we investigate the maximum number of steady states, aided by a network's ``mixed volume'' (a concept from convex geometry).  Taken together, our results shed light on the question of how oscillations and bistability emerge from a limiting network of the ERK network -- namely, the fully processive dual-site network -- which is known to be globally stable and therefore lack both oscillations and bistability.
Our proofs are enabled by a Hopf bifurcation criterion due to Yang, 
analyses of Newton polytopes arising from Hurwitz determinants, 
and recent characterizations of multistationarity for networks having a steady-state parametrization.

  \vskip 0.1cm
  \noindent \textbf{Keywords:} chemical reaction network, Hopf bifurcation, oscillation, bistable, Newton polytope, mixed volume  

\end{abstract}

\section{Introduction} 
In recent years, significant attention has been devoted to the question of how 
bistability and oscillations  
emerge in biological networks involving multisite phosphorylation~\cite{cs-survey}. 
Such networks are of great biological importance~\cite{25}. 
The one we consider is the network, depicted in Figure~\ref{fig:erk_sys}, comprising extracellular signal-regulated kinase (ERK) regulation by dual-site phosphorylation by the kinase MEK 
(denoted by $E$) and dephosphorylation by the phosphatase MKP3 ($F$)~\cite{long-term}. 
This network, which we call the ERK network, 
has an important role in regulating many cellular activities, with dysregulation implicated in many cancers~\cite{mek-erk}.  Accordingly, 
an important problem is to understand the dynamical 
properties of the ERK network, with the goal of predicting 
effects arising from mutations or drug treatments~\cite{erk-model}.  

\begin{figure}[th]
    \centering
   \includegraphics[scale=1.2]{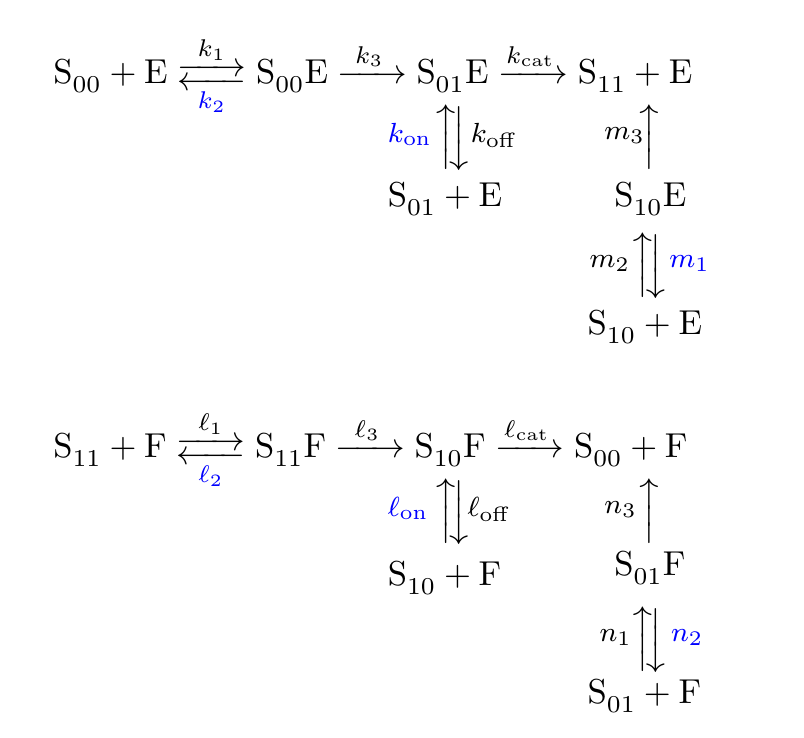} 
   \caption{The (full) \defword{ERK network}, from~\cite{long-term}, with notation of~\cite{DPST}.  The {\em fully processive} network is obtained by deleting all vertical reactions (those labeled by $\kon, \koff, m_1, m_2, m_3, \lon, \loff, n_1, n_2, n_3$).  We also consider irreversible versions of the ERK network obtained by deleting some of the reactions labeled $k_2,\kon, m_1, l_2, \lon, n_2$ (in blue).  In particular, deleting all six of those reactions yields the \defword{fully irreversible ERK network}.}
   \label{fig:erk_sys}
\end{figure}

The ERK network was shown by 
Rubinstein, Mattingly, Berezhkovskii, and Shvartsman~\cite{long-term}
to be bistable and exhibit oscillations (for some choices of rate constants).
Rubinstein {\em et al.}\ 
also observed that 
the ERK network ``limits'' to a network without bistability or oscillations.
Namely, when the rate constants $\kcat$ and $\lcat$ are much larger than $\koff$ and $\loff$, respectively, 
this yields the ``fully processive'' network obtained by deleting all vertical arrows in Figure~\ref{fig:erk_sys}, which is globally convergent to a unique steady state~\cite{ConradiShiu,EithunShiu}. 
Accordingly, Rubinstein {\em et al.}\ asked, How do bistability and oscillations in the ERK network emerge from the processive limit? This question was subsequently articulated as follows \cite{cs-survey}:

\begin{question} \label{q:main}
When the processivity levels $p_k:={\kcat}/{(\kcat+\koff)}$ and $p_{\ell}:={\lcat}/{(\lcat+\loff)}$ are arbitrarily close
to 1, is the ERK network still bistable 
and oscillatory?
\end{question}

One of our main contributions is to lay foundation toward answering Question~\ref{q:main}.  Speci\-fi\-cally, we answer a related question, {\em How do bistability and oscillations emerge from simpler versions of the ERK network?}  
Our main results, summarized in Table~\ref{tab:summary},
are that oscillations are surprisingly robust to operations that simplify the network, while bistability is 
lost more easily. 
Specifically, oscillations persist even as reactions are made irreversible and intermediates are removed (see Section~\ref{sec:oscillations}),
while bistability is lost more quickly, when only a few reactions are made irreversible (Section~\ref{sec:bistability}).  
Taken together, our results form a case study for the problem of model choice -- 
an investigation into the simplifications of a model that preserve important dynamical properties.

\begin{table}[ht]
\begin{center}
\begin{tabular}{lcccc}
\hline
 ERK & & & Maximum \# & Maximum \# \\
network          & Oscillations?                     & Bistability?                      & 
steady states & over $\mathbb C$ \\
\hline
Full         & Yes~\cite{long-term} & Yes~\cite{long-term} & Conjecture: 3                        & 7                            \\
 Irreversible & Yes* & If $\kon >0$ or $\lon >0$ & 1 & 5*\\
Reduced     & Yes                     & No                                & 1                                    & 3    \\                       
\hline
\end{tabular}
\caption{Summary of results.   Yes* indicates that the fully irreversible ERK network exhibits oscillations (see Figure~\ref{fig:oscill-in-irr-erk}), and 5* indicates that 5 is the maximum number of complex-number steady states for the network obtained 
from the full ERK network 
by setting $\kon=0$.
For details on results, see Propositions~\ref{prop:hopfcriterion}, \ref{prop:bistable}, and~\ref{prop:mixed-vol}, and Theorem~\ref{thm:bistable}.
\label{tab:summary}}
\end{center}
\end{table}

Our focus here -- on determining which operations on the ERK network preserve oscillations and bistability -- is similar in spirit to the recent approach of Sadeghimanesh and Feliu~\cite{core}.  Indeed, there has been significant interest in understanding which operations on networks preserve oscillations~\cite{banaji-inheritance}, bistability~\cite{BP-inher, FeliuWiuf, Joshi-Shiu-2013}, and other properties~\cite{joining}.

A related topic -- mentioned earlier -- is the question of how dynamical properties arise in phosphorylation systems.  Several works have examined this problem at the level of parameters, focusing on the question of which rate constants and/or initial conditions give rise to oscillations~\cite{mixed} or bistability~\cite{CFMW,conradi-iosif-kahle}.  Our perspective is slightly different; instead of allowing parameter values to change, we modify the network itself.  
Accordingly, our work is similar in spirit to recent investigations into minimal oscillatory or bistable networks~\cite{banaji-inheritance, BP-inher, hadac-osc, single-mol, which-small}.

A key tool we use is a parametrization of the steady states.  Such parametrizations have been shown in recent years to be indispensable for analyzing multistationarity (multiple steady states, which are necessary for bistability) and oscillations~\cite{  Giaroli-Bihan-Dickenstein,johnston-mueller-pantea,ThomGuna09}.
Indeed, here we build on results in~\cite{CFMW,mixed, DPST}.

Specifically, following~\cite{mixed}, we investigate oscillations by employing a steady-state parametrization together with a criterion of Yang~\cite{yang-hopf} that characterizes Hopf bifurcations in terms of determinants of Hurwitz matrices.
In~\cite{mixed}, this approach showed that the Hopf bifurcations of a mixed-mechanism phosphorylation network lie on a hypersurface defined by the vanishing of a single Hurwitz determinant.  For our ERK networks, however, 
the problem does not reduce to the analysis of a single polynomial, and the size of these polynomials makes the system difficult to solve. 
To this end, we introduce an algorithm for analyzing these polynomials, through their Newton polytopes, by using techniques from polyhedral geometry.  Using this algorithm, we succeed in finding, for the reduced ERK network, a Hopf bifurcation giving rise to oscillations.


Finally, we investigate the precise number of steady states in ERK networks.  
For general networks, much has been done for determining which networks admit multiple steady states -- see e.g.~\cite{CFMW, ME_entrapped, DPST, FeinOsc, Giaroli-Bihan-Dickenstein, mss-review, signs} -- but there are few techniques for determining a network's maximum number of steady states.  To this end, we introduce two related measures of a network, the maximum number of complex-number steady states and the ``mixed volume''.  In general, the mixed volume is an upper bound on the number of complex-number steady states, but we show that these numbers are equal for ERK networks (Section~\ref{sec:max-mumber}). 

The outline of our work is as follows.
Section~\ref{sec:bkrd} contains background on chemical reaction systems, 
steady-state parametrizations, and Hopf bifurcations.
We present steady-state parametrizations for the ERK network and the reduced ERK network in Section~\ref{sec:parametrizations}.
Section~\ref{sec:results} contains our main results on oscillations and bistability.
Section~\ref{sec:max-mumber} investigates the number of steady states and the relationship to mixed volumes. 
We end with a Discussion in Section~\ref{sec:discussion}. 

\section{Background} \label{sec:bkrd}
Here we introduce chemical reaction systems (Section~\ref{sec:CRS}),
their steady-state parametrizations (Section~\ref{sec:param}), and 
Hopf bifurcations (Section~\ref{sec:hopf}).

\subsection{Chemical reaction systems} \label{sec:CRS}
As in~\cite{DPST}, our notation closely matches that of Conradi, Feliu, Mincheva, and Wiuf~\cite{CFMW}. 
A \defword{reaction network} $G$  (or {\em network} for short) comprises a set of $s$ species $\{X_1, X_2, \ldots, X_s\}$ and a set of $m$ reactions:
\[
\alpha_{1j}X_1 + 
\alpha_{2j}X_2 +  \dots +
\alpha_{sj}X_s
~ \to ~
\beta_{1j}X_1 + 
\beta_{2j}X_2 +  \dots +
\beta_{sj}X_s~,
 \quad \quad
    {\rm for}~
	j=1,2, \ldots, m~,
\]
where each $\alpha_{ij}$ and $\beta_{ij}$ is a non-negative integer. The \defword{stoichiometric matrix} of 
$G$, 
denoted by $N$, is the $s\times m$ matrix with
$(i, j)$-entry equal to $\beta_{ij}-\alpha_{ij}$.
Let $d=s-{\rm rank}(N)$. 
{\color{black}
The \defword{stoichiometric subspace}, denoted by $S$,
is the image of $N$.}
A \defword{conservation-law matrix} of $G$, denoted by $W$, is a row-reduced $d\times s$-matrix whose rows form a basis of 
{\color{black} the orthogonal complement of $S$.} 
If there exists a choice of $W$ for which every entry is nonnegative and each column 
contains at least one nonzero entry (equivalently, each species occurs in at least one nonnegative conservation law), then $G$ is \defword{conservative}.

We denote the concentrations of the species $X_1,X_2, \ldots, X_s$ by $x_1, x_2, \ldots, x_s$, respectively. 
These concentrations, under the assumption of {\em mass-action kinetics}, evolve in time according to the following system of ODEs:
\begin{equation}\label{sys}
\dot{x}~=~f(x)~:=~N\cdot \begin{pmatrix}
\kappa_1 \, x_1^{\alpha_{11}} 
		x_2^{\alpha_{21}} 
		\cdots x_s^{\alpha_{s1}} \\
\kappa_2 \, x_1^{\alpha_{12}} 
		x_2^{\alpha_{22}} 
		\cdots x_s^{\alpha_{s2}} \\
		\vdots \\
\kappa_m \, x_1^{\alpha_{1m}} 
		x_2^{\alpha_{2m}} 
		\cdots x_s^{\alpha_{sm}} \\
\end{pmatrix}~,
\end{equation}
where $x=(x_1, x_2, \ldots, x_s)$, 
and each $\kappa_j \in \mathbb R_{>0}$ is called a \defword{reaction rate constant}. 
 By considering the rate constants as a vector of parameters $\kappa=(\kappa_1, \kappa_2, \dots, \kappa_m)$, we have polynomials $f_{\kappa,i} \in \mathbb Q[\kappa,x]$, for $i=1,2, \dots, s$.
For ease of notation, we often write $f_i$ rather than  $f_{\kappa,i}$.

A trajectory $x(t)$ beginning at a nonnegative vector $x(0)=x^0 \in
\mathbb{R}^s_{\geq 0}$ remains, for all positive time,
 in the following \defword{stoichiometric compatibility class} with respect to the \defword{total-constant vector} $c\coloneqq W x^0 \in {\mathbb R}^d$: 
\begin{align} \label{eqn:invtPoly}
\scc_c~\coloneqq~ \{x\in {\mathbb R}_{\geq 0}^s \mid Wx=c\}~.
\end{align}
A \defword{steady state} of~\eqref{sys} is a nonnegative concentration vector 
$x^* \in \mathbb{R}_{\geq 0}^s$ at which the 
right-hand sides of the 
ODEs~\eqref{sys}  vanish: $f(x^*) =0$.  
We distinguish between \defword{positive steady states} $x ^* \in \mathbb{R}^s_{> 0}$ and \defword{boundary steady states} 
$x^*\in {\mathbb R}_{\geq 0}^s\backslash {\mathbb R}_{>0}^s$. 
Also, a steady state $x^*$ is \defword{nondegenerate} if 
${\rm Im}\left( {\rm Jac}(f) (x^*)|_{S} \right)$ 
is the stoichiometric subspace $\St$.
(Here, ${\rm Jac}(f)(x^*)$ is the Jacobian matrix of $f$, with respect to $x$, at $x^*$.)  
A nondegenerate steady state is 
\defword{exponentially stable} if each of the $\sigma\coloneqq \dim(\St)$ nonzero eigenvalues of ${\rm Jac}(f)(x^*)$ has negative real part.

A network $G$ is \defword{multistationary} 
(respectively, \defword{bistable})
if, for some choice of positive rate-constant vector $\kappa \in \mathbb{R}^m_{>0}$, there exists a stoichiometric compatibility class~\eqref{eqn:invtPoly} 
with two or more positive steady states (respectively, exponentially stable positive steady states) of~\eqref{sys}. 
A network 
is \defword{monostationary}\footnote{Some authors define {\em monostationary} to be non-multistationary; 
the two definitions are equivalent for the ERK networks in this work.}
if, for every choice of positive rate constants, there is exactly one positive steady state in every stoichiometric compatibility class.

To analyze steady states within a stoichiometric compatibility class, we will 
use conservation laws 
in place of linearly dependent steady-state equations, as follows.
Let $I = \{i_1 < i_2< \dots < i_d\}$ denote the indices of the first nonzero coordinate of the rows of conservation-law matrix $W$.
Consider the function $\F: {\mathbb R}_{\geq 0}^s\rightarrow {\mathbb R}^s$ defined by 
\begin{equation}\label{consys}
f_{c,\kappa,i} =\F(x)_i :=
\begin{cases}
f_{i}(x)&~\text{if}~i\not\in I,\\
(Wx-c)_k &~\text{if}~i~=~i_k\in I .
\end{cases}
\end{equation}
We call system~\eqref{consys}, the system \defword{augmented by conservation laws}. By construction, positive roots of the system of polynomial equations $\F=0$ are precisely the positive steady states of~\eqref{sys} in the stoichiometric compatibility class~\eqref{eqn:invtPoly} defined by the total-constant vector $c$.

\subsection{Steady-state parametrizations} \label{sec:param}
Here we introduce steady-state parametrizations (Definition~\ref{def:parametrization-for-h}) 
and recall from~\cite{DPST} how to use them to determine whether a network is multistationary (Proposition~\ref{prop:c-general}).
Later we will see how to use parametrizations to detect Hopf bifurcations (Proposition~\ref{prop:hopfcriterion}).        

\begin{defn} \label{def:effective}
Let $G$ be a network with $m$ reactions and $s$ species, and let
$\dot x = f (x)$ denote the resulting mass-action system.  Denote by $W$ a
$d \times s$
row-reduced conservation-law matrix and by $I$ the set of indices of the first nonzero
coordinates of its rows. 
Enumerate the complement of $I$ as follows:
$[s] \setminus I = \{j_1 < j_2< \dots< j_{s-d}\}$.
A set of \defword{effective parameters}  
 for $G$ is formed by polynomials $\bar a_1(\kappa), \bar a_2(\kappa), \dots, \bar a_{\bar m}(\kappa) \in \mathbb{Q}(\kappa)$ for which the following hold: 
\begin{enumerate}[(i)]
	\item $\bar a_i(\kappa^*)$ is defined and, moreover,  
	$\bar a_i(\kappa^*)>0$
	for every $i=1,2,\dots,\bar m$ and for all $\kappa^* \in \mathbb{R}^m_{>0}$,
	\item the \defword{reparametrization map} below	is surjective:
\begin{align}\label{eq:ep}
	\bar a ~:~\mathbb{R}^m_{>0} & ~\to~ \mathbb{R}^{\bar m}_{>0} \\ \notag
				\kappa & ~\mapsto~ (\bar a_1(\kappa), \bar a_2(\kappa), \dots, \bar a_{\bar m} (\kappa))~,
\end{align} 
\item there exists an $(s-d)\times (s-d)$ matrix
	$M(\kappa)$ with entries in 
	$\mathbb{Q}(\kappa):=\mathbb{Q}(\kappa_1,\kappa_2,\dots, \kappa_m)$ such that:
	\begin{enumerate}[(a)]
	\item for all $\kappa^* \in \mathbb{R}^m_{>0}$, the matrix 
$M(\kappa^*)$ is defined 	and, moreover, 
	$\det M(\kappa^*)>0$, and 
	\item letting $(\bar h_{j_{\ell} })$ denote the functions  
	obtained from $(f_{j_\ell})$  
	as follows:
\begin{equation}\label{eq:linearchange}
(\bar h_{j_1}, \bar h_{j_2}, \dots, \bar h_{j_{s-d}})^{\top}
\quad := \quad M(\kappa) ~  (f_{j_1}, f_{j_2}, \dots, f_{j_{s-d}})^{\top}~,
\end{equation}
 every nonconstant coefficient in every  $\bar h_{j_{\ell}}$ is equal to   
  a rational-number multiple of some
 $\bar a_i(\kappa)$. 
	  \end{enumerate}
\end{enumerate}

Given such a set of effective parameters, we consider
for $\ell=1,2,\dots, s-d,$  polynomials
$h_{j_{\ell}}=h_{j_{\ell}} (a;x) \in \mathbb{Q}[a_1,a_2,\dots, a_{\bar m}][x]$ (here, the $a_i$'s are indeterminates)  such that:
\begin{align} \label{eq:effective-reln}
	\bar h_{j_{\ell}} ~=~ h_{j_{\ell}}|_{a_1 = \bar a_1(\kappa), ~\dots~,~ a_{\bar m} = \bar a_{\bar m}(\kappa)}~.
\end{align}
 For $i=1,2, \dots, s$ and any choice of $c \in \mathbb{R}^{d}_{>0}$
and $a \in \mathbb{R}^{\bar m}_{>0}$,
 set
\begin{equation}\label{consys-h}
\augmentH(x)_i~:=~
\begin{cases}
 h_i(a;x)
&~\text{if}~i  \notin I \\
(Wx-c)_k &~\text{if}~i=i_k\in  I.~
\end{cases}
\end{equation}
We call the function  $\augmentH: \mathbb{R}^s_{>0} \to \mathbb{R}^s $ an \defword{effective steady-state function} of $G$.
\end{defn}

The ``steady-state parametrizations'' that we will use in this work belong to a subclass of the ones introduced in~\cite{DPST}.  Thus, for simplicity, Definition~\ref{def:parametrization-for-h} 
below is more restrictive than~\cite[Definition~3.6]{DPST}. 
Specifically, our parametrizations have the form $\phi(\hat a; x)$, 
while those in~\cite{DPST} are of the form $\phi(\hat a; \hat x)$.


\begin{defn} \label{def:parametrization-for-h}
Let $G$ be a network with $m$ reactions, $s$ species, and
conservation-law matrix $W$.  Let $\F$ arise from $G$ and $W$ as in~\eqref{consys}.
Suppose that $\augmentH$ is an effective steady-state function of $G$, as in~\eqref{consys-h},
arising from a matrix $M(\kappa)$, as in~\eqref{eq:linearchange}, 
a reparametrization map $\bar a$, as in~\eqref{eq:ep}, 
and polynomials $h_{j_{\ell}}$'s as in~\eqref{eq:effective-reln}. 
The positive steady states of $G$ \defword{admit a positive parametrization with respect to $\augmentH$} 
 if there exists a function
 $	\phi : \mathbb{R}^{\hat m}_{>0} \times  \mathbb{R}^{s}_{>0} 
		\rightarrow
		 \mathbb{R}_{>0}^{\bar m}\times  \mathbb{R}_{>0}^{s}$, 
for some $\hat m \leq \bar m$, which we denote by 
		$(\hat a; x) \mapsto \phi(\hat a; x)$,
such that: 
	\begin{enumerate}[(i)]
	\item $\phi(\hat a;  x)$ extends the vector $(\hat a;  x)$. More precisely, 
	there exists a natural projection $\pi: \mathbb{R}^{\bar m}_{>0}\times
	  \mathbb{R}_{>0}^{s}\to \mathbb{R}^{\hat m}_{>0} \times \mathbb{R}^{ s}_{>0} $ 
	  such that $\pi \circ \phi$ is equal to the identity map.
	\item Consider any $(a;x) \in \mathbb R^{\bar m}_{>0} \times \mathbb  R_{>0}^s$.  Then, the equality 
		$h_i (a;x) = 0$ holds for every  $i \notin I$ 
		if and only if 
		 there exists
	 $\hat a^* \in  \mathbb{R}^{\hat m}_{>0} $ 
 such that $(a; x)=\phi(\hat a^*;x)$.
 \end{enumerate}
We call $\phi$ a \defword{positive parametrization} or a \defword{steady-state parametrization}.
 \end{defn}
 
\begin{defn}\label{def:Critical}
Under the notation and hypotheses of Definition~\ref{def:parametrization-for-h}, assume  
that the steady states of $G$ admit a positive parametrization with respect to $\augmentH$.
For such a positive parametrization $\phi$, the \defword{critical function} 
$C: \mathbb{R}^{\hat m}_{>0} \times  \mathbb{R}^{ s}_{>0}  \rightarrow   \mathbb{R}$ is given by:
\begin{equation*}
 C(\hat a;  x) \quad  = \quad \left(\det {\rm Jac}~\augmentH\right)|_{(a; x)=\phi(\hat a;  x)}~,
\end{equation*}
where ${\rm Jac}(\augmentH)$ denotes the Jacobian matrix of $\augmentH$ with respect to $x$. 
\end{defn}

The following result is a specialization\footnote{As noted earlier, here we consider parametrizations of the form $\phi(\hat a; x)$, while~\cite{DPST} allowed those of the form $\phi(\hat a; \hat x)$.  Also, ``conservative'' in Proposition~\ref{prop:c-general} can be generalized to ``dissipative''~\cite{DPST}.} of~\cite[Theorem 3.12]{DPST}:
\begin{proposition} \label{prop:c-general}
Under the notation and hypotheses of 
Definitions~\ref{def:effective}--\ref{def:Critical}, 
 assume also that $G$ is a conservative network without boundary steady states in any compatibility class.  Let $N$ denote the stoichiometric matrix of $G$.
\begin{enumerate}[(A)]
\item  {\bf Multistationarity.}
 $G$ is multistationary 
if there exists 
$(\hat a^*;  x^*) \in 
\mathbb{R}^{\hat m}_{>0} \times  \mathbb{R}^{ s}_{>0} $ 
such that 
\[
{\rm sign}( C(\hat a^*;  x^*) ) ~=~ (-1)^{\mathrm{rank}(N)+1}~.
\]
\item {\bf Monostationarity.}
$G$ is monostationary 
if for all 
$(\hat a;  x) \in \mathbb{R}^{\hat m}_{>0} \times  \mathbb{R}^{ s}_{>0} $, 
\[
{\rm sign}(C(\hat a;  x)) ~=~ (-1)^{\mathrm{rank}(N)}~.
\]
\end{enumerate} 
\end{proposition}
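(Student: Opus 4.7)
The plan is to deduce this proposition from \cite[Theorem 3.12]{DPST} by recognizing our setup as a strict specialization. In \cite{DPST}, positive parametrizations take the more general form $\phi(\hat a; \hat x)$, where $\hat x$ is a subvector of the species concentrations; our Definition~\ref{def:parametrization-for-h} is the special case $\hat x = x$, in which no species concentration is eliminated. First I would verify that under this specialization the critical function $C(\hat a; x) = (\det\mathrm{Jac}\,\augmentH)|_{(a;x)=\phi(\hat a;x)}$ of Definition~\ref{def:Critical} coincides with the critical function used in \cite{DPST}; this amounts to a direct substitution, since their Jacobian agrees with $\mathrm{Jac}(\augmentH)$ when $\hat x = x$. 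Next I would check that the remaining hypotheses match: the network is conservative, boundary steady states are absent in every compatibility class, and the effective-parameter map $\bar a$ together with the positive parametrization $\phi$ (with associated matrix $M(\kappa)$ and polynomials $h_{j_\ell}$) fulfill all items required by \cite{DPST}. Once this correspondence is spelled out, parts~(A) and~(B) follow immediately from \cite[Theorem 3.12]{DPST}.

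For completeness I would also sketch the degree-theoretic mechanism behind that theorem. Fix any positive rate-constant vector $\kappa^*$ and any $c\in\mathbb{R}^d_{>0}$. Because $G$ is conservative, the compatibility class $\scc_c$ is compact; because there are no boundary steady states, all zeros of $\F$ in $\scc_c$ lie in the open positive orthant, equivalently the zeros of $\augmentH$ inside a sufficiently large positive box are isolated and interior. The Brouwer degree of $\augmentH$ on such a box is therefore well defined, and a homotopy in $\kappa$ (passing through a reference monostationary rate vector where the count is verified by inspection) shows that this degree equals the universal value $(-1)^{\mathrm{rank}(N)}$, independent of $\kappa^*$ and $c$. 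Each nondegenerate positive steady state $x^*$ contributes $\mathrm{sign}\det\mathrm{Jac}(\augmentH)(x^*) = \mathrm{sign}\,C(\hat a^*;x^*)$ to this degree, where $(\hat a^*;x^*)$ is any preimage under the parametrization (such preimages exist by surjectivity of $\bar a$ together with property~(ii) of Definition~\ref{def:parametrization-for-h}). A signed count then yields the two conclusions: if $\mathrm{sign}\,C$ attains the value $(-1)^{\mathrm{rank}(N)+1}$ at some point, then two steady states of opposite sign must coexist in a single compatibility class, giving multistationarity; if $\mathrm{sign}\,C$ equals $(-1)^{\mathrm{rank}(N)}$ everywhere, the signed count forces exactly one steady state per compatibility class, giving monostationarity.

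The main obstacle I anticipate is the bookkeeping that links a chosen $(\hat a^*;x^*)$ to an actual $(\kappa^*, c^*)$ with $x^* \in \scc_{c^*}$: one needs that the reparametrization $\bar a$ admits a positive preimage of the $a$-part of $\phi(\hat a^*;x^*)$, and that every positive steady state of the original mass-action system is realized by $\phi$ so that the signed count over the parametrization matches the signed count over true steady states. Both are guaranteed by the axioms in Definitions~\ref{def:effective} and~\ref{def:parametrization-for-h}, but weaving them through the degree computation is the delicate step. Since this is exactly the content of the proof in \cite{DPST} under the more general hypothesis $\phi(\hat a; \hat x)$, my proof would ultimately amount to a short reduction: record that $\hat x = x$ trivially satisfies their framework, then apply their theorem.
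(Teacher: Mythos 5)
Your proposal matches the paper's treatment exactly: the paper gives no independent proof, but simply observes (in the surrounding text and footnote) that the statement is a specialization of \cite[Theorem 3.12]{DPST} to parametrizations of the form $\phi(\hat a; x)$ rather than $\phi(\hat a; \hat x)$, which is precisely your reduction. Your supplementary sketch of the Brouwer-degree argument underlying that theorem is accurate and consistent with how \cite{DPST} proves it, but it is not needed for (and goes beyond) what the paper itself records.
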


\subsection{Hopf bifurcations} \label{sec:hopf}
A \defword{simple Hopf bifurcation} is a bifurcation in which a single complex-conjugate pair of eigenvalues of the Jacobian matrix crosses the imaginary axis, while all other eigenvalues remain with negative real parts.  Such a bifurcation, if it is supercritical, generates nearby \defword{oscillations} or periodic orbits~\cite{liu}.

To detect simple Hopf bifurcations, we will use a criterion of Yang that characterizes Hopf bifurcations in terms of Hurwitz-matrix determinants (Proposition~\ref{prop:yang}). 
\begin{defn} \label{def:hurwitz}
The $i$-th \defword{Hurwitz matrix} of a univariate polynomial 
$p(\lambda)= b_0 \lambda^n + b_{1} \lambda^{n-1} + \cdots + b_n$  
is the following $i \times i$ matrix:
\[
H_i ~=~ 
\begin{pmatrix}
b_1 & b_0 & 0 & 0 & 0 & \cdots & 0 \\
b_3 & b_2 & b_1 & b_0 & 0 & \cdots & 0 \\
\vdots & \vdots & \vdots &\vdots & \vdots &  & \vdots \\
b_{2i-1} & b_{2i-2} & b_{2i-3} & b_{2i-4} &b_{2i-5} &\cdots & b_i
\end{pmatrix}~,
\]
in which the $(k,l)$-th entry is $b_{2k-l}$ 
  as long as $n \geq 2 k - l \geq 0$, and
  $0$ otherwise.
\end{defn}

Consider an ODE system parametrized by $\mu \in \mathbb{R}$:
    \begin{align*}
        \dot x ~=~ g_{\mu}(x)~,
    \end{align*}
where $x \in \mathbb{R}^n$, and $g_{\mu}(x)$ varies smoothly in $\mu$ and $x$.  Assume that $x^0 \in \mathbb{R}^n$ is a steady state of the system defined by $\mu_0$, that is, $g_{\mu_0}(x^0)=0$.  Assume, furthermore, that we have a smooth curve of steady states:
    \begin{align}  \label{eq:curve}
    \mu ~\mapsto ~ x(\mu)~
    \end{align}
(that is, $g_{\mu}\left( x(\mu) \right)= 0$ for all $\mu$) and that $x(\mu_0)=x^0$.  
Denote the characteristic polynomial of the Jacobian matrix of $g_{\mu}$, evaluated at $x(\mu)$, as follows:
    \begin{align*}
         p_{\mu}(\lambda) 
         ~:=~ \det \left( \lambda I - {\rm Jac}~ g_{\mu} \right)|_{x = x(\mu)}
         ~=~ \lambda^n + b_{1}(\mu) \lambda^{n-1} + \cdots + b_n(\mu)~, 
    \end{align*}
and, for $i=1,\dots, n$, define $H_i(\mu)$ to be the $i$-th Hurwitz matrix of $p_{\mu}(\lambda)$.


\begin{proposition} [Yang's 
criterion~\cite{yang-hopf}] \label{prop:yang}
Assume the above setup.  Then, there is a simple Hopf bifurcation at $x_0$ with respect to $\mu$
if and only if the following hold:
    \begin{enumerate}[(i)]
        \item $b_n(\mu_0)>0$,
        \item $\det H_1(\mu_0)>0$, $\det H_2(\mu_0)>0$, \dots, $\det H_{n-2}(\mu_0)>0$, and
        \item $\det H_{n-1}(\mu_0)= 0$ and $\frac{d( \det H_{n-1}(\mu) )}{d \mu}|_{\mu = \mu_0} \neq 0$. 
    \end{enumerate}

\end{proposition}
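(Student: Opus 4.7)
The plan is to combine the Routh--Hurwitz stability theory with Orlando's factorization formula and a transversality argument tracking the eigenvalues of $\mathrm{Jac}~g_\mu$ along the curve $x(\mu)$. Let $\lambda_1(\mu), \dots, \lambda_n(\mu)$ denote these eigenvalues, which vary continuously (and, at a simple eigenvalue, smoothly) in $\mu$ near $\mu_0$. The central ingredient is Orlando's formula
\[
\det H_{n-1}(\mu) \;=\; (-1)^{n(n-1)/2} \prod_{1 \leq i < j \leq n} \bigl(\lambda_i(\mu) + \lambda_j(\mu)\bigr),
\]
which shows that $\det H_{n-1}(\mu_0) = 0$ precisely when some pair of eigenvalues sums to zero.

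For the ``only if'' direction, I would start from the definition of a simple Hopf bifurcation: exactly one conjugate pair, say $\lambda_{n-1}(\mu_0) = i\omega$ and $\lambda_n(\mu_0) = -i\omega$ with $\omega > 0$, lies on the imaginary axis, the remaining eigenvalues $\lambda_1(\mu_0), \dots, \lambda_{n-2}(\mu_0)$ have strictly negative real parts, and the crossing is transversal, i.e.\ $\frac{d}{d\mu}\,\mathrm{Re}(\lambda_n(\mu))|_{\mu_0} \neq 0$. Factor $p_{\mu_0}(\lambda) = (\lambda^2 + \omega^2)\,q(\lambda)$ where $q$ is Hurwitz stable of degree $n-2$. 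Then $b_n(\mu_0) = \omega^2\,q(0) > 0$, giving (i). For (ii), I would invoke the Li\'enard--Chipart refinement of Routh--Hurwitz: since $p_{\mu_0}$ differs from a Hurwitz polynomial only by the pair $\pm i\omega$, the leading principal Hurwitz minors up to order $n-2$ remain strictly positive (this is where I expect the bookkeeping to be most delicate, as the minors of $p_{\mu_0}$ are not literally those of $q$). For (iii), Orlando's formula gives $\det H_{n-1}(\mu_0) = 0$, and differentiating the vanishing factor yields
\[
\frac{d}{d\mu}\det H_{n-1}(\mu_0) \;=\; (\text{nonzero constant}) \cdot 2\,\frac{d}{d\mu}\mathrm{Re}(\lambda_n(\mu))\bigr|_{\mu_0} \;\neq\; 0.
\]

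For the ``if'' direction, I would argue by elimination. Conditions (i) and (ii) together with $\det H_{n-1}(\mu_0) = 0$ place $p_{\mu_0}$ on the boundary of the Hurwitz stability region. Orlando's formula then forces some pair $\lambda_i + \lambda_j = 0$. Condition (i), $b_n(\mu_0) > 0$, excludes $\lambda = 0$ as a root, ruling out the degenerate possibility $\lambda_i = \lambda_j = 0$. To rule out a real pair $\pm r$ with $r > 0$, I would use a small perturbation argument: for $\mu$ slightly less than $\mu_0$ (in the appropriate direction given by (iii)), continuity of (ii) together with $\det H_{n-1}(\mu) > 0$ recovers full Routh--Hurwitz stability, so all eigenvalues have negative real parts; by continuity of eigenvalues in $\mu$, the only way to reach $\mu_0$ while keeping $b_n > 0$ and $\det H_i > 0$ for $i \leq n-2$ is for a conjugate pair to reach the imaginary axis, not for a real eigenvalue to cross zero. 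Finally, the nonvanishing derivative in (iii) ensures the crossing is transversal, completing the simple Hopf conclusion.

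The main obstacle I anticipate is the ``if'' direction, specifically ruling out alternative root configurations that make $\det H_{n-1}$ vanish while keeping the lower Hurwitz minors positive---for instance, two independent conjugate pairs simultaneously on the imaginary axis. Handling this cleanly requires a continuity/counting argument for eigenvalues in each half-plane, combined with a careful use of the Li\'enard--Chipart criterion to translate sign data on Hurwitz minors into precise eigenvalue location. This is the technical heart of Yang's original argument, and I would follow that structure rather than attempt an independent combinatorial proof.
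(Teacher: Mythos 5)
The paper does not prove this proposition at all: it is imported verbatim from Yang~\cite{yang-hopf} (building on Liu~\cite{liu}), so there is no in-paper argument to compare against. Judged on its own terms, your outline follows the standard route --- and, as you note, essentially the route of Yang's original paper: Orlando's formula $\det H_{n-1} = (-1)^{n(n-1)/2}\prod_{i<j}(\lambda_i+\lambda_j)$ to detect a pair of roots summing to zero, Routh--Hurwitz/Li\'enard--Chipart to control the remaining roots, and a one-sided perturbation in $\mu$ plus continuity of eigenvalues to pin down the configuration at $\mu_0$. The pieces that you work out are correct: $b_n(\mu_0)=\omega^2 q(0)>0$; the derivative identity $\frac{d}{d\mu}\det H_{n-1}(\mu_0) = (\text{nonzero real constant})\cdot 2\,\frac{d}{d\mu}\mathrm{Re}\,\lambda_n(\mu_0)$ (valid because exactly one Orlando factor vanishes and the crossing pair is simple, hence smooth in $\mu$); and, in the converse, the exclusion of a zero root by (i), of a root in the open right half-plane by the one-sided stability argument, and of two or more imaginary pairs by the observation that two vanishing Orlando factors would force $\frac{d}{d\mu}\det H_{n-1}(\mu_0)=0$, contradicting (iii).

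The one genuine gap is the step you yourself flag: in the ``only if'' direction you assert, but do not prove, that if $p_{\mu_0}(\lambda)=(\lambda^2+\omega^2)q(\lambda)$ with $q$ Hurwitz stable, then $\det H_1(\mu_0),\dots,\det H_{n-2}(\mu_0)>0$. This is not a routine consequence of Li\'enard--Chipart, since that criterion characterizes strict stability and $p_{\mu_0}$ is only marginally stable; the Hurwitz minors of $p_{\mu_0}$ are not simply those of $q$, and one needs a genuine lemma relating them (e.g.\ via the multiplicativity of infinite Hurwitz matrices and Cauchy--Binet, or Gantmacher's treatment of the singular case). Deferring both this lemma and the ``technical heart'' of the converse to Yang's paper means your proposal is an accurate roadmap of the known proof rather than a self-contained one --- which is a defensible position here, given that the paper itself only cites the result, but you should be explicit that the minor-positivity lemma is the load-bearing unproved step.
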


\subsection{Using parametrizations to detect Hopf bifurcations} \label{sec:param-Hopf}

Here we prove 
a new result on how to use steady-state parametrizations to detect Hopf bifurcations (Theorem~\ref{thm:hopf-criterion}).  The result, which uses Yang's criterion, 
is a straightforward generalization of the approach used in~\cite{mixed}.
We include it here to use later in Section~\ref{sec:results}, and we also expect it to be useful in future work.

\begin{lemma} \label{lem:reduced-char-poly-general}
Let $G$ be a network with $s$ species, $m$ reactions, and $d$ conservation laws.  
Denote the ODEs by $\dot x = f(x)$, as in~\eqref{sys}.
Assume that the positive steady states of $G$ admit a positive parametrization $\phi$ with respect to an effective steady-state function for which the reparametrization map~\eqref{eq:ep} is just the identity map.  In other words, the effective parameters $\bar{a}_i$ are the original rate constants $\kappa_i$, and so we write $\phi: \mathbb{R}^{\hat m}_{>0} \times  \mathbb{R}^{s}_{>0} 
		\rightarrow
		 \mathbb{R}_{>0}^{ m}\times  \mathbb{R}_{>0}^{s}$ as 
		 $(\hat{\kappa}; x) \mapsto \phi(\hat{\kappa}; x)$.
Assume moreover that each coordinate of $\phi_i$ is a rational function: $\phi_i(\hat{\kappa}; x) \in \mathbb{Q}(\hat{\kappa}; x)$ for $i=1,2,\dots, \hat{m}+s$.
Then the following is a univariate, degree-$(s-d)$ polynomial in $\lambda$, with coefficients in $\mathbb{Q}(\hat{\kappa}; x)$:
\begin{align} \label{eq:reduced-char-poly-general}
    q(\lambda)~:=~  \frac{1}{\lambda^{d}} ~ \det \left( \lambda I - {\rm Jac}~f \right) |_{(\kappa; x)~=~ \phi(\hat{\kappa}; x)} ~.
\end{align}
\end{lemma}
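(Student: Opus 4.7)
The core observation is that the $d$ conservation laws force $\lambda = 0$ to appear with multiplicity at least $d$ in the characteristic polynomial of $\mathrm{Jac}~f$, so the division by $\lambda^d$ in~\eqref{eq:reduced-char-poly-general} produces a genuine polynomial rather than a rational function in $\lambda$. Given that, the fact that the coefficients land in $\mathbb{Q}(\hat\kappa;x)$ after the substitution $(\kappa;x) = \phi(\hat\kappa;x)$ is essentially a bookkeeping step, since by hypothesis each $\phi_i$ is a rational function of $(\hat\kappa;x)$.

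The first step is to exploit the conservation-law matrix $W$. By definition, the rows of $W$ form a basis of the orthogonal complement of the stoichiometric subspace $S = \mathrm{Im}(N)$, so $W N = 0$ as a matrix identity. Since $f(x)$ lies in $\mathrm{Im}(N)$ for every $x$ (it is $N$ times the vector of monomials), we have $W f(x) = 0$ as an identity of polynomials in $\mathbb{Q}[\kappa,x]$. Differentiating with respect to $x$, I obtain
\begin{equation*}
W \cdot \mathrm{Jac}~f(x) ~=~ 0
\end{equation*}
as a matrix identity over $\mathbb{Q}[\kappa,x]$. Because $W$ has rank $d$ (it is row-reduced with $d$ rows), this means $\mathrm{Jac}~f$ has rank at most $s - d$ when viewed as a matrix over the field $\mathbb{Q}(\kappa,x)$.

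Next, I use this rank bound to analyze the characteristic polynomial $P(\lambda) := \det(\lambda I - \mathrm{Jac}~f)$, regarded as a degree-$s$ monic polynomial in $\lambda$ with coefficients in $\mathbb{Q}[\kappa,x]$. Writing $P(\lambda) = \lambda^s + b_1 \lambda^{s-1} + \cdots + b_s$, the coefficient $b_k$ is (up to sign) the sum of all $k \times k$ principal minors of $\mathrm{Jac}~f$. Since every $k \times k$ minor of a rank-$(s-d)$ matrix vanishes for $k > s-d$, I conclude that $b_{s-d+1} = \cdots = b_s = 0$ in $\mathbb{Q}(\kappa,x)$, hence in $\mathbb{Q}[\kappa,x]$. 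Therefore $P(\lambda) = \lambda^d \cdot g(\lambda)$, where $g(\lambda)$ is a monic polynomial in $\lambda$ of degree exactly $s - d$ with coefficients in $\mathbb{Q}[\kappa,x]$.

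Finally, I substitute $(\kappa;x) = \phi(\hat\kappa;x)$. By Definition~\ref{def:parametrization-for-h}(i), $\phi$ extends $(\hat\kappa;x)$, so its last $s$ coordinates return $x$ unchanged, and by hypothesis each of the first $m$ coordinates is a rational function in $\mathbb{Q}(\hat\kappa;x)$. Substituting rational functions for the $\kappa_i$ into $g(\lambda)$ preserves its degree in $\lambda$ (the leading coefficient is $1$) and yields a polynomial in $\lambda$ whose coefficients lie in $\mathbb{Q}(\hat\kappa;x)$. This polynomial is precisely $q(\lambda) = \lambda^{-d} P(\lambda)|_{(\kappa;x)=\phi(\hat\kappa;x)}$, which gives the claim. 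The only potentially delicate step is the rank argument in the second paragraph, but it follows cleanly from $WN = 0$ combined with $\mathrm{rank}(W) = d$; no additional hypotheses on nondegeneracy or on $\phi$ beyond those stated are needed.
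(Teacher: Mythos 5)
Your proposal is correct and follows the same route as the paper, whose proof is a one-line appeal to the fact that the characteristic polynomial of $\mathrm{Jac}(f)$ has degree $s$ and vanishes to order $d$ at $\lambda=0$ because of the $d$ conservation laws. You have simply filled in the details of that claim (via $W\cdot \mathrm{Jac}\,f=0$, the resulting rank bound, and the vanishing of the high-order principal minors), and the argument is sound.
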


\begin{proof}
This result is straightforward from the fact that the characteristic polynomial of ${\rm Jac}(f)$ is a polynomial of degree $s$ and has zero as a root with multiplicity $d$ (because of the $d$ conservation laws). \end{proof}

\begin{theorem}[Hopf-bifurcation criterion] \label{thm:hopf-criterion}
Assume the hypotheses of Lemma~\ref{lem:reduced-char-poly-general}.
Let $\mathfrak{h}_i$ (for $i=1,2,\dots, s-d$)
be the determinant of the $i$-th Hurwitz matrix of 
$q(\lambda)$ in~\eqref{eq:reduced-char-poly-general}.  Let $\kappa_j$ be one of the rate constants in the vector $\hat{\kappa}$.  Then the following are equivalent:
\begin{enumerate}[(1)]
    \item there exists a rate-constant vector $\kappa^* \in \mathbb{R}^{m}_{>0}$ such that the resulting system~\eqref{sys} exhibits a simple Hopf bifurcation with respect to $\kappa_j$ at some 
    $x^* \in \mathbb{R}^{s}_{>0}$, and
    \item there exist 
        $\hat{\kappa}^*\in \mathbb{R}^{\hat m}_{>0}$ 
        and 
        $x^* \in \mathbb{R}^{s}_{>0}$ 
        such that 
        \begin{enumerate}[(i)]
        \item the constant term of the polynomial $q(\lambda)$, when evaluated at 
            $(\hat{\kappa}; x)=(\hat{\kappa}^*; x^*)$, is positive, 
        \item $\mathfrak{h}_1 (\hat{\kappa}^*; x^*) >0$, 
            $\mathfrak{h}_2 (\hat{\kappa}^*; x^*) >0$, \dots , 
            $\mathfrak{h}_{s-d-2} (\hat{\kappa}^*; x^*) >0$~, and
        \item $\mathfrak{h}_{s-d-1} (\hat{\kappa}^*; x^*) =0$ and 
        $\frac{ \partial \mathfrak{h}_{s-d-1} }{\partial \kappa_j } \vert_{(\hat{\kappa}; x)=(\hat{\kappa}^*; x^*)} \neq 0$.
        \end{enumerate}
    \end{enumerate}
Moreover, given $\hat{\kappa}^*$ and $x^*$ as in (2), 
a simple Hopf bifurcation with respect to $\kappa_j$ occurs at $x^*$ when the 
vector of rate constants is taken to be $\kappa^*:=\widetilde{\pi} (\phi(\hat{\kappa}^*; x^*))$.  Here, $\widetilde{\pi}:  \mathbb{R}_{>0}^{m}\times  \mathbb{R}_{>0}^{s} \to \mathbb{R}_{>0}^m$ is the natural projection. 
\end{theorem}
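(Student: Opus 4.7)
The plan is to reduce the claim to Yang's criterion (Proposition~\ref{prop:yang}) applied to the dynamics restricted to a stoichiometric compatibility class, with the parametrization $\phi$ serving as the change of variables. First I would note that the $d$ conservation laws force $\mathrm{Jac}(f)$ at any positive steady state to have a $d$-dimensional kernel complementary to the stoichiometric subspace $S$, so that the full characteristic polynomial factors as $\lambda^d q(\lambda)$ and $q(\lambda)$ coincides (up to sign) with the characteristic polynomial of the restricted Jacobian $\mathrm{Jac}(f)|_S$. Hence Yang's criterion applies to $q(\lambda)$ with $n = s-d$, its ``$b_n$'' being the constant term appearing in~(2)(i) and its Hurwitz determinants being exactly the $\mathfrak{h}_i$. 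The steady-state curve to which I would apply Yang's criterion is the one canonically produced by $\phi$: starting at $\phi(\hat\kappa^*; x^*) = (\kappa^*; x^*)$, hold $x$ fixed at $x^*$ and all components of $\hat\kappa$ other than $\kappa_j$ fixed, then vary $\kappa_j$ and let the remaining rate constants adjust through $\phi$. Along this curve $x^*$ remains a steady state by construction, so this is a legitimate smooth curve of steady states for Yang's criterion, parametrized by $\kappa_j$; the crucial feature is that neither $x$ nor the other components of $\hat\kappa$ vary with $\kappa_j$, so the chain rule collapses the Yang transversality derivative $\frac{d}{d\kappa_j}\det H_{s-d-1}$ into the partial derivative $\partial \mathfrak{h}_{s-d-1}/\partial \kappa_j$ appearing in~(2)(iii).

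For the forward direction $(1) \Rightarrow (2)$, given $(\kappa^*, x^*)$ realizing a simple Hopf bifurcation with respect to $\kappa_j$, I would set $\hat\kappa^*$ to be the subset of coordinates of $\kappa^*$ indexed by $\hat\kappa$. Because the reparametrization map is the identity and $\pi \circ \phi = \mathrm{id}$, Definition~\ref{def:parametrization-for-h}(ii), applied to the fact that $x^*$ is a positive steady state at $\kappa^*$, forces $\phi(\hat\kappa^*; x^*) = (\kappa^*; x^*)$. Substituting this into $q(\lambda)$ and into each $\mathfrak{h}_i$ recovers the Hurwitz determinants of the restricted system at $(\kappa^*, x^*)$, so Yang's conditions (i)--(iii) at $(\kappa^*, x^*)$---interpreted along the parametrization-induced curve above---are exactly the conditions (2)(i)--(iii) at $(\hat\kappa^*; x^*)$.

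For the reverse direction $(2) \Rightarrow (1)$, take $(\hat\kappa^*; x^*)$ satisfying (2) and set $\kappa^* := \widetilde\pi(\phi(\hat\kappa^*; x^*))$. Definition~\ref{def:parametrization-for-h}(ii) ensures that $x^*$ is a positive steady state of~\eqref{sys} at rate constants $\kappa^*$. The parametrization-induced curve described above then supplies the smooth curve of steady states demanded by Yang's criterion, and conditions (2)(i)--(iii) are exactly Yang's (i)--(iii) along this curve; Proposition~\ref{prop:yang} then yields the promised simple Hopf bifurcation at $x^*$ with rate-constant vector $\kappa^*$.

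The main obstacle is the transversality reconciliation in Step~1: Yang's criterion asks for a total derivative along a curve of steady states parametrized by the bifurcation parameter, whereas~(2)(iii) is a partial derivative of $\mathfrak{h}_{s-d-1}$ in the independent variables $(\hat\kappa, x)$. These two derivatives agree only when the steady-state curve is chosen as the parametrization-induced one---equivalently, the phrase ``Hopf bifurcation with respect to $\kappa_j$'' must be interpreted in the extended sense where the other rate constants in $\phi(\hat\kappa; x^*)$ are allowed to follow $\hat\kappa$, as in~\cite{mixed}. Once this identification is carefully set up, the rest of the argument is a direct bookkeeping exercise: substituting $(\kappa; x) = \phi(\hat\kappa^*; x^*)$ in Yang's characteristic-polynomial Hurwitz determinants and peeling off the $\lambda^d$ factor to pass from $\det(\lambda I - \mathrm{Jac}\, f)$ to $q(\lambda)$.
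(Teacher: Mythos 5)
Your proof takes the same route as the paper's: apply Yang's criterion to $q(\lambda)=\lambda^{-d}\det(\lambda I - \mathrm{Jac}\,f)$ after substituting the parametrization, and use Definition~\ref{def:parametrization-for-h}(ii) to translate between positive steady states and points in the image of $\phi$. The paper's own proof is only two lines and silently elides exactly the point you flag --- that the transversality condition in (2)(iii) is a partial derivative at fixed $x$ and fixed remaining components of $\hat{\kappa}$, which matches Yang's total derivative only along the parametrization-induced curve in which the non-$\hat{\kappa}$ rate constants follow $\phi$ while $x^*$ stays a steady state --- so your explicit construction of that curve and the resulting collapse of the chain rule is a faithful (and more careful) rendering of the intended argument.
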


\begin{proof}
Due to the $d$ conservation laws, 
we apply Yang's criterion (Proposition~\ref{prop:yang})
to:
\[
    \frac{1}{\lambda^d} \det (\lambda I - {\rm Jac}~f ) \vert_{x=x^*,~\kappa_i=\kappa^*_i {\mathrm ~for~all~} i \neq j}~.
\]
Now our result follows directly from Proposition~\ref{prop:yang} and Definition~\ref{def:parametrization-for-h}.
\end{proof}

\begin{remark} \label{rmk:hopf}
Theorem~\ref{thm:hopf-criterion} easily generalizes 
beyond parametrizations of the form $\phi(\hat{\kappa}; {x})$
to those of the form $\phi(\hat{\kappa}; \hat{x})$ or  $\phi(\kappa; \hat{x})$.  
Indeed, one of the form  $\phi(\kappa; \hat{x})$
was used
 in~\cite{mixed} 
to establish Hopf bifurcations in a mixed-mechanism phosphorylation system.
\end{remark}

\section{ERK networks and steady-state parametrizations} \label{sec:parametrizations}
Here we introduce steady-state parametrizations for the full ERK network and also irreducible and reduced versions of the network (Propositions~\ref{prop:param-irrev} and~\ref{prop:param-reduced}).  

\subsection{The (full) ERK network}

\begin{table}[hbt]
  \centering
    \begin{tabular}[ht]{|cccccccccccc|} \hline
      $x_1$ & $x_2$ & $x_3$ & $x_4$ & $x_5$ & $x_6$ & $x_7$ & $x_8$ & $x_9$ & $x_{10}$ &
      $x_{11}$ & $x_{12}$
      \\ \hline
$S_{00}$ &  $E$ & $F$ & $S_{11} F$ & $S_{10} F$ & $S_{01} F$ & $S_{01} E$ & $S_{10}E$ 
	& $S_{01}$ & $S_{10}$  & $S_{00} E$  & $S_{11}$ \\
\hline
    \end{tabular}
  \caption{
      Assignment of variables to species for the ERK network~in Figure~\ref{fig:erk_sys}.   }
  \label{tab:variables}
\end{table}

For the full ERK network shown earlier in Figure \ref{fig:erk_sys}, 
we let $x_1,x_2,\ldots,x_{12}$ denote the concentrations of the species  
in the order given in Table~\ref{tab:variables}.
The resulting ODE system~\eqref{sys} is as follows: 
\begin{align} \label{eq:ODE-full}
\begin{split}
    \dot{x_1} & = -k_1x_1x_2+k_2 x_{11}+\lcat x_5+n_3 x_6 \\ 
     \dot{x_2} &  = -k_1 x_1 x_2-\kon x_2 x_9-m_2 x_2 x_{10}+k_2 x_{11}+\kcat x_7+\koff x_7+m_1 x_8+m_3 x_8 \\ 
    \dot{x_3} &  = -\ell_1 x_3 x_{12}-\lon x_3 x_{10}-n_1 x_3 x_9+\ell_2 x_4+\lcat x_5+\loff x_5+n_2 x_6+n_3 x_6 \\ 
    \dot{x_4} & = \ell_1 x_3 x_{12} -\ell_2 x_4-\ell_3 x_4 \\ 
    \dot{x_5} & = \lon x_3  x_{10}+\ell_3 x_4-\lcat x_5-\loff x_5 \\ 
    \dot{x_6} & = n_1 x_3 x_9-n_2 x_6-n_3 x_6 \\ 
    \dot{x_7} & = \kon x_2 x_9+k_3 x_{11}-\kcat x_7-\koff x_7 \\ 
    \dot{x_8} & = m_2 x_2 x_{10}-m_1 x_8-m_3 x_8 \\ 
    \dot{x_9} & = -\kon x_2 x_9-n_1 x_3 x_9+\koff x_7+n_2 x_6 \\ 
    \dot{x_{10}} & = -\lon x_3 x_{10}-m_2 x_2 x_{10}+\loff x_5+m_1 x_8 \\ 
    \dot{x_{11}} & = k_1x_1x_2-k_2x_{11}-k_3x_{11}\\ 
    \dot{x_{12}} & = -\ell_1 x_3 x_{12} +\kcat x_7+\ell_2 x_4+m_3 x_8 \\ 
    \end{split}
\end{align}


There are 18 rate constants $k_i,\ell_i,m_i,n_i$. 
The 3 conservation laws correspond to the total amounts of substrate $S$, kinase $E$, and phosphatase $F$, respectively:

\begin{align}\label{eq:conservation}
\begin{split}
x_1 + x_4 + x_5 + x_6 + x_7 + x_8 + x_9 + x_{10} + x_{11} + x_{12} ~&=~ \stot ~=:~ c_1\\
x_2 + x_7 + x_8 + x_{11} ~&=~ \etot ~=:~ c_2\\
x_3 + x_4 + x_5 + x_6  ~&=~ \ftot~=:~ c_3 .
\end{split}
\end{align}

A steady-state parametrization for the full ERK network was given in~\cite[Examples 3.1 and 3.7]{DPST}.  That parametrization, however, can not specialize to accommodate irreversible versions of the network (in the effective parameters given in~\cite{DPST}, two of the denominators are $\kon$ and $\lon$, so we can not set those rate constants to 0).
So, in the next subsection, we give an alternate steady-state parametrization that, although quite similar to the one in~\cite{DPST}, specializes when considering irreversible versions of the network (see Proposition~\ref{prop:param-irrev}).

\subsection{Irreversible versions of the ERK network} \label{sec:irreversible}
Here we consider networks obtained from the full ERK network (Figure~\ref{fig:erk_sys}) by making some  reversible reactions irreversible.
Specifically, we delete one or more of the reactions marked in blue in Figure~\ref{fig:erk_sys}.
Our motivation for removing those specific reactions (the ones with rate constants $k_2, \kon, m_1, \ell_2, \lon, n_2$) rather than any of their opposite reactions 
is to preserve the main reaction pathways (from $S_{00}$ to $S_{11}$, as well as $S_{10}$ to $S_{11}$, 
$S_{11}$ to $S_{00}$, 
and
$S_{01}$ to $S_{00}$).  At the same time, we do not remove the reactions for $\koff$ or $\loff$, so that we can still pursue Question~\ref{q:main} (which involves $\koff$ and $\loff$) in a model with fewer reactions. We instead allow the removal of reactions $\kon$ and $\lon$.


\begin{proposition}[Steady-state parametrization for full and irreversible ERK networks]  \label{prop:param-irrev}
Let $\mathcal{N}$ be the full ERK network or any 
network obtained from the full ERK network 
by deleting one or more the reactions corresponding to rate constants $k_2, \kon, m_1, \ell_2, \lon, n_2$ (marked in blue in Figure~\ref{fig:erk_sys} ).
Let $\mathds{1}_{k_2}$ denote the indicator function that is 1 if the reaction 
labeled by $k_2$ is in $\mathcal N$ and 0 otherwise; analogously, we also define 
$\mathds{1}_{\kon}$, 
$\mathds{1}_{m_1}$, 
$\mathds{1}_{\ell_2}$, 
$\mathds{1}_{\lon}$, and 
$\mathds{1}_{n_2}$.  
Then $\mathcal N$ admits an effective steady-state function $h_{c,a}: \mathbb{R}^{12}_{>0} \to \mathbb{R}^{12}$ given by: 
\begin{align}\label{eq:effectivefunction_irrev}
\notag
h_{c, a,1} ~&=~ x_{1}+x_4+x_5+x_6+x_7+x_8 +x_9+x_{10}+x_{11}+x_{12} -c_1~,  \\ 
\notag
h_{c, a,2} ~&=~ x_2 +x_7+x_8+ x_{11} -c_2~,  \\
\notag
h_{c, a,3} ~&=~x_3+ x_4+x_5+x_6 -c_3~, \\ 
\notag
h_{c, a,4}  ~&=~  a_{12}x_3x_{12}-x_4~,  \\
\notag
h_{c, a,5}  ~&=~  a_3x_4 - x_5 -a_2x_8 ~, \\ 
h_{c, a,6}  ~&=~  a_{13}x_3x_{9}-x_6~,  \\
\notag
h_{c, a,7}  ~&=~  a_5x_{11} -a_4x_6- x_7 ~,\\ 
\notag
h_{c, a,8}  ~&=~  a_{11}x_2x_{10}-x_8~,\\
\notag
h_{c, a,9} ~&=~  a_9x_7-  \mathds{1}_{\kon} a_8x_2x_9 - x_6 ~,\\ 
\notag
h_{c, a,10}  ~&=~  a_7x_5-  \mathds{1}_{\lon} a_6x_3x_{10} - x_8~,\\
\notag
h_{c, a,11}  ~&=~ a_{10}x_1x_{2}-x_{11}~,\\
\notag
h_{c, a,12}  ~&=~ x_7 -a_1x_5~.
\end{align}
Moreover, 
with respect to this effective steady-state function, 
the positive steady states of $\mathcal N$ 
admit the following positive parametrization: 
  \begin{align*} 
    \phi : \mathbb{R}^{2+\mathds{1}_{\kon}+\mathds{1}_{\lon} + 12 }_{>0} ~& \to~ \mathbb{R}^{13+12}_{>0} \\
    (\hat{a};~x_1, x_2,\dots, x_{12} ) ~&\mapsto ~({a}_1,{a}_2,\dots,{a}_{13},~x_1, x_2,\dots, x_{12})~, \notag
  \end{align*}  
  given by
\begin{align} \label{eq:param-irrev-details} \notag
a_1 ~&:=~ \dfrac{x_7}{x_5} & \quad 
a_3 ~&:=~ \dfrac{a_2x_8+x_5}{x_4} & \quad 
a_5 ~&:=~ \dfrac{a_4x_6+x_7}{x_{11}}  
\\ 
a_7 ~&:=~ \dfrac{ \mathds{1}_{\lon} a_6x_3x_{10}+x_8}{x_5} & \quad
a_9 ~&:=~ \dfrac{ \mathds{1}_{\kon} a_8x_2x_9+x_6}{x_7}& \quad 
a_{10} ~&:=~ \dfrac{x_{11}}{x_1x_2} 
\\ \notag
a_{11} ~&:=~ \dfrac{x_8}{x_2x_{10}} & \quad
a_{12} ~&:=~ \dfrac{x_4}{x_3x_{12}} & \quad 
a_{13} ~&:=~ \dfrac{x_6}{x_3x_9} ~.
\end{align}
Here, $\hat a = (a_2, a_4, a_6, a_8)$ if $\mathcal N$ contains the reactions labeled by $\kon$ and $\lon$, 
and
$\hat a = (a_2, a_4, a_6)$ if $\mathcal N$ contains the reaction $\lon$ but not $\kon$, and so on.  
\end{proposition}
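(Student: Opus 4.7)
The plan is to prove the two assertions of the proposition—that $h_{c,a}$ is an effective steady-state function and that $\phi$ is a positive parametrization with respect to it—by direct structured computation, with most of the work consisting in identifying the right rational combinations of rate constants as effective parameters and the right linear combinations of the ODE right-hand sides.

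First I would read off effective parameters from the shapes of the $h_{c,a,i}$'s together with the mass-action ODE system~\eqref{eq:ODE-full}. Inspection suggests
\[
\bar a_{10} = \tfrac{k_1}{k_2+k_3},\quad \bar a_{12} = \tfrac{\ell_1}{\ell_2+\ell_3},\quad \bar a_{11} = \tfrac{m_2}{m_1+m_3},\quad \bar a_{13} = \tfrac{n_1}{n_2+n_3},
\]
together with $\bar a_1 = \lcat/\kcat$, $\bar a_2 = m_3/\lcat$, $\bar a_3 = \ell_3/\lcat$, $\bar a_4 = n_3/\kcat$, $\bar a_5 = k_3/\kcat$, $\bar a_6 = \lon/m_3$, $\bar a_7 = \loff/m_3$, $\bar a_8 = \kon/n_3$, $\bar a_9 = \koff/n_3$. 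These are positive on $\mathbb R^{m}_{>0}$ in every subnetwork (for instance, deleting the $k_2$-reaction just specializes $\bar a_{10}$ to $k_1/k_3$), and surjectivity of the reparametrization map~\eqref{eq:ep} is immediate by choosing representative rate constants. This covers Definition~\ref{def:effective}(i),(ii).

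Next I would build the matrix $M(\kappa)$ by exhibiting, for each non-conservation equation $h_{c,a,j}$, a specific linear combination of the $f_{j_\ell}$'s producing it. The single-equation rescalings $\dot x_4/(\ell_2+\ell_3)$, $\dot x_6/(n_2+n_3)$, $\dot x_8/(m_1+m_3)$, $\dot x_{11}/(k_2+k_3)$ yield $h_{c,a,4}, h_{c,a,6}, h_{c,a,8}, h_{c,a,11}$ directly. The key non-trivial combinations are obtained by summing ODEs to cancel fast fluxes: for example $\dot x_5+\dot x_8+\dot x_{10}$ collapses to $\ell_3 x_4 - \lcat x_5 - m_3 x_8$, which divided by $-\lcat$ gives $h_{c,a,5}$; likewise $\dot x_6+\dot x_7+\dot x_9$ gives $h_{c,a,7}$; the pairs $\dot x_6+\dot x_9$ and $\dot x_8+\dot x_{10}$ give $h_{c,a,9}$ and $h_{c,a,10}$; and $\dot x_1+\dot x_6+\dot x_7+\dot x_9+\dot x_{11}$ reduces to $\lcat x_5 - \kcat x_7$, giving $h_{c,a,12}$. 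Ordering the rows so that $M(\kappa)$ is triangular with strictly positive diagonal entries (the reciprocals of $\ell_2+\ell_3$, $\lcat$, $n_2+n_3$, $\kcat$, $m_1+m_3$, $n_3$, $m_3$, $k_2+k_3$, $\kcat$ up to signs) gives $\det M(\kappa)>0$, completing Definition~\ref{def:effective}(iii). For the parametrization, the five ``binding'' equations $h_{c,a,4}=h_{c,a,6}=h_{c,a,8}=h_{c,a,11}=h_{c,a,12}=0$ each solve linearly and uniquely for one of $a_{12}, a_{13}, a_{11}, a_{10}, a_1$ as a positive rational function of $x>0$, while $h_{c,a,5}=h_{c,a,7}=h_{c,a,9}=h_{c,a,10}=0$ solve for $a_3, a_5, a_9, a_7$ in terms of $x$ and the residual ``hat'' parameters $\hat a$. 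This recovers the explicit formulas~\eqref{eq:param-irrev-details} and yields $\pi\circ\phi=\mathrm{id}$ together with both directions of Definition~\ref{def:parametrization-for-h}(ii).

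The hard part, I expect, is not any single computation but the combinatorial bookkeeping needed to give one uniform argument covering the full ERK network together with every subnetwork obtained by deleting any subset of the six blue reactions. The indicator notation $\mathds 1_{\kon}, \mathds 1_{\lon}, \dots$ is precisely the device that threads through all these cases without a case-by-case enumeration; the effective parameters above have been chosen so that deleting a blue reaction cleanly zeroes out a single summand in one of the $h_{c,a,j}$'s rather than creating a $0/0$ in any denominator (in particular, $\bar a_8$ and $\bar a_6$ become free parameters or not depending on whether $\kon$ and $\lon$ survive, which is the source of the variable dimension $2+\mathds 1_{\kon}+\mathds 1_{\lon}$ in the domain of $\phi$).
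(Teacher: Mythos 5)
Your proposal is correct and takes essentially the same route as the paper's proof: the same thirteen effective parameters (with the same indicator-function device for deleted reactions), the same linear combinations of the $f_i$'s assembled into a matrix $M(\kappa)$ whose determinant is the product of its positive diagonal entries, and the same read-off of the parametrization~\eqref{eq:param-irrev-details} from the resulting binomial/trinomial equations. Two small quibbles: the combination $f_5+f_8+f_{10}=\ell_3 x_4-\lcat x_5-m_3 x_8$ should be divided by $+\lcat$ (not $-\lcat$) to yield $h_{c,a,5}$, and your combination for $h_{c,a,12}$ uses $f_1$ even though $1\in I$ -- harmless, since by the conservation law it coincides with the paper's allowed combination $\tfrac{1}{\kcat}(f_4+f_5+f_8+f_{10}+f_{12})$.
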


\begin{proof}
We will show that the map 
$\bar a:  \mathbb{R}^{12+\mathds{1}_{k_2} + 
\mathds{1}_{\kon} +
\mathds{1}_{m_1} + 
\mathds{1}_{\ell_2} + 
\mathds{1}_{\lon} + 
\mathds{1}_{n_2} }_{>0} \to \mathbb{R}^{11+  \mathds{1}_{\kon} + \mathds{1}_{\lon} }_{>0}$,
defined as follows, is a reparametrization map as in~\eqref{eq:ep}:
\begin{equation}\label{eq:effective_irrev}
\begin{tabular}{lllll}
$\bar a_1=\frac{\ell_{cat}}{\kcat}$,
& $\bar a_2=\frac{m_{3}}{\lcat}$,& $\bar a_3=\frac{\ell_3}{\lcat}$,
& $\bar a_4=\frac{n_3}{\kcat}$, & $\bar a_5=\frac{k_3}{\kcat}$, \\
 $\bar a_6=\frac{ \mathds{1}_{\lon} \lon }{m_3}$, & $\bar a_7=\frac{ \loff }{m_3}$,&
$\bar a_8=\frac{ \mathds{1}_{\kon} \kon }{n_3}$,& $\bar a_9=\frac{ \koff }{n_3}$,&$\bar a_{10}=\frac{k_1}{ \mathds{1}_{k_2} k_2+ k_3}$, \\ 
$\bar a_{11}=\frac{m_2}{ \mathds{1}_{m_1} m_1+m_3}$, & $\bar a_{12}=\frac{\ell_1}{ \mathds{1}_{\ell_2} \ell_2+\ell_3}$, &
$\bar a_{13}=\frac{n_1}{ \mathds{1}_{n_2} n_2+n_3}$. & \\
\end{tabular}
\end{equation}
In particular, we remove the effective parameter $\bar{a}_6$ (respectively, $\bar{a}_8$) if 
$\mathds{1}_{\lon}=0$ (respectively, $\mathds{1}_{\kon}=0$).  
Notice that each $\bar a_i$ (if it is not removed) is defined and positive for all $\kappa = (k_1,\dots, n_3) \in \mathbb{R}^{12+\mathds{1}_{k_2} + 
\mathds{1}_{\kon} +
\mathds{1}_{m_1} + 
\mathds{1}_{\ell_2} + 
\mathds{1}_{\lon} + 
\mathds{1}_{n_2} }_{>0}$.  

We  must show that the map $\bar a$
is surjective.  
Indeed,
given $a \in \mathbb{R}^{11+  \mathds{1}_{\kon} + \mathds{1}_{\lon} }_{>0}$,
it is easy to check that $a$ is the image under $\bar a$ of the vector obtained by removing 
every 0 coordinate from the following vector: 
{\small
\begin{align*}
    &(k_1,
    k_2, k_3, \kcat, \kon, \koff, 
    {\ell}_1,
    {\ell}_2, {\ell}_3, \lcat, \lon, \loff, 
    m_1, m_2, m_3,
    n_1, n_2, n_3
    ) 
    ~=~ \\
    & ( 
    ( \mathds{1}_{k_2} +a_5)a_{10}, 
     \mathds{1}_{k_2} , 
    a_5, 
    1, 
     \mathds{1}_{\kon} a_4 a_8, 
    a_4 a_9, 
    ( \mathds{1}_{\ell_2} +a_1 a_3)a_{12}, 
     \mathds{1}_{\ell_2} , 
    a_1 a_3, 
    a_1, 
     \mathds{1}_{\lon} a_1 a_2 a_6, 
    a_1 a_2 a_7, 
     \mathds{1}_{m_1} ,  \\ 
    & \quad \quad ( \mathds{1}_{m_1} +a_1 a_2)a_{11}, 
    a_1 a_2, 
    ( \mathds{1}_{n_2} +a_4)a_{13}, 
     \mathds{1}_{n_2} , 
    a_4 
    )~.
\end{align*}
}

Next, consider the following $9 \times 9$ matrix:
{\footnotesize
\begin{align} \label{eq:M-kappa-irrev}
M(\kappa)=
\left(\begin{array}{cccccccccccc}
\frac{1}{\mathds{1}_{\ell_2} \ell_2+\ell_3} &  0& 0& 0  & 0&0 & 0&0 & 0\\
0 & \frac{1}{\lcat }  & 0 & 0   & \frac{1}{\lcat } & 0& \frac{1}{ \lcat }&0&0\\
 0 & 0 & \frac{1}{\mathds{1}_{n_2} n_2+n_3}& 0 & 0 & 0 & 0&0&0\\
 0 &  0 & \frac{1}{ \kcat }    &  \frac{1}{\kcat }& 0 & \frac{1}{\kcat }& 0&0&0\\
  0 &  0 & 0     &  0& \frac{1}{\mathds{1}_{m_1} m_1+m_3} & 0& 0&0&0\\
  0     &  0 &  \frac{1}{n_{3}}   & 0   & 0       &   \frac{1}{n_{3}}  & 0& 0 &0 \\
   0&  0& 0& 0  & \frac{1}{m_{3}}& 0 &\frac{1}{m_{3}}&0 & 0\\
    0&  0& 0& 0  & 0& 0 &0 &\frac{1}{\mathds{1}_{k_2} k_2+ k_3}& 0\\
     \frac{1}{ \kcat }&  \frac{1}{ \kcat }& 0& 0 & \frac{1}{\kcat }& 0 &\frac{1}{\kcat }&0 & \frac{1}{\kcat }
\end{array}
\right)~.
\end{align}
} \noindent
It is straightforward to check that 
 ${\rm det} M(\kappa)$ 
is the product of all diagonal terms, and hence is positive for all $\kappa\in \mathbb{R}^{ 12+\mathds{1}_{k_2} + 
\mathds{1}_{\kon} +
\mathds{1}_{m_1} + 
\mathds{1}_{\ell_2} + 
\mathds{1}_{\lon} + 
\mathds{1}_{n_2}}_{>0}$. 

The mass-action ODEs of $\mathcal{N}$ are obtained from those~\eqref{eq:ODE-full} of the full ERK network by replacing the rate constants $k_2, \kon, m_1, \ell_2, \lon, n_2$, respectively, by 
$\mathds{1}_{k_2} k_2$, 
$\mathds{1}_{\kon} \kon$, 
$\mathds{1}_{m_1} m_1$, 
$\mathds{1}_{\ell_2} \ell_2$, 
$\mathds{1}_{\lon} \lon$, and 
$\mathds{1}_{n_2} n_2$, respectively. 
To the right-hand sides of these ODEs, we apply the recipe given in equations \eqref{eq:linearchange}--\eqref{consys-h}, using 
the effective parameters $\bar a_i$ in~\eqref{eq:effective_irrev},
the matrix $M(\kappa)$ in~\eqref{eq:M-kappa-irrev}, and
the conservation-law matrix $W$ arising from the conservation laws~\eqref{eq:conservation}.
It is straightforward to check that the result is the function 
$\augmentH(x)$ 
given in~\eqref{eq:effectivefunction_irrev}.

Observe that, for the non-conservation-law equations 
$h_{c, a,4}, \dots, h_{c, a,12}$
in~\eqref{eq:effectivefunction_irrev},
each non-constant coefficient is, up to sign, one of the $a_i$'s.
Hence, the $\bar a_i$'s in~\eqref{eq:effective_irrev} are effective parameters, and the function
in~\eqref{eq:effectivefunction_irrev} is an effective steady-state function. 
Finally, the fact that $\phi$ is a positive parametrization with respect to~\eqref{eq:effectivefunction_irrev} (as in Definition~\ref{def:parametrization-for-h}) follows directly from comparing equations~\eqref{eq:effectivefunction_irrev}
 and~\eqref{eq:param-irrev-details}.
\end{proof}

\begin{remark}[Multistationarity depends on only $\kon$ and $\lon$] \label{rmk:only-depends-k-on-l-on}
Proposition~\ref{prop:param-irrev} considers any network obtained by deleting any (or none) of the six reactions labeled by $k_2$, $\kon$, $m_1$, $\ell_2$, $\lon$, $n_2$.  Nonetheless, the resulting steady-state parametrization~\eqref{eq:param-irrev-details} depends on $\kon$ and $\lon$ but not any of the other rate constants. Thus, 
multistationarity for these irreversible networks depends only on whether the network contains $\kon$ and $\lon$ (see Theorem~\ref{thm:bistable}).
\end{remark}

\subsection{The reduced ERK network} \label{sec:reduced}
In the previous subsection, we consider irreversible versions of the ERK network.  Now we further reduce the network by additionally removing some ``intermediate complexes'' (namely, $S_{10}E$ and $S_{01}F$).
These operations yield the reduced ERK network in Figure~\ref{fig:red_erk_net}.
Note that in the process of removing intermediates, the reactions $m_2$ and $m_3$ (similarly, $n_1$ and $n_3$) are collapsed into a single reaction labeled $m$ (respectively, $n$). 
A biological motivation for collapsing these reactions is the fact that intermediates are usually short-lived, so the simpler model may approximate the dynamics well.  

\begin{figure}[ht]
    \centering
    \includegraphics[scale=1.2]{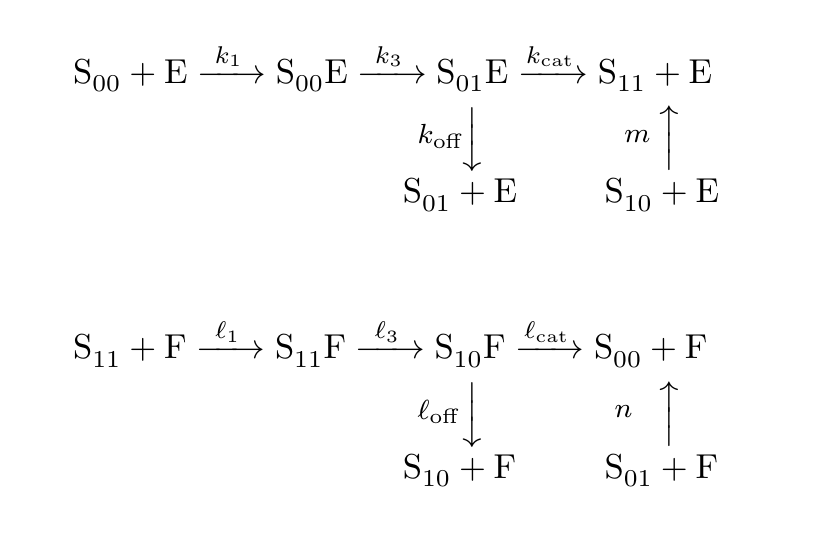}
    \caption{\defword{Reduced ERK network}.}
    \label{fig:red_erk_net}
\end{figure}

Our notion of removing intermediates matches that of Feliu and Wiuf~\cite{FeliuWiuf}, who initiated the recent interest in the question of when dynamical properties are preserved when intermediates are added or removed (e.g., $S_{10}+E \leftrightarrows S_{10}E \to S_{11} + E$ versus $S_{10} \to S_{11} $).
Our work, therefore, fits into this circle of ideas~\cite{approx-elim, Freitas1, core}. 

\begin{table}[hbt]
  \centering
    \begin{tabular}[ht]{|cccccccccc|} \hline
      $x_1$ & $x_2$ & $x_3$ & $x_4$ & $x_5$ & $x_6$ & $x_7$ & $x_8$ & $x_9$ & $x_{10}$ \\ \hline
$S_{00}$ &  $E$ & $S_{00} E$ & $S_{01} E$ & $S_{11}$ & $S_{01}$ & $S_{10}$ & $F$ & $S_{11} F$ & $S_{10} F$\\
\hline
    \end{tabular}
  \caption{
      Assignment of variables to species for the reduced ERK network~in Figure~\ref{fig:red_erk_net}.
      (Many of the variables that are also in the full ERK, in Table~\ref{tab:variables}, have been relabeled.)
  }
  \label{tab:variables-reduced}
\end{table}

In the reduced ERK network, the remaining 10 rate constants are as follows:
$k_1, k_3, \kcat, \koff, m, \ell_1, \ell_3, \lcat, \loff, n$.
Letting $x_1,x_2, \ldots, x_{10}$ denote the species
    concentrations in the order given in
    Table~\ref{tab:variables-reduced}, 
the resulting mass-action kinetics ODEs are as follows:
\begin{align*} 
    \notag
\dot{x_1} ~&=~ -k_1 x_{1} x_{2}+n x_{6} x_{8}+\lcat x_{10} & &~=:~ f_1 \\ 
    \notag
\dot{x_2} ~&=~ -k_1 x_{1} x_{2}+\kcat x_{4}+\koff x_{4} & &~=:~ f_2 \\ 
    \notag
\dot{x_3} ~&=~ k_1 x_{1} x_{2}-k_3 x_{3} & &~=:~ f_3
\\ 
    \notag
\dot{x_4} ~&=~ k_3 x_{3}-\kcat x_{4}-\koff x_{4} & &~=:~ f_4 
\end{align*}
\begin{align} \label{eq:ODE-reduced}
\dot{x_5} ~&=~ m x_{2} x_{7}-\ell_1 x_{5} x_{8}+\kcat x_{4} & &~=:~ f_5 
\\ 
    \notag
\dot{x_6} ~&=~ -n x_{6} x_{8}+\koff x_{4} & &~=:~ f_6 \\ 
    \notag
\dot{x_7} ~&=~ -m x_{2} x_{7}+\loff x_{10} & &~=:~ f_7 \\ 
    \notag
\dot{x_8} ~&=~ -\ell_1 x_{5} x_{8}+\loff x_{10}+\lcat x_{10} & &~=:~ f_8 \\ 
    \notag
\dot{x_9} ~&=~ \ell_1 x_{5} x_{8}-\ell_3 x_{9} & &~=:~ f_9 \\ 
    \notag
\dot{x_{10}} ~&=~ -\loff x_{10}+\ell_3 x_{9}-\lcat x_{10} & &~=:~ f_{10}.
\end{align}

The 3 conservation equations are:
\begin{align} \label{eq:cons-law-reduced}
\notag
x_1+x_3+x_4+x_5+x_6+x_7+x_9+x_{10} ~&=~ S_\text{{tot}} ~=:~ c_1\\
x_2+x_3+x_4 ~&=~ E_{\text{tot}} ~=:~ c_2\\
\notag
x_8+x_9+x_{10} ~&=~ F_{\text{tot}}~=:~ c_3.
\end{align}

\begin{proposition}[Steady-state parametrization for reduced ERK network]  \label{prop:param-reduced}
The reduced ERK network (Figure~\ref{fig:red_erk_net}) 
admits an effective steady-state function 
$h_{c,a}: \mathbb{R}^{10}_{>0} \to \mathbb{R}^{10}$ given by: 

\begin{align}\label{eq:effectivefunction_reduced}
\begin{tabular}{ll}
$h_{c, a,1} = x_1+x_3+x_4+x_5+x_6+x_7+x_9+x_{10} - c_1,$  & 
$h_{c, a,2} = x_2+x_3+x_4 - c_2,$  \\
$h_{c, a,3} = -(\kcat+\koff)\lcat x_{10} + k_1 \kcat x_1 x_2 ,$ & 
$h_{c, a,4}  =              k_3x_3 - (\kcat+\koff)x_4$,  \\
$h_{c, a,5}  =               \loff x_{10}-m x_2x_7 $, & 
$h_{c, a,6}  =              \ell_1 x_5x_8-(\lcat+\loff)x_{10} $,  \\
$h_{c, a,7}  =              \ell_3 x_9 - (\lcat+\loff)x_{10}$,& 
$h_{c, a,8}  = x_8+x_9+x_{10} - c_3$,\\
$h_{c, a,9} =                \kcat x_4 - \lcat x_{10}$,& 
$h_{c, a,10}  =      \koff \lcat x_{10} - \kcat n x_6 x_8$.\\        
\end{tabular}
\end{align}
Moreover, 
with respect to this effective steady-state function, 
the positive steady states admit the following positive parametrization:
  \begin{align} \label{eq:param-reduced}
    \phi : \mathbb{R}^{3+10}_{>0} ~& \to~ \mathbb{R}^{10+10}_{>0} \\
    (\kcat, \koff, \loff,~x_1, x_2,\dots, x_{10}) ~&\mapsto ~({\kappa}_1,\kappa_3,\kcat,\koff, m ,\ell_1,\ell_3,\lambda_{\rm cat},\loff, n ,~x_1, x_2,\dots, x_{10})~, \notag
  \end{align}  
  given by
    \begin{align} \label{eq:param-reduced-details} \notag
k_1 ~&:=~ \dfrac{(\kcat+\koff) x_4}{x_1x_2} & \quad 
k_3 ~&:=~ \dfrac{(\kcat+\koff)x_4}{x_3} & \quad 
m ~&:=~ \dfrac{\loff x_{10}}{x_2x_7} & \quad 
\ell_1 ~&:=~ \dfrac{\loff x_{10}+\kcat x_4}{x_5x_8}\\ 
\ell_3 ~&:=~ \dfrac{\loff x_{10}+\kcat x_4}{x_9}& \quad 
\lcat ~&:=~ \dfrac{\kcat x_4}{x_{10}} & \quad 
n ~&:=~ \dfrac{\koff x_4}{x_6x_8}.
    \end{align}
In particular, the image of $\phi$ is the following set of pairs of positive steady states and rate constants:

\[  
    \{ ( k^*; x^*) \in \mathbb{R}^{10+10}_{>0} \mid x^* \text{ is a steady state of \eqref{eq:ODE-reduced} when } k=k^*\}~.
\]
Here, $k$ denotes the vector $(k_1, k_3, \kcat, \koff, m, \ell_1, \ell_3, \lcat, \loff, n)$.
\end{proposition}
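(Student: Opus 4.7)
The plan is to mirror the structure of the proof of Proposition~\ref{prop:param-irrev}, specialized to the reduced network. The three tasks are: (a) exhibit a matrix $M(\kappa)$ with positive determinant realizing $h_{c,a}$ as specified by \eqref{eq:linearchange}--\eqref{consys-h} applied to the $f_i$'s of \eqref{eq:ODE-reduced}; (b) verify that the prescribed $\phi$ is a positive parametrization with respect to $h_{c,a}$; and (c) deduce the image description.

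For (a), the indices of first nonzero columns of the conservation-law matrix from \eqref{eq:cons-law-reduced} are $I=\{1,2,8\}$, and indeed rows $h_{c,a,1}$, $h_{c,a,2}$, $h_{c,a,8}$ of \eqref{eq:effectivefunction_reduced} are exactly the conservation laws. For the seven remaining indices $\{3,4,5,6,7,9,10\}$, I would write down by inspection of \eqref{eq:ODE-reduced} a $7\times 7$ matrix $M(\kappa)$ such that $M(\kappa)(f_3,f_4,f_5,f_6,f_7,f_9,f_{10})^\top = (h_{c,a,3},h_{c,a,4},h_{c,a,5},h_{c,a,6},h_{c,a,7},h_{c,a,9},h_{c,a,10})^\top$. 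Most identities are immediate, namely $h_{c,a,4}=f_4$, $h_{c,a,5}=f_7$, $h_{c,a,6}=f_9+f_{10}$, $h_{c,a,7}=f_{10}$, and $h_{c,a,9}=f_5+f_7+f_9+f_{10}$; the two nontrivial identities are $h_{c,a,3}=\kcat(f_3+f_4)+(\kcat+\koff)(f_5+f_7+f_9+f_{10})$ and $h_{c,a,10}=\kcat f_6-\koff(f_5+f_7+f_9+f_{10})$, each verified by straightforward substitution. After a suitable reordering of rows and columns, $M(\kappa)$ becomes block lower-triangular with diagonal entries $1,1,1,1,1,\kcat,\kcat$, so $|\det M(\kappa)| = \kcat^2 > 0$ on $\mathbb{R}^{10}_{>0}$; an orientation adjustment (if needed) yields a positive determinant. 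The associated effective parameters---the distinct nonconstant coefficients appearing in \eqref{eq:effectivefunction_reduced}---are manifestly positive rational functions of $\kappa$, and the reparametrization map is surjective by direct inspection.

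For (b), to show $\phi$ is a positive parametrization, I would check that for positive $(\kappa;x)$, the equations $h_{c,a,i}(\kappa;x)=0$ for $i \in \{3,4,5,6,7,9,10\}$ hold if and only if the rate constants satisfy the formulas in \eqref{eq:param-reduced-details}. I would solve these seven equations in the order $9,4,5,7,6,3,10$, which avoids any circularity: $h_{c,a,9}=0$ gives $\lcat=\kcat x_4/x_{10}$; $h_{c,a,4}=0$ gives $k_3=(\kcat+\koff)x_4/x_3$; $h_{c,a,5}=0$ gives $m=\loff x_{10}/(x_2 x_7)$; $h_{c,a,6}=0$ and $h_{c,a,7}=0$, combined with the already-established identity $\lcat x_{10}=\kcat x_4$, yield $\ell_1=(\loff x_{10}+\kcat x_4)/(x_5 x_8)$ and $\ell_3=(\loff x_{10}+\kcat x_4)/x_9$; substituting $\lcat x_{10}=\kcat x_4$ into $h_{c,a,3}=0$ gives $k_1=(\kcat+\koff)x_4/(x_1 x_2)$; and the same substitution in $h_{c,a,10}=0$ yields $n=\koff x_4/(x_6 x_8)$. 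These match \eqref{eq:param-reduced-details} exactly, so $\phi$ is a positive parametrization with free parameters $\hat\kappa=(\kcat,\koff,\loff)$.

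For (c), the image claim follows immediately from the preceding step: the formulas in \eqref{eq:param-reduced-details} take positive values on $\mathbb{R}^{3+10}_{>0}$, and the argument in (b) shows that they characterize precisely when the input $x$ is a positive steady state of \eqref{eq:ODE-reduced} at the output rate constants. The main anticipated obstacle is discovering the two nontrivial identities expressing $h_{c,a,3}$ and $h_{c,a,10}$ as specific linear combinations of the $f_j$'s; once these are in hand---guided by the ordering in which equations are solved in (b)---the remainder reduces to routine bookkeeping.
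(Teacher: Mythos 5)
Your proposal is correct and follows essentially the same route as the paper's proof: the same seven $\mathbb{Q}(\kcat,\koff)$-linear combinations of the $f_i$ (including the two nontrivial ones for $h_{c,a,3}$ and $h_{c,a,10}$), the determinant $\kcat^2$ for $M(\kappa)$, and the same order of solving ($h_{c,a,9}=0$ first to get $\lcat x_{10}=\kcat x_4$, then substituting into the rest) to verify the parametrization in both directions. One small caution: since $M(\kappa)$ is pinned down by the prescribed orderings of the $f_j$ and $h_{c,a,i}$, you cannot ``adjust orientation'' to fix the sign afterward; fortunately a direct expansion shows $\det M=\kcat^2>0$ exactly, so no adjustment is needed.
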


\begin{proof}
Let $W$ denote the conservation-law matrix arising from the conservation laws~\eqref{eq:cons-law-reduced} for the reduced ERK network.  Then $I=\{1,2,8\}$ is the set of indices of the first nonzero coordinates of the rows of $W$.
We take $\mathbb{Q}(\kcat, \koff)$-linear combinations of the $f_i$'s in~\eqref{eq:ODE-reduced}, where $i \notin I$,
to obtain the following binomials in the $x_i$'s:
\begin{align*} 
    h_3 ~&:=~ (\kcat+\koff)(f_5+f_7+f_9+f_{10})+\kcat(f_3+f_4)
            &
            ~&=~ -(\kcat+\koff)\lcat x_{10} + k_1 \kcat x_1 x_2 \\
    h_4 ~&:=~ f_4
            &
            ~&=~ k_3x_3 - (\kcat+\koff)x_4\\
    h_5 ~&:=~ f_7
            &
            ~&=~ \loff x_{10}-m x_2x_7 \\
    h_6 ~&:=~ f_9+f_{10}
            &
            ~&=~ \ell_1 x_5x_8-(\lcat+\loff)x_{10} \\
    h_7 ~&:=~ f_{10}
            &
            ~&=~ \ell_3 x_9 - (\lcat+\loff)x_{10} \\
    h_{9} ~&:=~ f_5+f_7+f_9+f_{10}
            &
            ~&=~ \kcat x_4 - \lcat x_{10}\\
    h_{10} ~&:=~ \kcat f_6 - \koff(f_5+f_7+f_9+f_{10}) 
            &
            ~&=~  \koff \lcat x_{10} - \kcat n x_6 x_8. 
\end{align*}

Consider the (above) linear transformation from $f_i$ to $h_i$ $(i\not\in I)$.
Let $M$ denote the corresponding matrix representation ($M$ plays the role of the matrix denoted by $M(\kappa)$ in Definition~\ref{def:effective}).
It is straightforward to check that $\det M = \kcat^2$, which is positive when $\kcat>0$. 

Consider the reparametrization map $\bar a:\mathbb{R}^{10} \to \mathbb{R}^{10}$ defined by the identity map (and so is surjective).  Then $\bar a$, together with the conservation-law matrix $W$ and 
the matrix $M$, 
yield (as in Definition~\ref{def:effective}\footnote{In this case, Definition~\ref{def:effective}(iii)(b) 
requires every nonconstant coefficient in the effective steady-state function~\eqref{eq:effectivefunction_reduced} to be a 
rational-number multiple of one of the rate constants.  However, for the non-conservation-law equations in~\eqref{eq:effectivefunction_reduced},
many of the non-constant coefficients -- such as $\koff \lcat$ -- are not rational-number multiples of one of the rate constants.  Nonetheless, these coefficients are all polynomials in the rate constants, and the relevant results in~\cite{DPST} hold in that generality.})
the effective steady-state function 
$\augmentH(x)$ 
given in~\eqref{eq:effectivefunction_reduced}.

To show that $\phi$ is a positive steady-state parametrization with respect to~\eqref{eq:effectivefunction_reduced},
as in Definition~\ref{def:parametrization-for-h}, 
it suffices to show the following claim:

\noindent
{\bf Claim:} For every $(k^*; x^*) \in \mathbb{R}^{10+10}_{>0}$, the steady-state condition holds -- namely, $h_i( k^*; x^*)=0$ for all $i \in \{3,4,5,6,7,9,10 \}$ -- if and only if $\phi (\kcat^*, \koff^*, \loff^*; x^*) = (k^*;x^*)$.

For the ``$\Rightarrow$'' direction, assume $h_i(k^*;x^*)=0$ for all $i$.  
Then $h_9(k^*;x^*)=0$ implies that 
\begin{align} \label{eq:k_9}
    \lcat^* ~=~ \dfrac{\kcat^* x^*_4}{x^*_{10}}~.
\end{align}
In other words, $\lambda_{\rm cat} $ -- when evaluated at 
$(\kcat, \koff, \loff;x) =(\kcat^*, \koff^*, \loff^*;x^*) $
-- equals $\lcat^*$.
Next, the equality $h_3(k^*;x^*)=0$ 
implies that 
\begin{align} \label{eq:h3}
    k_1^* 
        ~=~ \frac{(\kcat^* +\koff^* )\lcat^*  x^* _{10}}{  \kcat^*  x^* _1 x^* _2}
        ~=~ \frac{(\kcat^* +\koff^* )x_4^*  }{  x^* _1 x^* _2}~,
\end{align}
where the final equality follows from equation~\eqref{eq:k_9}. 
Thus, the expression for $k_1 $ given after~\eqref{eq:param-reduced-details} -- when evaluated at 
$(\kcat, \koff, \loff;x) =(\kcat^*, \koff^*, \loff^*;x^*) $
-- equals $k_1^*$.

Similarly,  the equality $h_4(k^*;x^*)=0$ 
(respectively, $h_5(k^*;x^*)=0$, $h_6(k^*;x^*)=0$, $h_7(k^*;x^*)=0$, or $h_{10}(k^*;x^*)=0$)
implies that 
$\kappa_3 $ (respectively, 
$m$, $\ell_1$, $\ell_3$, or $n$)
-- when evaluated at 
$(\kcat, \koff, \loff;x) =(\kcat^*, \koff^*, \loff^*;x^*)$
-- equals $k_3^*$ (respectively, $m^*$, $\ell^*_1$, $\ell^*_3$, or $n^*$).
Thus, $\phi (\kcat^*, \koff^*, \loff^*;x^*) = ( k^*;x^*)$.


The ``$\Leftarrow$'' direction is similar.  
Assume $\phi (\kcat^*, \koff^*, \loff^*;x^*) = (k^*;x^*)$.  
That is, the expressions for
 $k_1$, $k_3$, $m$, $\ell_1$, $\ell_3$, $\lcat$, and $n$
 evaluate to, respectively, $k_1^*$, $k_3^*$, $m^*$, $\ell_1^*$, $\ell_3^*$, $\lcat^*$, and $n^*$, 
 when $(\kcat, \koff, \loff;x) =(\kcat^*, \koff^*, \loff^*;x^*) $.
 In particular, equation~\eqref{eq:k_9} holds, and so $h_9(k^*;x^*)=0$.  Similarly, $h_i(k^*;x^*)=0$ for all other $i$ (here we also use equation~\eqref{eq:k_9}).
\end{proof}

\begin{remark} \label{rem:linearly-binomial}
The proof of Proposition~\ref{prop:param-reduced} proceeds by performing linear operations on the steady-state polynomials to yield binomials $g_i$, and then solving for one $k_j$ from each binomial to obtain the parametrization~\eqref{eq:param-reduced}.  This is similar in spirit to -- but more general than -- the approach prescribed in \cite[\S 4]{DPST} for ``linearly binomial'' networks.  Also, our linear operations were found ``by hand'', and so an interesting future direction is to develop efficient and systematic approaches to finding such operations leading to binomials.
\end{remark}

\begin{remark} \label{rem:toric-steady-states}
The proof of Proposition~\ref{prop:param-reduced} shows that
the ``steady-state ideal'' (the ideal generated by the right-hand sides of the ODEs)
of the reduced ERK network 
is generated by the binomials $g_i$.  
This network, therefore has, ``toric steady states''~\cite{TSS}.
In contrast, 
the steady-state ideal of the full ERK network is 
not a binomial ideal (it is straightforward to check this computationally, e.g., using the {\tt Binomials} package in {\tt Macaulay2}~\cite{M2}).
As for the irreversible versions of the ERK network, when the reactions with rate constants $\kon$ and $\lon$ are deleted, we see from~\eqref{eq:effectivefunction_irrev}
that the steady-state ideal becomes binomial. Hence, irreversible ERK networks that are missing both $\kon$ and $\lon$ are ``linearly binomial'' as in \cite[\S 4]{DPST}. 
\end{remark}



\begin{remark} \label{rmk:boundary-conservative}
All networks considered in this section are conservative, 
which can be seen from the conservation laws~\eqref{eq:conservation} for the full and irreversible ERK networks, and~\eqref{eq:cons-law-reduced} for the reduced ERK network.
Also for these networks, there are no boundary steady states in any compatibility class 
(it is straightforward to check this using results from~\cite{PetriNetExtended} or~\cite{ShiuSturmfels}).
\end{remark}

\section{Main Results} \label{sec:results}
Each ERK network we investigated admits oscillations via a Hopf bifurcation (Section~\ref{sec:oscillations}). 
Bistability, however, is more subtle (Section~\ref{sec:bistability}).

\subsection{Oscillations} \label{sec:oscillations}
 The full ERK system (Figure \ref{fig:erk_sys}) exhibits oscillations for some values of the rate constants~\cite{long-term}. We now investigate oscillations in the fully irreversible and reduced ERK networks. 

\subsubsection{Fully irreversible ERK network} \label{sec:osc-irrev}
As shown in Figure~\ref{fig:oscill-in-irr-erk}, 
the fully irreversible ERK network admits oscillations. 
That figure was generated using the following rate constants:  
\begin{align} \label{eq:rates-osc-irr} 
 (k_1  , k_3 , \kcat  , \koff , \ell_1  , \ell_3 , \lcat  , \loff ,  m_2  , m_3  , n_1  , n_3 )
 ~=~  
 &  
( 5241, 5314.5, 1291, 76.203,   64.271, 
\\  \notag
    & \quad 
    44.965, 924970, 27238, 2.76250\times 10^6,  
\\ \notag 
 & \quad
 2.0451,  2.1496\times 10^6,  1.3334 )~.
\end{align}
These rate constants~\eqref{eq:rates-osc-irr} come from the ones that 
Rubinstein {\em et al.} showed generate oscillations for the full ERK network \cite[Table 2]{long-term} (we simply ignore their rate constants for the six deleted reactions).  The approximate initial species concentrations used to generate Figure~\ref{fig:oscill-in-irr-erk} are as follows (see supplementary file {\tt ERK-Matcont.txt}):
\begin{align} \label{eq:fig-inital-cond} \notag
(x_1,x_2, \dots, x_{12}) 
 ~\approx~  
  ( &1.215\times 10^{-5},  ~ 4.722\times 10^{-5},  ~ 8.777\times 10^{-4},  ~ 1.396\times 10^{-3}, 
    \\ 
    & 
    6.590\times 10^{-8}, ~ 2.698\times 10^{-3},  ~ 2.873\times 10^{-4},  ~ 1.150\times 10^{-3},  
    \\
    & 
    3.072\times 10^{-3},  ~ 2.262\times 10^{-6}, ~ 0.042,  ~ 0.849)~. \notag
\end{align}

\begin{figure}[ht]
    \begin{minipage}[t]{.3\textwidth}
      \includegraphics[width=\linewidth]{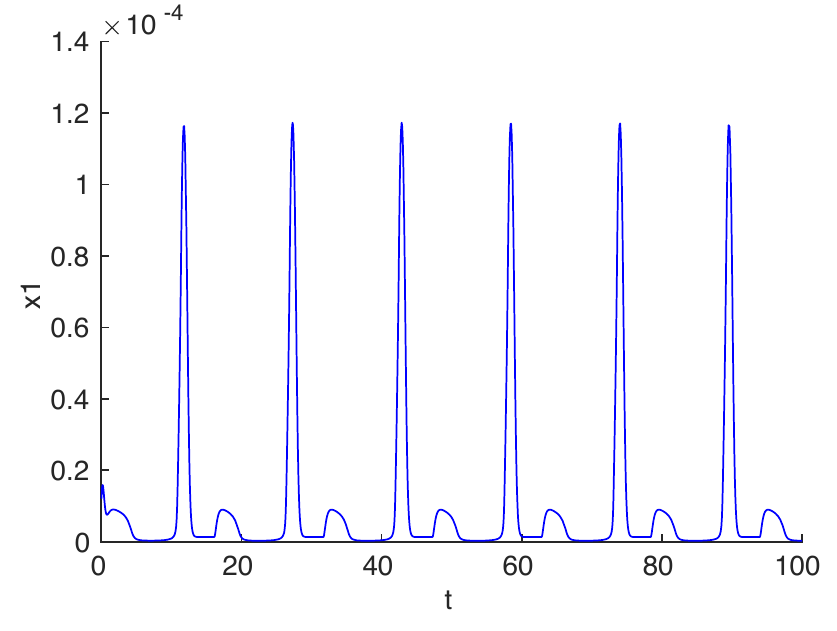}
    \end{minipage}
    \hfill
    \begin{minipage}[t]{.3\textwidth}
      \includegraphics[width=\linewidth]{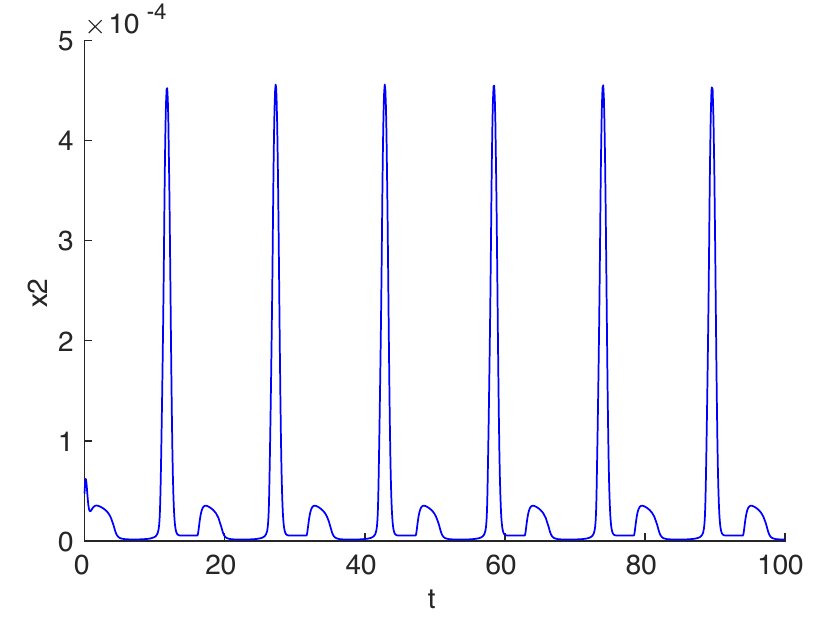}
    \end{minipage}
    \hfill   
    \begin{minipage}[t]{.3\textwidth}
      \includegraphics[width=\linewidth]{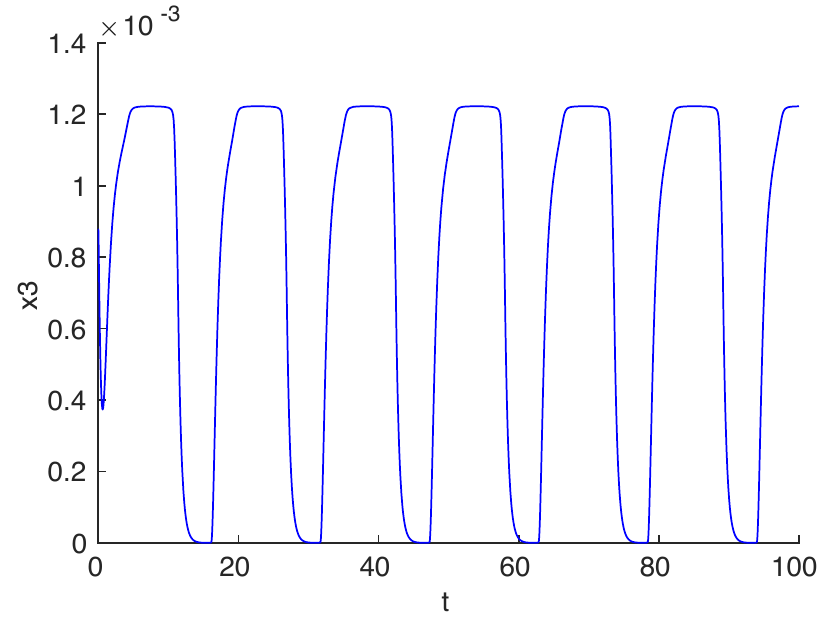}
    \end{minipage}
    \hfill
    \begin{minipage}[t]{.3\textwidth}
      \includegraphics[width=\linewidth]{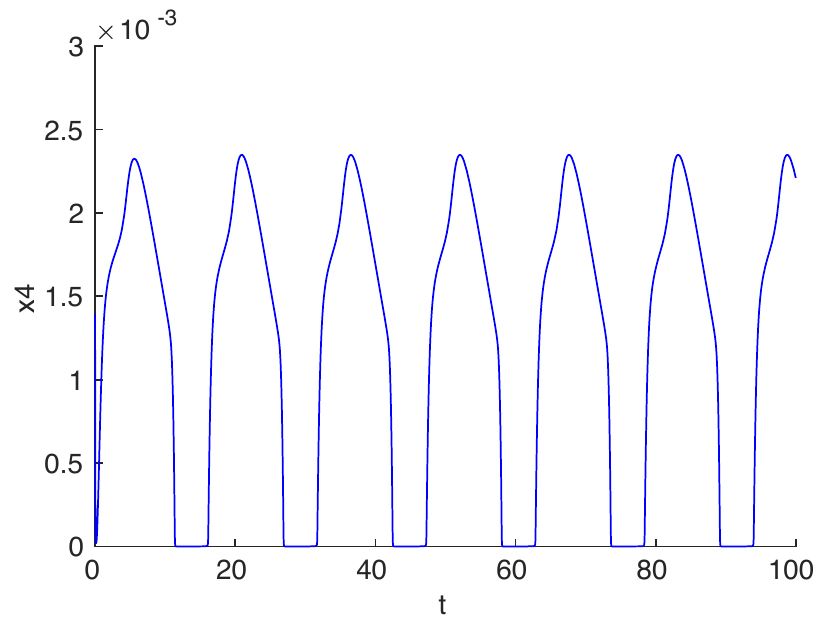}
    \end{minipage}
    \hfill    
    \begin{minipage}[t]{.3\textwidth}
      \includegraphics[width=\linewidth]{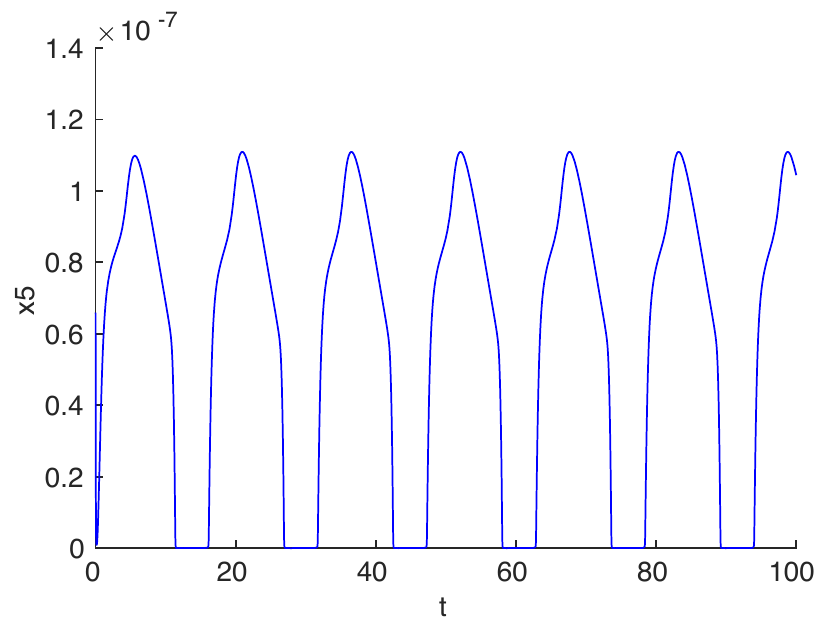}
    \end{minipage}
    \hfill
    \begin{minipage}[t]{.3\textwidth}
      \includegraphics[width=\linewidth]{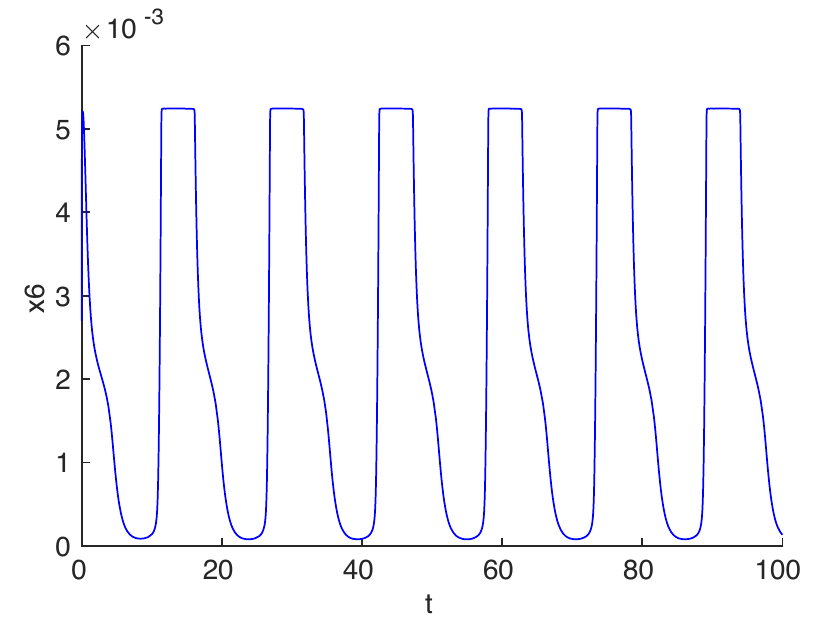}
    \end{minipage}
    \hfill    
    \begin{minipage}[t]{.3\textwidth}
      \includegraphics[width=\linewidth]{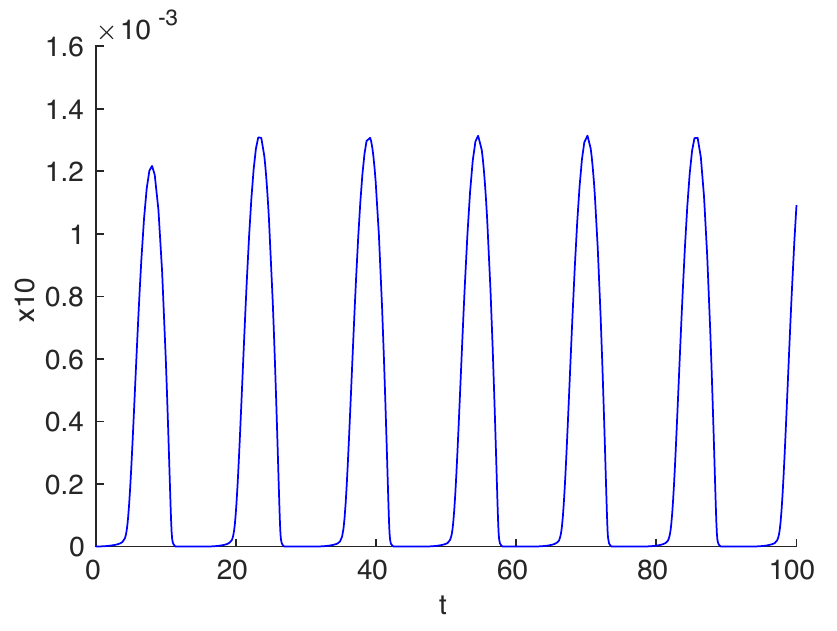}
    \end{minipage}
    \hfill
    \begin{minipage}[t]{.3\textwidth}
      \includegraphics[width=\linewidth]{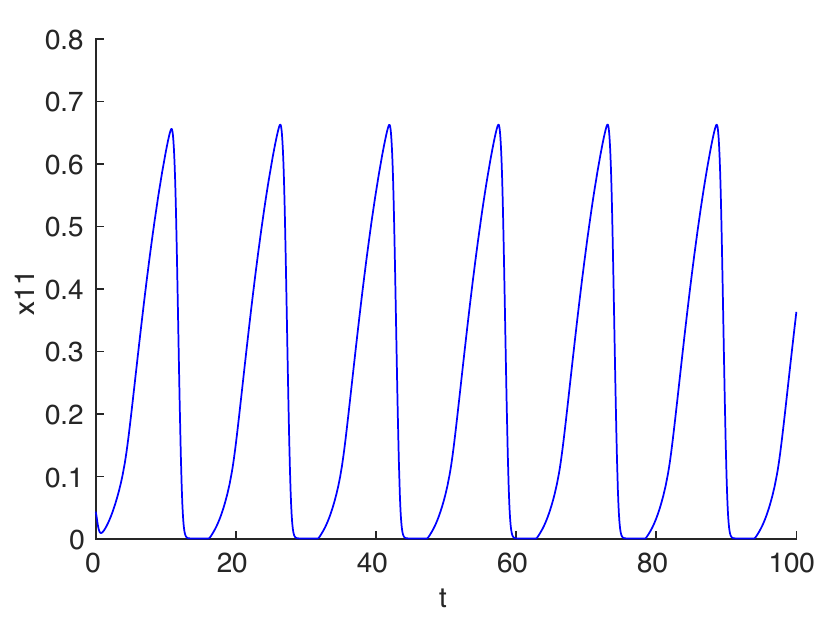}
    \end{minipage}
    \hfill
    \begin{minipage}[t]{.3\textwidth}
      \includegraphics[width=\linewidth]{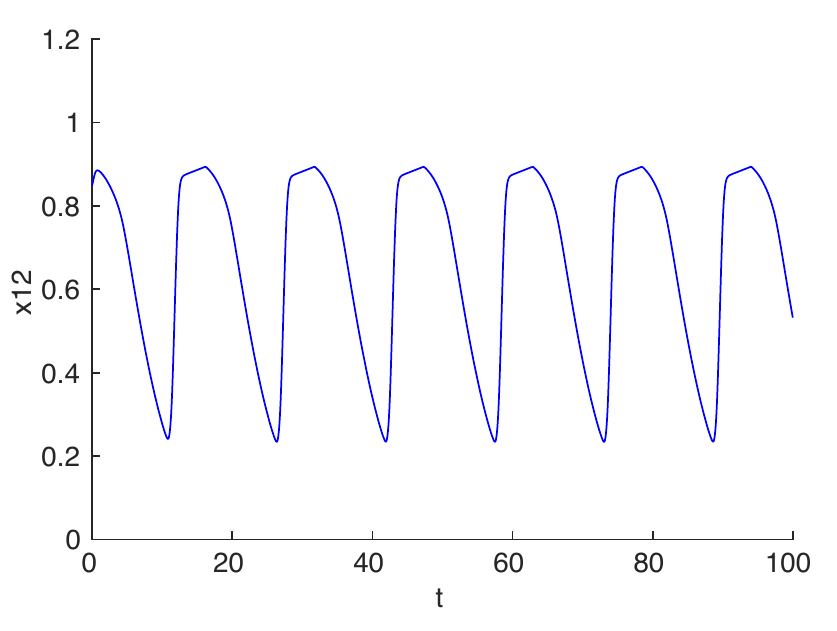}
    \end{minipage}
    \caption{The fully irreversible ERK network undergoes oscillations 
    when the rate constants are as in~\eqref{eq:rates-osc-irr}
    and the initial species concentrations are as in~\eqref{eq:fig-inital-cond}.
    Displayed in this figure are all species concentrations, except $x_7, x_8,$ and $x_9$. 
    This figure was generated using {\tt MATCONT}, a numerical bifurcation package~\cite{matcont}.  For details, see the supplementary file~{\tt ERK-Matcont.txt}.}
    \label{fig:oscill-in-irr-erk}
\end{figure}

In Figure~\ref{fig:oscill-in-irr-erk}, we notice some peculiarities in the graphs $x_i(t)$ of the species concentrations. 
The species concentrations $x_1$ and $x_2$ (corresponding to \ce{S00} and \ce{E}, respectively) peak dramatically, while $x_3$ and $x_6$ (\ce{F} and~\ce{S01F}) stabilize momentarily at each peak.
Also, each of $x_1,x_2,x_3,x_4,x_5,x_{10},x_{11}$ deplete for some time in each period, whereas $x_{12}$ (\ce{S11}) never depletes. Finally, the graphs of the pairs $x_1$ and $x_2$ are qualitatively similar, and also the pair $x_3$ and $x_6$, the pair $x_4$ and $x_5$, and the pair $x_{10}$ and $x_{11}$.

Going beyond the fully irreversible ERK network, all other irreversible ERK networks 
 -- those obtained from the full ERK network 
by deleting one or more the reactions $k_2, \kon, m_1, \ell_2, \lon, n_2$ --  
also admit oscillations.  This claim follows from a result of Banaji that ``lifts'' oscillations when one or more reactions are made reversible \cite[Proposition 4.1]{banaji-inheritance}.

\subsubsection{Reduced ERK network} \label{sec:osc-reduced}
We saw in the previous subsection that the fully irreversible ERK network exhibits oscillations. We now show that a simpler network - the reduced ERK network - also undergoes oscillations via a Hopf bifurcation. 
These oscillations are shown in
Figure~\ref{fig:oscill-in-red-erk}, and the rate constants 
that yield the corresponding Hopf bifurcation are specified in Theorem~\ref{thm:hopf}. 

Compared to the oscillations for the irreversible ERK network (Figure~\ref{fig:oscill-in-irr-erk}), 
the oscillations in the reduced ERK network (Figure~\ref{fig:oscill-in-red-erk}) are more uniform.
Also, the period of oscillation is much shorter, and the amplitudes for species $x_3$, $x_8$, and $x_{10}$ are small (this may be due to the choice of rate constants). 
Finally, three of the six species shown do not deplete completely, whereas nearly all the species of the fully irreversible ERK do deplete in each period. 


We discovered oscillations by finding a Hopf bifurcation.  
How we found this bifurcation -- via the Hopf-bifurcation criterion in Section~\ref{sec:param-Hopf} --
is the focus of the rest of this subsection.  


\begin{figure}[ht]
    \begin{minipage}[t]{.3\textwidth}
      \includegraphics[width=\textwidth]{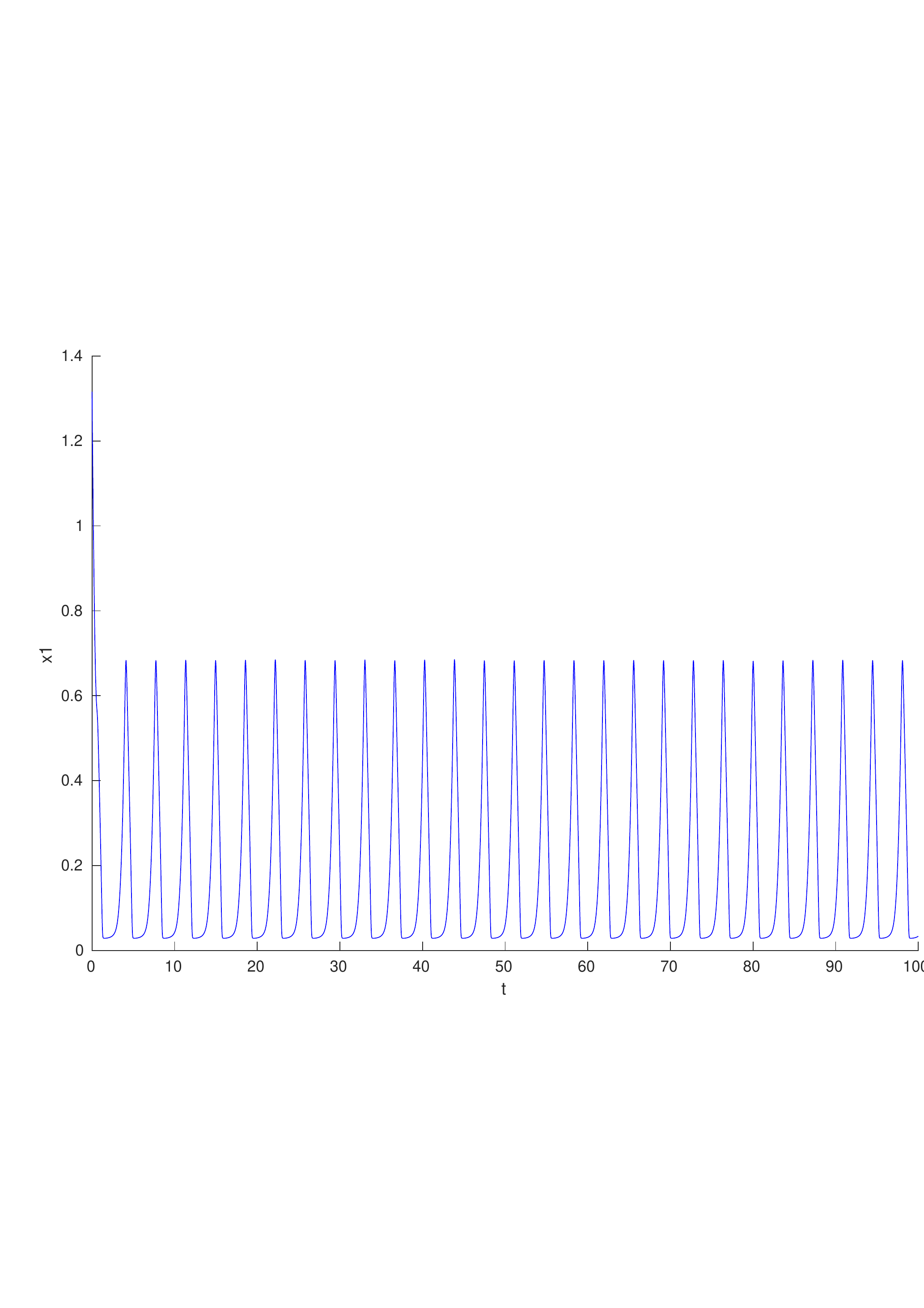}
    \end{minipage}%
    \begin{minipage}[t]{.3\textwidth}
      \includegraphics[width=\textwidth]{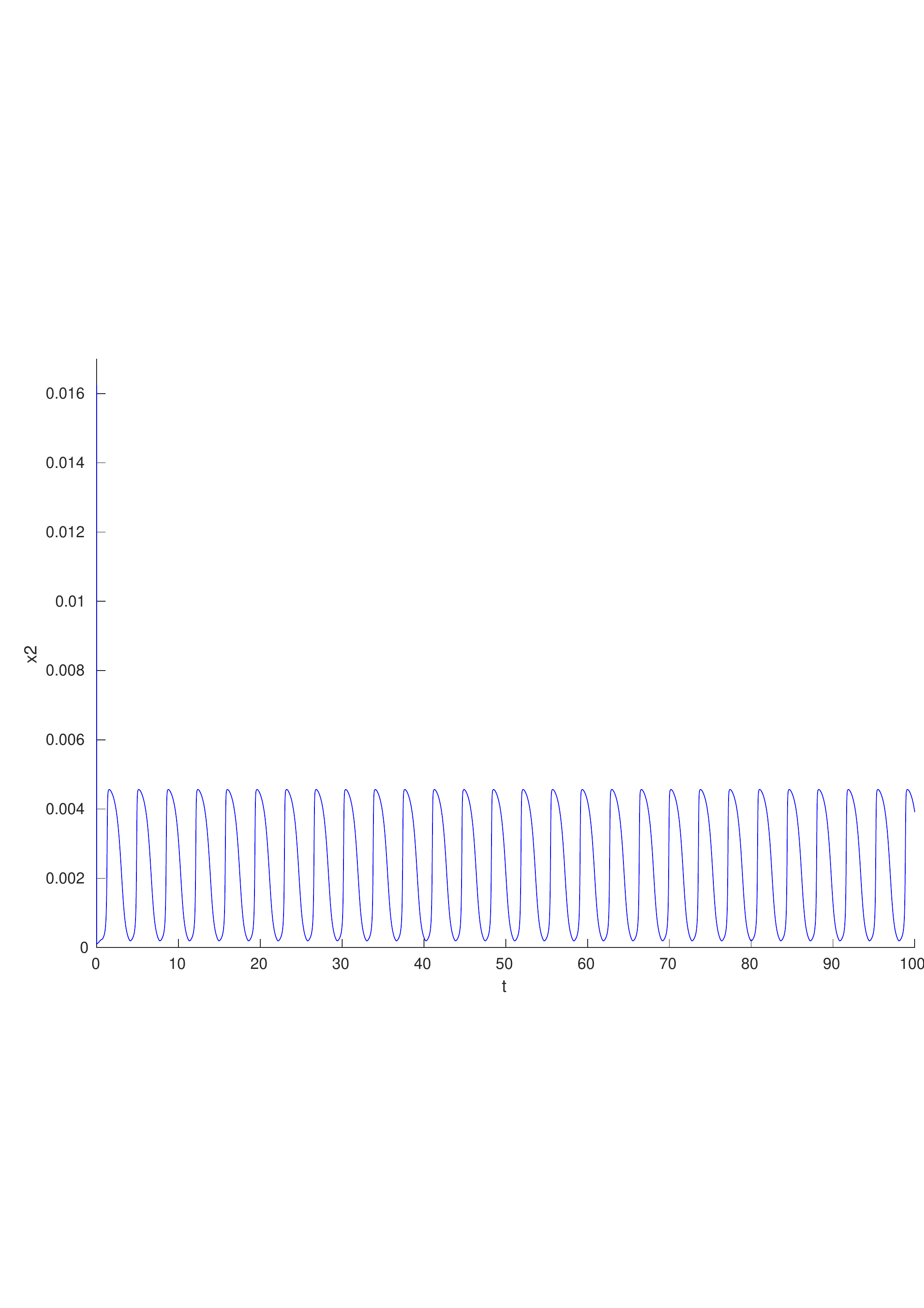}
    \end{minipage}%
    \begin{minipage}[t]{.3\textwidth}
      \includegraphics[width=\textwidth]{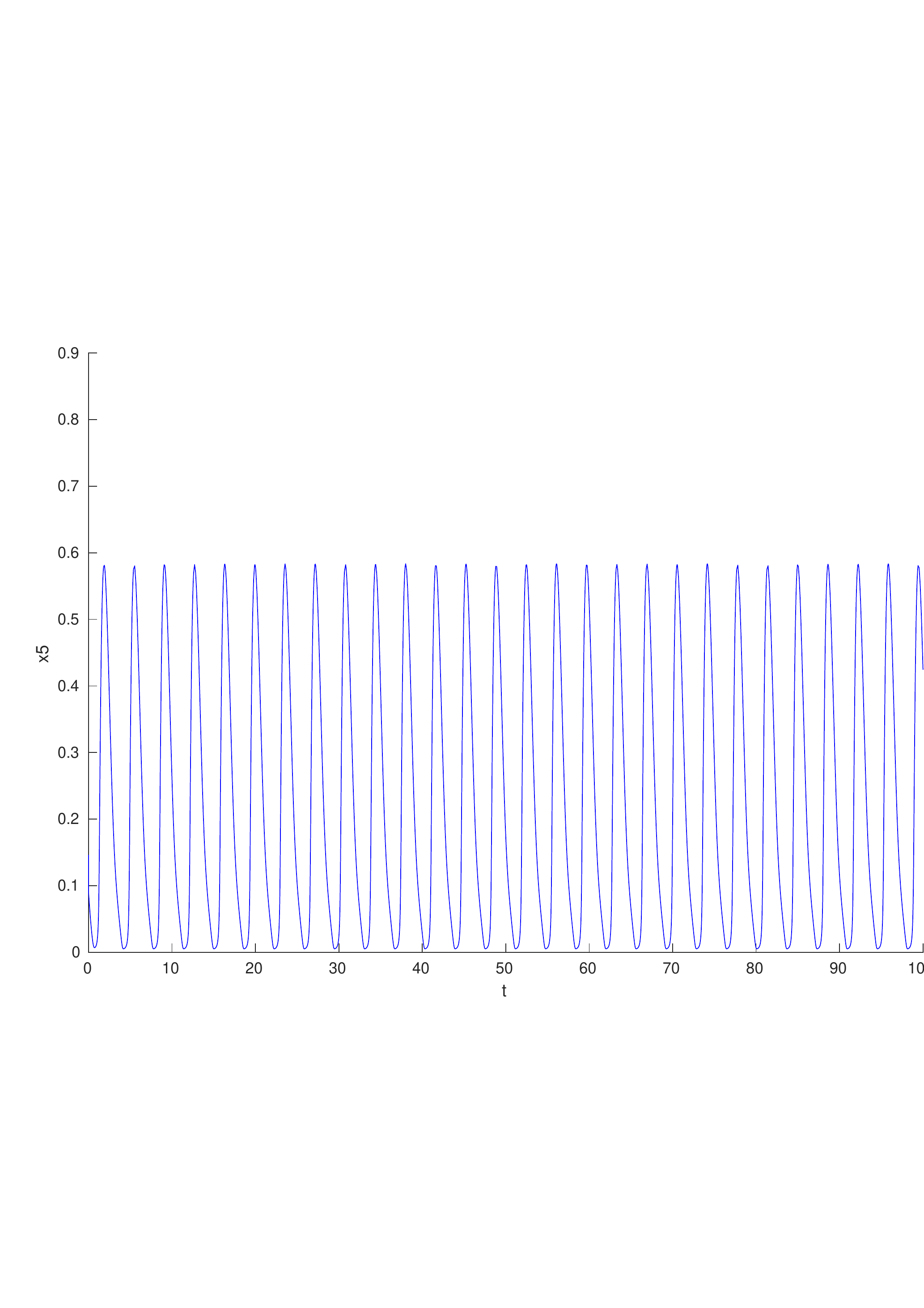}
    \end{minipage}%
    \vspace{-1in}
    \begin{minipage}[t]{.3\textwidth}
      \includegraphics[width=\textwidth]{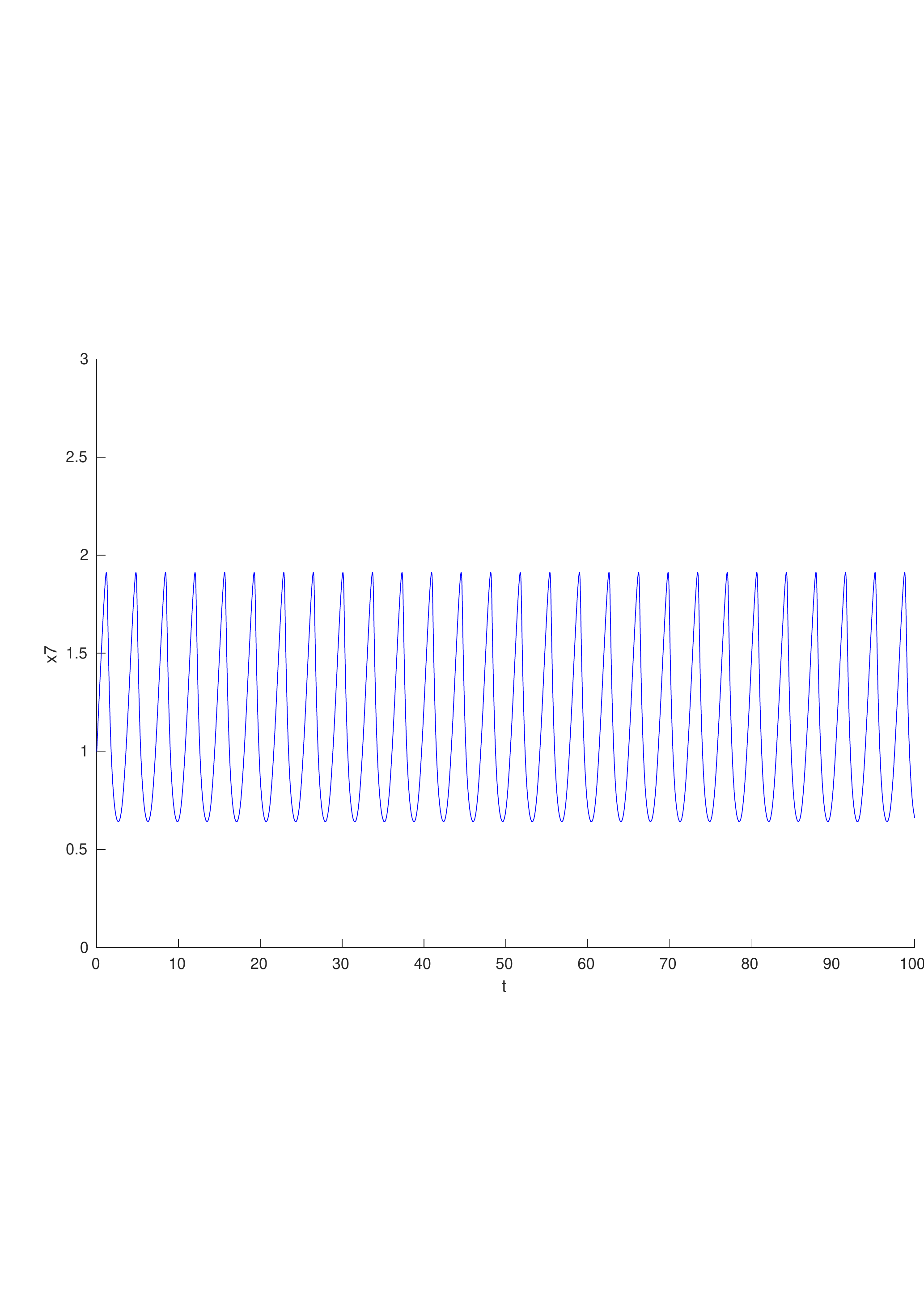}
    \end{minipage}%
    \begin{minipage}[t]{.3\textwidth}
      \includegraphics[width=\textwidth]{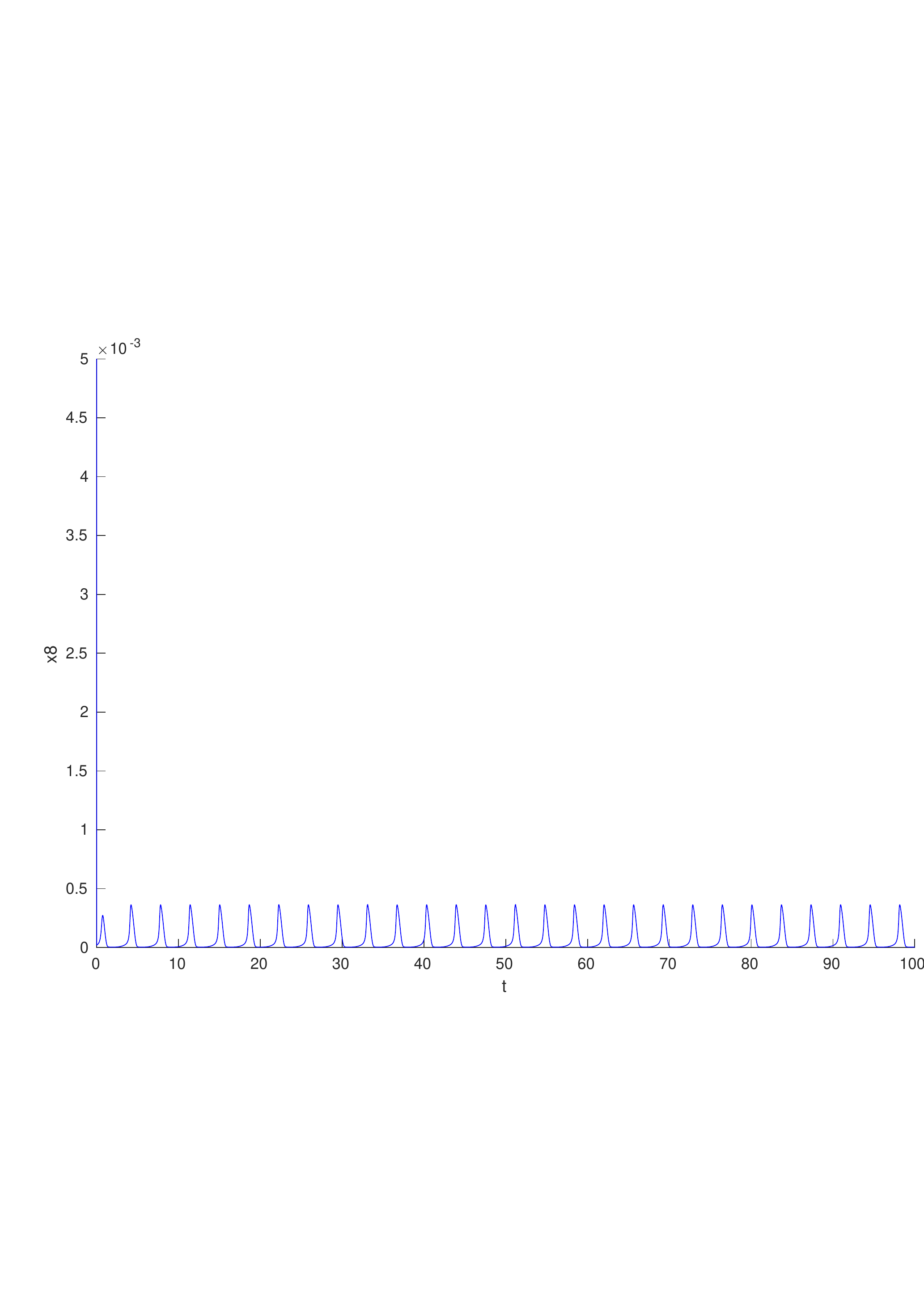}
    \end{minipage}%
    \begin{minipage}[t]{.3\textwidth}
      \includegraphics[width=\textwidth]{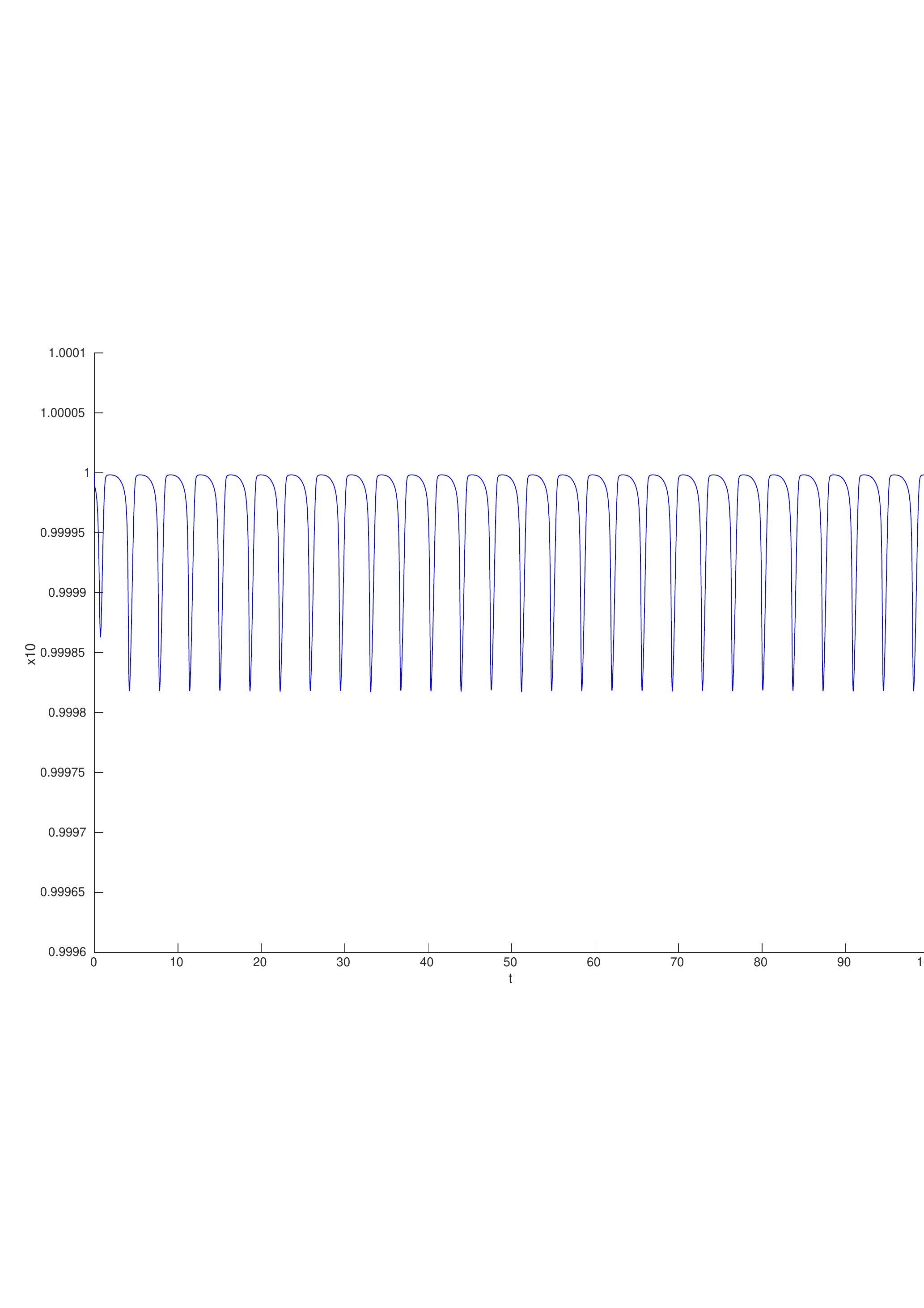}
    \end{minipage}
    \caption{The reduced ERK network exhibits oscillations 
    when the rate constants are 
    approximately those  
    in Theorem~\ref{thm:hopf} and the initial species concentrations are close to the Hopf bifurcation.
    Details are in the supplementary file~{\tt ERK-Matcont.txt}.
    This figure, generated using {\tt MATCONT}, displays all species concentrations, except $x_3, x_4, x_6,$ and $x_9$.}
    \label{fig:oscill-in-red-erk}
\end{figure}

\begin{proposition}[Hopf criterion for reduced ERK] \label{prop:hopfcriterion}
Consider the reduced ERK network, 
and let the  
polynomials $f_i$ denote the
right-hand sides of the resulting ODEs, as in~\eqref{eq:ODE-reduced}.
Let $\hat{\kappa}:=(\kcat, \koff, \loff)$ and $x:=(x_1, x_2, \dots, x_{10})$, and 
let $\phi$ be the steady-state parametrization~\eqref{eq:param-reduced}. 
Then the following is a univariate, degree-7 polynomial in $\lambda$, with coefficients in $\mathbb{Q}(x)[\hat{\kappa}]$:
 \begin{align} \label{eq:reduced-char-poly}
    q(\lambda) ~:=~
        \frac{1}{\lambda^3} ~ \det \left( \lambda I - {\rm Jac} (f) \right) |_{(\kappa; x)= \phi(\hat{\kappa}; x)} ~.
 \end{align}
Now let $\mathfrak{h}_i$, for $i=4,5,6$, denote the determinant of the $i$-th Hurwitz matrix of the polynomial~$q(\lambda)$ in~\eqref{eq:reduced-char-poly}. 
Then the following are equivalent:
\begin{enumerate}[(1)]
    \item there exists a rate-constant vector $\kappa^* \in \mathbb{R}^{10}_{>0}$ such that the resulting system~\eqref{eq:ODE-reduced} exhibits a simple Hopf bifurcation, with respect to $\kcat$, at some 
    $x^* \in \mathbb{R}^{10}_{>0}$, and
    \item there exist  $x^* \in \mathbb{R}^{10}_{>0}$ and $\hat{\kappa}^*\in \mathbb{R}^3_{>0}$ such that 
    \begin{align} \label{eq:hurwitz-hopf-conditions-reduced-ERK}
   \mathfrak{h}_4 (\hat{\kappa}^*; x^*) >&0~, ~ 
    \mathfrak{h}_5 (\hat{\kappa}^*; x^*) >0~, ~ 
    \mathfrak{h}_6 (\hat{\kappa}^*; x^*) =0~, ~{\rm and} ~\\ \notag
    \frac{\partial}{\partial \kcat }& \mathfrak{h}_6 (\hat{\kappa}; x) |_{(\hat{\kappa}; x)=(\hat{\kappa}^*; x^*)} \neq 0~.
    \end{align}
\end{enumerate}
Moreover, given $\hat{\kappa}^*$ and $x^*$ as in (2), a simple Hopf bifurcation with respect to $\kcat$ occurs at $x^*$ when the rate constants are taken to be $\kappa^*:= \widetilde{\pi} (\phi(\hat{\kappa}^*; x^*))$.  
Here, $\widetilde{\pi}:  \mathbb{R}_{>0}^{10}\times  \mathbb{R}_{>0}^{10} \to \mathbb{R}_{>0}^{10}$ is the natural projection to the first 10 coordinates.
 \end{proposition}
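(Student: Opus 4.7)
The plan is to apply Theorem~\ref{thm:hopf-criterion} directly to the reduced ERK network, taking as input the steady-state parametrization $\phi$ from Proposition~\ref{prop:param-reduced} and using $\kappa_j=\kcat$ as the bifurcation parameter. First, I would verify the hypotheses of Lemma~\ref{lem:reduced-char-poly-general} and Theorem~\ref{thm:hopf-criterion}. The reduced ERK network has $s=10$ species, $d=3$ conservation laws (from~\eqref{eq:cons-law-reduced}), and $m=10$ reactions. Proposition~\ref{prop:param-reduced} provides a positive parametrization of the form $\phi(\hat{\kappa};x)$ with $\hat{\kappa}=(\kcat,\koff,\loff)$, whose coordinates in~\eqref{eq:param-reduced-details} are rational functions of $(\hat{\kappa},x)$; moreover, the reparametrization map $\bar a$ in that parametrization is the identity, as observed in the proof of Proposition~\ref{prop:param-reduced} (the effective parameters are the original rate constants). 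Thus Lemma~\ref{lem:reduced-char-poly-general} applies.

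Second, Lemma~\ref{lem:reduced-char-poly-general} yields that $q(\lambda)$ in~\eqref{eq:reduced-char-poly} is a univariate polynomial in $\lambda$ of degree $s-d=7$ with coefficients in $\mathbb{Q}(\hat{\kappa};x)$. Since the only rate constants appearing after the substitution $(\kappa;x)=\phi(\hat{\kappa};x)$ are those in $\hat{\kappa}$, and since clearing denominators produces polynomials in $\hat{\kappa}$ with rational-function coefficients in $x$, the coefficients of $q(\lambda)$ in fact lie in the ring $\mathbb{Q}(x)[\hat{\kappa}]$, as claimed.

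Third, I would invoke Theorem~\ref{thm:hopf-criterion} with the bifurcation parameter $\kappa_j=\kcat$. Specialized to $s-d=7$, condition~(2) of that theorem requires the positivity of the constant term of $q(\lambda)$, the positivity of $\mathfrak{h}_i$ for $i=1,\dots,5$, the vanishing of $\mathfrak{h}_6$, and the nonvanishing of $\partial \mathfrak{h}_6/\partial \kcat$ at $(\hat{\kappa}^*;x^*)$. The equivalence between this condition and the existence of a simple Hopf bifurcation with respect to $\kcat$ (for some $\kappa^*$, at some $x^*$), as well as the explicit formula $\kappa^*=\widetilde{\pi}(\phi(\hat{\kappa}^*;x^*))$ for the rate-constant vector, follow immediately.

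The main subtlety that remains is to reconcile the abbreviated list of conditions in~\eqref{eq:hurwitz-hopf-conditions-reduced-ERK}, which only mentions $\mathfrak{h}_4>0$, $\mathfrak{h}_5>0$, and $\mathfrak{h}_6=0$, with the full list of Hurwitz and constant-term inequalities demanded by Theorem~\ref{thm:hopf-criterion}. I would either fold the remaining inequalities into the statement of condition~(2), or show that for the parametrized characteristic polynomial of the reduced ERK network the quantities $b_7,\mathfrak{h}_1,\mathfrak{h}_2,\mathfrak{h}_3$ are (after clearing denominators) sums of positive monomials in $(\hat{\kappa},x)$ and hence automatically positive on $\mathbb{R}^{3}_{>0}\times\mathbb{R}^{10}_{>0}$. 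The latter verification reduces to a finite symbolic computation on the parametrized Jacobian of~\eqref{eq:ODE-reduced}, and I expect this sign-certificate check to be the most delicate step, to be carried out in a computer algebra system such as \maple or \mathematica.
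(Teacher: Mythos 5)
Your proposal is correct and follows essentially the same route as the paper: invoke Lemma~\ref{lem:reduced-char-poly-general} and Theorem~\ref{thm:hopf-criterion} with the parametrization of Proposition~\ref{prop:param-reduced} and bifurcation parameter $\kcat$, then observe that the only gap between condition~(2) here and condition~(2) of Theorem~\ref{thm:hopf-criterion} is the positivity of the constant term of $q(\lambda)$ and of $\mathfrak{h}_1,\mathfrak{h}_2,\mathfrak{h}_3$, which must be certified on all of $\mathbb{R}^3_{>0}\times\mathbb{R}^{10}_{>0}$. The paper discharges exactly this last step by a symbolic computation (in the supplementary \maple file, where the relevant quantities turn out to be positive polynomials even before substituting the parametrization), which is the second of the two options you propose -- the first option (folding the extra inequalities into condition~(2)) would alter the statement and so is not available.
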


\begin{proof} 
The fact that $q(\lambda)$ is a degree-7 polynomial follows from Lemma~\ref{lem:reduced-char-poly-general}, and the fact that its coefficients are in $\mathbb{Q}(x)[\hat{\kappa}]$ follows from inspecting equations~\eqref{eq:ODE-reduced} and~\eqref{eq:param-reduced}.  The rest of the result will follow immediately from Theorem~\ref{thm:hopf-criterion} and Proposition~\ref{prop:param-reduced}, once we prove that $\mathfrak{h}_1$,
$\mathfrak{h}_2$,
$\mathfrak{h}_3$, and the constant term of $q(\lambda)$ are all positive when evaluated at any $(\hat{\kappa}; x) \in \mathbb{R}^3_{>0} \times \mathbb{R}^{10}_{>0}$.  Indeed, this is shown in the 
supplementary file
{\tt reducedERK-hopf.mw}.  (In fact, even before substituting the parametrization 
$(\kappa; x)=\phi(\hat{\kappa}; x)$, the corresponding Hurwitz determinants are already positive polynomials.)
\end{proof}

\begin{remark}
Note that $\kcat$ is the only free parameter, so it is the natural bifurcation parameter.
\end{remark}

We now prove that the reduced ERK network gives rise to a Hopf bifurcation.

\begin{theorem}[Hopf bifurcation in reduced ERK] \label{thm:hopf}
The reduced ERK network exhibits a simple Hopf bifurcation with respect to the bifurcation parameter $\kcat$ at the following point:
\[
x^*~\approx~ (
0.05952457867,~ 
0.002204614024,~ 
1,~ 
1,~ 
0.1518056972,~ 
1,~ 
1,~ 
0.00001239529511,~ 
1,~ 
1 
)~,
\]
when the rate constants are as follows:
\begin{align*}
\begin{split}
    (k^*_1, k^*_3, \kcat^*, \koff^*, m^*, \ell^*_1, \ell^*_3, \lcat^*, \loff^*, n^*)~\approx~
   & (5.562806640\times10^6,~ 
730, ~
729, ~
1, ~
453.5941390, ~
\\
&3.879519315\times10^8,~
730,~
729,~
1,~
80675.77183 
)
\end{split}
\end{align*}

Here, 
$\phi$ is the parametrization~\eqref{eq:param-reduced}, and 
$\widetilde{\pi}$ 
is the projection to the first 10 coordinates.
\end{theorem}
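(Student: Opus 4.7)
The plan is to apply Proposition~\ref{prop:hopfcriterion} to reduce the theorem to exhibiting a single point $(\hat{\kappa}^*; x^*) \in \mathbb{R}^3_{>0} \times \mathbb{R}^{10}_{>0}$ satisfying the Hurwitz conditions~\eqref{eq:hurwitz-hopf-conditions-reduced-ERK}. Once such a witness is in hand, the Hopf-bifurcation rate-constant vector $\kappa^* = \widetilde{\pi}(\phi(\hat{\kappa}^*; x^*))$ is read off from the explicit steady-state parametrization~\eqref{eq:param-reduced-details}, and the numerical values in the statement are then obtained by direct substitution. So the theorem reduces to the exhibition and verification of one numerical point.

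First, I would construct $q(\lambda)$ from~\eqref{eq:reduced-char-poly} in a computer algebra system, extract its eight coefficients $b_0,\dots,b_7\in\mathbb{Q}(x)[\hat{\kappa}]$, and assemble the four relevant Hurwitz matrices to form $\mathfrak{h}_4,\mathfrak{h}_5,\mathfrak{h}_6$ as elements of $\mathbb{Q}(x)[\hat{\kappa}]$. The polynomials are too large to inspect by hand, so all subsequent operations must be mechanized. I would also compute $\partial \mathfrak{h}_6/\partial \kcat$ symbolically for later use in the transversality check.

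Next, because the Hurwitz polynomials are very dense, I would impose a simplifying ansatz to make the search for the vanishing locus of $\mathfrak{h}_6$ tractable. Guided by the statement, I would fix $x_3=x_4=x_6=x_7=x_9=x_{10}=1$ and $\koff=\loff=1$, reducing $\mathfrak{h}_4,\mathfrak{h}_5,\mathfrak{h}_6$ to polynomials in only $(x_1,x_2,x_5,x_8,\kcat)$. Within this restricted family I would search for a positive zero of $\mathfrak{h}_6$ where $\mathfrak{h}_4$ and $\mathfrak{h}_5$ remain strictly positive. This is where the Newton-polytope technique advertised in the introduction is essential: by locating a vertex of $\newt(\mathfrak{h}_6)$ whose coefficient has sign opposite to some neighboring vertex's coefficient, one can rescale the variables $(x_1,x_2,x_5,x_8)$ so that the two corresponding monomials dominate all others and cancel. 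A numerical refinement starting from the scaling suggested by that vertex pair then converges to the witness $x^*$ printed in the statement, with the corresponding $\hat{\kappa}^* = (\kcat^*, 1, 1) = (729, 1, 1)$.

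Finally, with the witness in hand, verification is mechanical: substitute $(\hat{\kappa}^*; x^*)$ into $\mathfrak{h}_4,\mathfrak{h}_5,\mathfrak{h}_6$, and $\partial \mathfrak{h}_6/\partial\kcat$, and check numerically that the first two are positive, $\mathfrak{h}_6=0$, and the partial derivative is nonzero; then evaluate the parametrization formulas~\eqref{eq:param-reduced-details} at $(\hat{\kappa}^*; x^*)$ to obtain the explicit vector $\kappa^*$. The remaining positivity hypotheses of Proposition~\ref{prop:hopfcriterion}, namely positivity of $\mathfrak{h}_1,\mathfrak{h}_2,\mathfrak{h}_3$ and of the constant term of $q(\lambda)$ on $\mathbb{R}^3_{>0}\times\mathbb{R}^{10}_{>0}$, are already built into that proposition, so no extra verification is required. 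The main obstacle is the first step of the search: the Hurwitz determinant $\mathfrak{h}_6$ is a multivariate polynomial of substantial degree and support, and without the Newton-polytope guided scaling there is no efficient way to locate a positive zero while simultaneously preserving the sign constraints on $\mathfrak{h}_4$ and $\mathfrak{h}_5$; once that scaling is identified, however, both the search and the verification are routine and are carried out in the supplementary file \texttt{reducedERK-hopf.mw}.
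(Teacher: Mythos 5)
Your proposal is correct and takes essentially the same route as the paper: the published proof simply invokes Proposition~\ref{prop:hopfcriterion} and verifies the conditions~\eqref{eq:hurwitz-hopf-conditions-reduced-ERK} at the stated witness in the supplementary file \texttt{reducedERK-hopf.mw}, with the witness found exactly as you describe (specializing $\koff=\loff=1$ and six coordinates of $x$ to $1$, then applying the Newton-polytope method of Appendices~\ref{app:cones}--\ref{sec:appendix-using-cones-method}). The only cosmetic difference is that the paper first substitutes $y_i=1/x_i$ for $i=1,2,5,8$ to turn the Hurwitz determinants into genuine polynomials before running the polytope algorithm, a detail your sketch omits but which does not change the argument.
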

\begin{proof}
By Proposition~\ref{prop:hopfcriterion}, we need only show that 
the inequalities and equality in~\eqref{eq:hurwitz-hopf-conditions-reduced-ERK} are satisfied at
$x=x^*$ (with $x^*$ given in the statement of the theorem) and $\hat{\kappa}=\hat{\kappa}^*=(9,1,1)$.  
These are verified in the supplementary file 
{\tt reducedERK-hopf.mw}.
\end{proof}
 
\begin{remark} \label{rmk:point-to-appendix}
The Hopf bifurcation given in Theorem~\ref{thm:hopf} was found by analyzing the Newton polytopes of $\mathfrak{h_4}$, $\mathfrak{h_5}$, and $\mathfrak{h_6}$.  The theory behind this approach is presented in Appendix~\ref{app:cones}, and the steps we took to find the Hopf bifurcation are listed in Appendix~\ref{sec:appendix-using-cones-method}.  
We include these appendices for readers who wish to apply similar approaches to other systems.
\end{remark}

\subsection{Bistability} \label{sec:bistability}
Although the full ERK network is bistable~\cite{long-term}, we now prove that 
the reduced ERK network is not bistable (Proposition~\ref{prop:bistable}).
As for irreversible ERK networks, 
some of them are bistable, and 
we show that bistability is controlled by the two reactions $\kon$ and $\lon$ (Theorem~\ref{thm:bistable}).

\begin{proposition}\label{prop:bistable} 
The reduced ERK network is not multistationary, and hence not bistable. 
\end{proposition}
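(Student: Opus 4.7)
The plan is to apply part (B) of Proposition~\ref{prop:c-general} to the steady-state parametrization $\phi$ of the reduced ERK network given in Proposition~\ref{prop:param-reduced}. Since the reduced ERK network is conservative and admits no boundary steady states (Remark~\ref{rmk:boundary-conservative}), the hypotheses of Proposition~\ref{prop:c-general} are satisfied. Here the stoichiometric matrix $N$ has $s=10$ columns minus $d=3$ conservation laws, so $\mathrm{rank}(N) = 7$, and hence the condition for monostationarity becomes
\begin{equation*}
    \sign\bigl( C(\hat{\kappa}; x) \bigr) ~=~ (-1)^7 ~=~ -1 \qquad \text{for all } (\hat{\kappa}; x) \in \mathbb{R}_{>0}^{3} \times \mathbb{R}_{>0}^{10}~,
\end{equation*}
where $\hat{\kappa} = (\kcat, \koff, \loff)$ and $C(\hat{\kappa}; x) = \det \jac(\augmentH) \big|_{(\kappa;x) = \phi(\hat{\kappa}; x)}$.

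The key step is to compute $\det \jac(\augmentH)$ symbolically using the effective steady-state function $\augmentH$ in~\eqref{eq:effectivefunction_reduced}, with $\jac$ taken with respect to $x = (x_1, \dots, x_{10})$. Because $\augmentH$ consists of three conservation-law rows and seven binomial rows (each of which becomes affine in a single monomial after the parametrization is applied), the Jacobian matrix has a sparse combinatorial structure; in particular, the three conservation rows contribute only $0$/$1$ entries, and the seven binomial rows contribute at most two nonzero entries in each row, each a monomial in $x$ scaled by a rate constant or product thereof. After substituting $\phi$ from~\eqref{eq:param-reduced-details}, every rate constant is replaced by a positive rational monomial in $\hat{\kappa}$ and $x$, so $C(\hat{\kappa}; x)$ becomes a rational function in $\mathbb{Q}(\hat{\kappa}, x)$.

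I would then verify, using a computer algebra system, that the numerator of $C(\hat{\kappa}; x)$ expands as a polynomial in $\hat{\kappa}$ and $x$ all of whose monomial coefficients are negative (equivalently, $-C$ is a polynomial with positive coefficients over a positive monomial denominator). This is the standard route to certifying a sign condition on a large symbolic determinant: once displayed as a sum of monomials with a common sign, positivity of the variables forces the global sign. A supplementary Maple/Mathematica worksheet would perform and record this expansion. In parallel, monostationarity immediately implies non-bistability, which is the second conclusion of the proposition.

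The main obstacle will be the algebraic bookkeeping: the determinant of a $10 \times 10$ rational-function matrix can be unwieldy, and it is conceivable that after substitution one obtains an expression with mixed-sign monomials that only becomes single-sign after further rewriting (e.g., after pulling out a common factor, combining like terms, or grouping via a carefully chosen ordering). If the naive expansion does not yield a single-sign polynomial, the remedy is to exploit the binomial structure of the $h_{j_\ell}$'s: perform row operations that reflect the steady-state relations encoded by $\phi$ before taking the determinant, so that the intermediate symbolic expressions remain small and the final sign is transparent. Given the similar computations carried out successfully in~\cite{DPST,CFMW,mixed} for closely related phosphorylation networks, I expect this strategy to go through cleanly for the reduced ERK network.
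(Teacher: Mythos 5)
Your proposal follows essentially the same route as the paper's proof: apply Proposition~\ref{prop:c-general}(B) with the parametrization of Proposition~\ref{prop:param-reduced}, compute $\mathrm{rank}(N)=7$, and certify via a computer-algebra computation that the critical function is a rational function with positive monomial denominator and a numerator that is manifestly negative on the positive orthant (the paper exhibits it as a negative monomial prefactor times a sum of positive monomials, exactly the kind of factoring you anticipate as a fallback). The argument is correct and matches the paper's.
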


\begin{proof}
Let $\mathcal{N}$ denote the reduced ERK network.  
By definition and Proposition~\ref{prop:param-reduced}, we obtain the following critical function for $\mathcal N$:
\begin{equation} \label{eq:critical-function-proof}
 C(\hat a; x) \quad  = \quad \left(\det {\rm Jac}~\augmentH\right)|_{(a; x)=\phi(\hat a;  x)}~,
\end{equation}
where $\hat a = (\kcat, \koff, \loff)$, 
the function
$\augmentH$ is as in~\eqref{eq:effectivefunction_reduced}, 
and $\phi(\hat a; x)$ is as in~\eqref{eq:param-reduced}.

This critical function $ C(\hat a; x)$ (see the supplementary file 
{\tt reducedERK-noMSS.mw})
is a rational function,
where the denominator is the following monomial:
$x_1 x_2 x_3 x_5 x_6 x_7 x_8 x_9$. 
The numerator of $C(\hat a;  x)$  is the following polynomial, which is negative when evaluated at any $(\hat a; x) \in \mathbb{R}_{>0}^3\times \mathbb{R}_{>0}^{10}$:
\begin{align*}
    & -\kcat^3 (\kcat+\koff)^2 x_4^3 (\kcat x_4+\loff x_{10})^2 \loff \koff 
    (x_1 x_2 x_8 + x_1 x_3 x_8 + 
    x_1 x_4 x_8 +  x_{10} x_2 x_5 \\
    & \quad \quad 
    + x_{10} x_2 x_6 + x_{10} x_2 x_8 + x_2 x_3 x_8 + x_2 x_4 x_8 + x_2 x_5 x_8 + x_2 x_5 x_9 + x_2 x_6 x_8 + x_2 x_6 x_9 
     \\
    & \quad \quad 
    + x_2 x_7 x_8 
    + x_2 x_8 x_9 + x_3 x_7 x_8 + x_4 x_7 x_8 )~.
\end{align*}
Thus, the following holds 
for all $(\hat a; x) \in \mathbb{R}_{>0}^3\times \mathbb{R}_{>0}^{10}$:
\[ {\rm sign}(C(\hat a;  x))= -1 = (-1)^{{\rm rank}(N)}~,\]
where the final equality uses the fact that the stoichiometric matrix $N$ has rank $10-3=7$.

So, by Proposition~\ref{prop:c-general} and the fact that
$\mathcal N$ is conservative with no boundary steady states in any stoichiometric compatibility class (Remark~\ref{rmk:boundary-conservative}), $\mathcal N$ is monostationary.  Thus, $\mathcal N$ is {\em not} multistationary and so, by definition, is {\em not} bistable.
\end{proof}

Although the reduced ERK network is not bistable (Proposition~\ref{prop:bistable}),
the next result shows that irreversible versions of the full ERK network
are bistable, as long as one of the reactions labeled by $\kon$ and $\lon$ is present.  
That is, this result tells us which reactions can be safely deleted (in contrast to standard results 
concerning reactions that can be added) while preserving bistability.

\begin{theorem}[Bistability in irreversible ERK networks] \label{thm:bistable}
Consider any network $\mathcal N$ obtained from the full ERK network  
by deleting one or more of the reactions 
corresponding to rate constants $k_2, \kon, m_1, \ell_2, \lon, n_2$ (blue in Figure~\ref{fig:erk_sys}).
Then the following are equivalent:
\begin{enumerate}[(1)]
    \item $\mathcal N$ is multistationary,  
    \item $\mathcal N$ is bistable, and 
    \item $\mathcal N$ contains at least one of the reactions labeled by $\kon$ and $\lon$.
\end{enumerate}
  \end{theorem}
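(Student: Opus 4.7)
The plan is to establish the cycle of implications (2)$\Rightarrow$(1)$\Rightarrow$(3)$\Rightarrow$(2). The implication (2)$\Rightarrow$(1) is immediate from the definitions: two exponentially stable positive steady states are in particular two positive steady states in the same compatibility class.

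For (1)$\Rightarrow$(3) I argue the contrapositive. Suppose neither $\kon$ nor $\lon$ is in $\mathcal{N}$, i.e.\ $\mathds{1}_{\kon}=\mathds{1}_{\lon}=0$. In this case $\hat a$ collapses to $(a_2,a_4)$, the terms $\mathds{1}_{\kon}a_8 x_2 x_9$ and $\mathds{1}_{\lon}a_6 x_3 x_{10}$ in~\eqref{eq:effectivefunction_irrev} vanish, and the parametrization~\eqref{eq:param-irrev-details} reduces to explicit monomial ratios in the $x_j$'s. I then compute the critical function $C(\hat a;x)=\det({\rm Jac}\,h_{c,a})|_{(a;x)=\phi(\hat a;x)}$ symbolically, using the block structure of ${\rm Jac}\,h_{c,a}$ coming from the fact that $h_{c,a,i}$ for $i \geq 4$ is bilinear of very small support. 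The expected outcome is a rational function whose numerator is a sum of positive-coefficient monomials in $(\hat a;x)$ times $(-1)^{\mathrm{rank}(N)}$, so that $\mathrm{sign}(C(\hat a;x))=(-1)^{\mathrm{rank}(N)}$ on all of $\mathbb{R}_{>0}^{2}\times\mathbb{R}_{>0}^{12}$. By Remark~\ref{rmk:boundary-conservative} the hypotheses of Proposition~\ref{prop:c-general}(B) apply, so $\mathcal{N}$ is monostationary. Remark~\ref{rmk:only-depends-k-on-l-on} confirms that this conclusion depends only on the absence of $\kon$ and $\lon$, not on the other four reactions.

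For (3)$\Rightarrow$(2) I combine an inheritance result with two minimal witnesses. The observation is that each of the six reactions labeled $k_2,\kon,m_1,\ell_2,\lon,n_2$ is the reverse of a reaction that is always present in $\mathcal{N}$. Thus every network satisfying (3) is obtained from one of the two \emph{minimal} networks -- (a) the one with only $\kon$ among the six, and (b) the one with only $\lon$ -- by adding reverse reactions. By a standard inheritance-of-bistability result for the addition of reverse reactions (e.g.\ \cite{BP-inher, Joshi-Shiu-2013, banaji-inheritance}), it suffices to exhibit bistability in each of (a) and (b). For each minimal network I plan to apply Proposition~\ref{prop:c-general}(A) to locate a point $(\hat a^*;x^*)$ at which $\mathrm{sign}(C(\hat a^*;x^*))=(-1)^{\mathrm{rank}(N)+1}$, use $\phi$ to recover a rate-constant vector $\kappa^*$, then numerically solve for the two (or more) positive steady states in the corresponding stoichiometric compatibility class and verify exponential stability via Hurwitz determinants (or direct eigenvalue computation) of ${\rm Jac}\,f$ restricted to the stoichiometric subspace.

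The main obstacle lies in (3)$\Rightarrow$(2). The sign computation of $C$ in (1)$\Rightarrow$(3) is algebraically bulky but purely mechanical once the parametrization is substituted. In contrast, locating an explicit bistable parameter choice for each of cases (a) and (b), and verifying exponential stability at two distinct steady states in the same compatibility class, requires a targeted search -- plausibly guided by a Newton-polytope analysis of $C$ analogous to the one developed for the Hopf bifurcation of the reduced ERK network in Section~\ref{sec:oscillations} -- together with care in invoking an inheritance lemma that preserves \emph{stability}, not merely the existence of multiple steady states.
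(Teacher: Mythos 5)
Your proposal matches the paper's proof: the same implication cycle, the same contrapositive critical-function sign argument via Proposition~\ref{prop:c-general}(B) for (1)$\Rightarrow$(3), and the same strategy for (3)$\Rightarrow$(2) of exhibiting two exponentially stable steady states numerically in a minimal witness network and lifting bistability to $\mathcal N$ via an inheritance theorem for adding reverse reactions (the paper invokes \cite[Theorem 3.1]{Joshi-Shiu-2013}). The only difference is that the paper avoids checking both of your minimal cases (a) and (b) by using the symmetry exchanging $E$, $S_{00}$, $S_{01}$ with $F$, $S_{11}$, $S_{10}$ to reduce the $\lon$-only case to the $\kon$-only case, and it supplies an explicit rate-constant and total-constant witness rather than a Newton-polytope-guided search.
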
 
\begin{proof}
By definition, every bistable network is multistationary, so $(2)\Rightarrow (1)$. We therefore need only show $(1)\Rightarrow (3) \Rightarrow (2)$.  (All computations below are found in our supplementary file {\tt irreversibleERK.mw}).

For $(1)\Rightarrow (3)$, we will prove $\neg (3)\Rightarrow \neg (1)$: 
Assume that $\mathcal{N}$ contains neither the reaction labeled by $\kon$ nor the reaction $\lon$.  Our proof here is analogous to that of Proposition~\ref{prop:bistable}. 
By Proposition~\ref{prop:param-irrev}, we obtain a critical function, $ C(\hat a;  x)$, for $\mathcal N$ 
of the form~\eqref{eq:critical-function-proof}, where now  $\augmentH$ is as in~\eqref{eq:effectivefunction_irrev} (with $\mathds{1}_{\kon} = \mathds{1}_{\lon}=0$)
and $\phi(\hat a; x)$ is as in~\eqref{eq:param-irrev-details} (with $\hat a = (a_2,a_4)$).  

Here, $\det {\rm Jac}(\augmentH)$ is a rational function with denominator equal to $\koff x_2(n_2+n_3)\lcat l_3 k_3 m_3$, which is always positive.
The numerator is a polynomial of degree 5 in the variables $x_2,x_3$, and $x_9$ with coefficients that are always negative (see the supplementary file). 
The critical function 
 $ C(\hat a;  x)$
 is obtained by substituting the positive parametrization into 
$\det {\rm Jac}(\augmentH)$. 
Hence, 
for all $(\hat a; x) \in \mathbb{R}_{>0}^2\times \mathbb{R}_{>0}^{12}$, the equality ${\rm sign} (C(\hat a;  x)) = -1 = (-1)^{{\rm rank}(N)}$ holds, because the stoichiometric matrix $N$ has rank $12-3=9$.
So, by Proposition~\ref{prop:c-general} (recall from Remark~\ref{rmk:boundary-conservative} that $\mathcal N$ is conservative with no boundary steady states in any stoichiometric compatibility class), $\mathcal N$ is 
{\em not} multistationary.

Now we show $(3) \Rightarrow (2)$, that is, if $\mathcal{N}$ contains at least one of the reactions labeled by $\kon$ and $\lon$ then $\mathcal{N}$ is bistable.  
By symmetry (from exchanging in the network $E$, $S_{00}$, and $S_{01}$ with, respectively, $F$, $S_{11}$, and $S_{10}$), we may assume that $\mathcal N$ contains $\kon$. 

Consider the network $\mathcal{N}'$ obtained from the full ERK network
by deleting all reactions marked in blue in Figure~\ref{fig:erk_sys}, except for $\kon$ (equivalently, we set $k_2=m_1=\ell_2=\lon=n_2=0$).  
We will show that the following total constants and rate constants yield bistability: 
\begin{align} \label{eq:specific-params}
 & \quad \quad \quad \quad (c_1 ,~ c_2, ~ c_3) ~=~(46, 13,  13)~, \quad {\rm and}  \notag \\ 
   &(k_1,
    	k_3, 
	\kcat, 
	\kon, 
	\koff, 
    	{\ell}_1,
	{\ell}_3, 
	\lcat, 
	\loff, 
    	m_2, 
	m_3,
    	n_1,  
	n_3
    ) 
    ~=~ \\ \notag
    & \quad \quad  \left(    
  	2, 
    	1.1, 
	1,  
	5, 
	15, 
    	2, 
	1.1, 
	1, 
	10, 
    	20, 
	10, 
    	20, 
	10 
    \right )~. 
\end{align}


Among the resulting three steady states (see the supplementary file), one of them is approximately: 
\begin{align*}
\begin{tabular}{llllll}
 (20.72107755, & 0.2956877203, &3.248789181,& 7.821850626, &0.7821850626, &1.147175131, \\ 
 0.7821850626, & 0.7821850626, &0.1765542587, &1.322653950,& 11.13994215, &1.324191138)~.
\end{tabular}
\end{align*}

At the above steady state, the Jacobian matrix (of the system obtained from~\eqref{eq:ODE-full} by making the substitutions~\eqref{eq:specific-params} and $ k_2=m_1=\ell_2=\lon=n_2=0$) has three zero eigenvalues (due to the three conservation laws).  For the remaining eigenvalues, the real parts are approximately:
\begin{align*}
\begin{tabular}{lll}
-76.0913958200572,~&-70.7106617930401 ,~&-16.3022723748274,\\ 
-10.9324829878475,~ &-10.9324829878475,~&-8.81318904794782 ,\\ 
-4.88866989801728,~&-4.88866989801728 ,~&-0.0545784672515179 ,
\end{tabular}
\end{align*}
Thus, the nonzero eigenvalues have strictly negative real part, so the steady state is exponentially stable.



Another steady state is approximately
\begin{align*}
\begin{tabular}{llllll}
(0.1782157709, & 8.088440520, & 0.2275355904, &11.45336411,& 1.145336411,& 0.1737638914,\\ 
 1.145336411,& 1.145336411, &0.3818389270, &0.07080081803, &2.620886659, &27.68512059)~.
    \end{tabular}
\end{align*}
At this steady state, the real part of the eigenvalues of the Jacobian matrix of the system are, in addition to the three zero eigenvalues, approximately as follows: 
\begin{equation*}
\begin{tabular}{lll}
-163.308657649675, &-68.5596972162577 , &-57.0205793889569 , \\
-16.4435472947534, &-12.1029003142539, & -9.27515541335710, \\ 
-9.27515541335710, &-3.08709626767693 , & -0.209550347944487.
\end{tabular}
\end{equation*}

This steady state is also exponentially stable.  (A third steady state, not shown, is unstable.)
Hence, 
$\mathcal{N}'$ is bistable. 
%
Finally, as $\mathcal{N}'$ is a subnetwork obtained from $\mathcal N$ 
by making some reversible reactions irreversible, then by~\cite[Theorem 3.1]{Joshi-Shiu-2013},  bistability ``lifts'' from $\mathcal{N}'$ to $\mathcal N$.  Thus, $\mathcal N$ is bistable.
\end{proof}

We obtain the following immediate consequence of Theorem~\ref{thm:bistable}.

\begin{corollary} \label{cor:irrev-monostationary}
The fully irreversible ERK network is monostationary.
\end{corollary}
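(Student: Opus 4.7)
The plan is to derive this as a one-line consequence of Theorem~\ref{thm:bistable}. The fully irreversible ERK network, by definition, is obtained from the full ERK network by deleting all six of the blue reactions, which in particular includes both the reactions labeled $\kon$ and $\lon$. Thus condition (3) of Theorem~\ref{thm:bistable} fails for this network. Since that theorem provides the equivalence $(1) \Leftrightarrow (3)$, condition (1) must also fail, i.e., the fully irreversible ERK network is not multistationary.

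To finish, I would invoke the convention recorded in the footnote attached to the definition of ``monostationary'' in Section~\ref{sec:CRS}, which notes that ``not multistationary'' and ``monostationary'' (i.e., exactly one positive steady state in every stoichiometric compatibility class) coincide for the ERK networks considered in this work. Alternatively, one can observe that the proof of the $\neg(3) \Rightarrow \neg(1)$ direction of Theorem~\ref{thm:bistable} actually establishes the stronger monostationarity conclusion of Proposition~\ref{prop:c-general}(B): the critical function $C(\hat a; x)$ has constant sign $(-1)^{\mathrm{rank}(N)}$ on $\mathbb{R}_{>0}^{2} \times \mathbb{R}_{>0}^{12}$, and $\mathcal N$ is conservative without boundary steady states (Remark~\ref{rmk:boundary-conservative}), so exactly one positive steady state exists in every compatibility class.

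There is no real obstacle here, since the work has already been carried out inside the proof of Theorem~\ref{thm:bistable}; the only thing to verify is that the fully irreversible ERK network does fall under the hypothesis of the theorem (namely, it is obtained from the full ERK network by deleting a subset of the six blue reactions), which is immediate from its definition in Figure~\ref{fig:erk_sys}.
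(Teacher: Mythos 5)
Your proposal is correct and matches the paper's approach: the paper states this corollary as an immediate consequence of Theorem~\ref{thm:bistable}, exactly as you argue, with the gap between ``not multistationary'' and ``monostationary'' closed just as you describe (either by the footnote's convention or by noting that the $\neg(3)\Rightarrow\neg(1)$ direction of the theorem's proof already establishes the stronger conclusion of Proposition~\ref{prop:c-general}(B)). Nothing further is needed.
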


\section{Maximum number of steady states}
\label{sec:max-mumber}
In the previous section, 
we saw that the full ERK network and some irreversible ERK networks (those with $\kon$ or $\lon$) are bistable, admitting two stable steady states in a stoichiometric compatibility class. The question arises, Do these networks admit three or more such steady states?  We suspect not (Conjecture~\ref{conj:number-steady-states}).  

As a step toward resolving this problem, here we investigate the maximum number of positive steady states in ERK networks, together with some related measures we introduce, the maximum number of (non-boundary) complex-number steady states and the ``mixed volume''.  
The mixed volume is always an upper bound on the number of complex steady states (Proposition~\ref{prop:bounds}), but we show these numbers are equal for ERK networks (Proposition~\ref{prop:mixed-vol}).

\subsection{Background and new definitions} \label{sec:MV-bkrd}
Here we recall from~\cite{which-small} a network's maximum number of positive steady states, and then extend the definition to allow for complex-number steady states.

\begin{defn} \label{def:capacity}
A network \defword{admits $k$ positive steady states} 
(for some $k \in \mathbb{Z}_{\geq 0}$)
if there exists a choice of positive rate constants so that the resulting mass-action system~\eqref{sys}
has exactly $k$ positive steady states in some stoichiometric compatibility class~\eqref{eqn:invtPoly}.  
\end{defn}

In~\cite{which-small}, $k=\infty$ was allowed when there are infinitely many steady states in a stoichiometric compatibility class.  Here we do not allow $k=\infty$ so that we consider isolated roots only (as in Proposition~\ref{prop:bernstein} below).

\begin{defn} \label{def:complex-capacity}
Let $G$ be a network with $s$ species, 
$m$ reactions, and 
a $d \times s$ conservation-law matrix $W$, 
which results in the system augmented by conservation laws 
$f_{c,\kappa}$, as in~\eqref{consys}.  
The network $G$ \defword{admits $k$ steady states over $\mathbb{C}^*$} if 
    there exists a choice of positive rate constants 
    $\kappa \in \mathbb{R}^m_{>0}$
and a total-constant vector $c \in \mathbb{R}^d$ such that the system $f_{c,\kappa}=0$ has exactly $k$ solutions in $(\mathbb{C}^*)^s = (\mathbb{C} \setminus \{0\} )^s$.
\end{defn}

It is straightforward to check that Definition~\ref{def:complex-capacity} does not depend on the choice of $W$.

\begin{defn} \label{def:max}
   The \defword{maximum number of positive steady states} 
   (respectively, \defword{maximum number of steady states over $\mathbb{C}^*$})
   of a network $G$ is the maximum value of $k$ for which $G$ admits $k$ positive steady states (respectively, $k$ steady states over $\mathbb{C}^*$).
\end{defn}

Next we recall, from convex geometry, the concept of mixed volume, which we will apply to reaction networks.  For background on convex and polyhedral geometry (such as polytopes and Minkowski sums), we direct the reader to Ziegler's book~\cite{ziegler}. In particular, for 
a polynomial $f = b_1 x^{\sigma_1} + b_2 x^{\sigma_2} + \dots + b_{\ell} x^{\sigma_{\ell}} \in \mathbb{R}[x_1,x_2,\dots, x_s]~$, where 
the 
exponent vectors $\sigma_i \in \mathbb{Z}^s$ are distinct and 
$b_i \neq 0$ for all $i$, 
the \defword{Newton polytope} of $f$
is the convex hull of its exponent vectors:
$	{\rm Newt}(f) \coloneqq {\rm conv} \{\sigma_1,~ \sigma_2,~ \dots~ ,~ \sigma_{\ell}\} ~\subseteq~ \mathbb{R}^s.$

\begin{defn} 
Let $P_1, P_2, \ldots,P_s\subseteq \R^s$ be polytopes. 
The volume of the Minkowski sum $\lambda_1P_1+ \lambda_2 P_2+ \ldots+\lambda_s P_s$ is a homogeneous polynomial of degree $s$ in nonnegative variables $\lambda_1,\lambda_2,\ldots,\lambda_s$. In this polynomial, the coefficient 
of $\lambda_1 \lambda_2\cdots\lambda_s$, 
denoted by $\vol(P_1, P_2, \ldots,P_s)$, is the \defword{mixed volume} of $P_1,P_2,...,P_s$. 
\end{defn}
\noindent
The mixed volume counts the number of solutions in $(\C^*)^s$ of a generic polynomial system.

\begin{proposition}[Bernstein's theorem~\cite{bernstein}] \label{prop:bernstein} 
Consider $s$ real polynomials $g_1, g_2, \dots, g_s \in \mathbb{R}[x_1, x_2, \dots, x_s] $.
Then the number of isolated solutions in $(\C^*)^s$, counted with multiplicity, of the system 
$g_1(x) = g_2(x) = \cdots = g_s(x) = 0$ is at most $\vol(\newt(g_1), \ldots , \newt(g_s))$. 
\end{proposition}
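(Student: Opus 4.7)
The plan is to establish the bound by a polyhedral homotopy argument. The first step is to reduce to the generic case: the number of isolated $(\C^*)^s$-solutions (counted with multiplicity) of a polynomial system with prescribed supports $A_i := \supp(g_i) \subseteq \Z^s$ is upper semicontinuous in the nonzero coefficients $\{b_{i,\sigma}\}$. Hence it suffices to show that for a generic choice of coefficients on $A_1,\dots,A_s$, the number of isolated $(\C^*)^s$-solutions is exactly $\vol(\newt(g_1),\dots,\newt(g_s))$.

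For the generic case, I would choose generic lifting functions $\omega_i : A_i \to \R$ and form the one-parameter deformation
\begin{equation*}
G_i(x,t) \;:=\; \sum_{\sigma \in A_i} b_{i,\sigma}\, t^{\omega_i(\sigma)}\, x^\sigma, \qquad i=1,\dots,s, \quad t \in (0,1],
\end{equation*}
so that $G_i(x,1)=g_i(x)$. The lifted point configurations $\{(\sigma,\omega_i(\sigma)) : \sigma \in A_i\} \subseteq \R^s\times\R$, through projection of the lower envelope of their Minkowski sum, induce a fine mixed subdivision of $A_1+\dots+A_s$. Each \emph{mixed cell} $C_1+\dots+C_s$, in which every $C_i\subseteq A_i$ is a single edge and their direction vectors are linearly independent, has an associated inner normal $\nu\in\R^s$; substituting $x_j = t^{\nu_j} y_j$ and taking $t\to 0^+$ reduces the deformed system on that branch to a binomial system in $y$ whose number of $(\C^*)^s$-solutions equals $s!\,\vol(C_1,\dots,C_s)$ (computed via Smith normal form on the edge directions).

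Summing these contributions over all mixed cells recovers $\vol(\newt(g_1),\dots,\newt(g_s))$ by the additivity of mixed volume across a mixed subdivision, which gives the desired count for generic coefficients. The main obstacle is to show that the polyhedral homotopy accounts for every isolated $(\C^*)^s$-solution at $t=1$: one must prove that each such solution extends to a Puiseux branch in $t$ which, as $t\to 0^+$, specializes to a solution of exactly one of the binomial subsystems above, with no branches escaping to the boundary of an appropriate toric compactification along the way. This compactification/noncompactness control is the combinatorial core of Bernstein's theorem; once it is secured, the upper bound is immediate from the reduction to the generic case, and becomes equality precisely when the coefficient tuple avoids a certain Bernstein-degeneracy locus (no solutions at toric infinity), which is an open dense condition on the coefficients supported on $A_1,\dots,A_s$.
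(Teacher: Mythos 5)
The paper offers no proof of this statement: it is Bernstein's theorem, imported verbatim from the literature (\cite{bernstein}), so there is no in-paper argument to measure yours against. On its own terms, your sketch correctly identifies the standard Huber--Sturmfels polyhedral-homotopy route: reduce to generic coefficients on the given supports, deform by a generic lifting, read off binomial start systems from the mixed cells of the induced fine mixed subdivision, count their $(\C^*)^s$-roots by Smith normal form, and sum using the decomposition of the mixed volume into mixed cells.

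The problem is that you have explicitly left the central step unproven, and it is not a step that can be deferred: the claim that every isolated solution at $t=1$ lies on a Puiseux branch in $t$ that specializes, as $t\to 0^+$, to a root of exactly one binomial subsystem --- with no branch escaping to a coordinate hyperplane or to toric infinity --- \emph{is} Bernstein's theorem. Without it you have computed the number of start solutions but obtained no inequality in either direction for the target system, so the stated upper bound is not established; everything else in the sketch (local conservation of multiplicity, the binomial count, additivity of mixed volume over a subdivision) is standard bookkeeping surrounding that one hard argument. Two smaller corrections. First, the reduction to generic coefficients uses the inequality ``count for the special system $\leq$ count for a nearby generic system''; as a function of the coefficients the root count can only jump \emph{down} at special parameters, which is lower, not upper, semicontinuity --- the substance of what you invoke is right, but the label is reversed. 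Second, with the paper's normalization of mixed volume (the coefficient of $\lambda_1\cdots\lambda_s$ in the volume polynomial, so that $\vol(\Delta,\dots,\Delta)=1$ for the unit simplex), a mixed cell with edge directions $v_1,\dots,v_s$ contributes $\lvert\det(v_1,\dots,v_s)\rvert=\vol(C_1,\dots,C_s)$ roots, not $s!\,\vol(C_1,\dots,C_s)$; as written, your per-cell contributions would sum to $s!$ times the claimed bound. The extra $s!$ belongs to the other normalization convention, and you cannot use both simultaneously.
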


\begin{defn}\label{def:mv-crn}
Let $G$ be a network with $s$ species, 
$m$ reactions, and 
a $d \times s$ conservation-law matrix $W$, 
which results in the system augmented by conservation laws 
$f_{c,\kappa}$, as in~\eqref{consys}.  
Let $c^*\in \mathbb{R}^d_{\neq 0}$, and let 
$\kappa^* \in \mathbb{R}^m_{>0}$ be generic. 
Let $P_1, P_2, \ldots,P_s\subset \R^s$ be the Newton polytopes of $f_{c^*,\kappa^*,1},f_{c^*,\kappa^*,2}, \ldots, f_{c^*,\kappa^*,s}$, respectively. The \defword{mixed volume of $G$} (with respect to $W$)
is the mixed volume of $P_1,P_2,\ldots,P_s$.
\end{defn}
 A closely related definition is introduced and analyzed by Gross and Hill~\cite{gross-hill}.

\begin{remark}\label{ref:mv-well-defined}
The mixed volume (Definition~\ref{def:mv-crn}) is well defined.  Indeed, it is straightforward to check that the exponents appearing in $f_{c^*, \kappa^*}$ are the same as long as $c^* \in \mathbb{R}^d_{\neq 0}$ and $\kappa^*$ is chosen generically (so that no coefficients of $f_{c^*, \kappa^*}$ vanish, or equivalently certain linear combinations of the $\kappa_j$'s do not vanish).
\end{remark}

\subsection{Results} \label{sec:MV-results}
Every positive steady state is a steady state over $\mathbb{C}^*$.
Also, the mixed volume pertains to polynomial systems with the same supports (i.e., the exponents that appear in each polynomial) as the augmented system $f_{c,\kappa}=0$ (but without constraining the coefficients to come from a reaction network).  We obtain, therefore, the bounds in the following result:

\begin{proposition} \label{prop:bounds}
For every network, the following inequalities hold among the
maximum number of positive steady states, 
the maximum number of steady states over $\mathbb{C}^*$, and 
the mixed volume of the network (with respect to any conservation-law matrix):
\[
\mathrm{max~\#~of~positive~steady~states}
~\leq ~ 
\mathrm{max~\#~of~steady~states~over~} \mathbb{C}^*
~\leq ~
{\rm mixed~volume}~.
\]
\end{proposition}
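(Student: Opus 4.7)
For the first inequality, observe that $\mathbb{R}^s_{>0} \subseteq (\mathbb{C}^*)^s$, so every positive steady state is a steady state over $\mathbb{C}^*$. Let $k$ denote the maximum number of positive steady states, witnessed by rate constants $\kappa^*$ and a total-constant vector $c^*$. Then $f_{c^*, \kappa^*} = 0$ has at least $k$ solutions in $(\mathbb{C}^*)^s$. If these are finite in number, the conclusion follows immediately from the definition of the max \# of $\mathbb{C}^*$-steady states. Otherwise, I would perturb to generic $(\kappa', c')$ near $(\kappa^*, c^*)$; by Bernstein's theorem the generic complex-solution count is finite (since the mixed volume is finite), and by continuity of isolated roots at least $k$ of the positive steady states persist, reducing back to the finite case.

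For the second inequality, fix $(\kappa^*, c^*)$ realizing the max \# of $\mathbb{C}^*$-steady states, call it $k'$, which is finite by definition. Since $k'$ is finite, these solutions are isolated in $(\mathbb{C}^*)^s$, so Bernstein's theorem (Proposition~\ref{prop:bernstein}) yields
\[
k' \;\leq\; \vol\bigl(\newt(f_{c^*, \kappa^*, 1}),\, \ldots,\, \newt(f_{c^*, \kappa^*, s})\bigr).
\]
By Remark~\ref{ref:mv-well-defined}, for generic $(\kappa, c)$ with $\kappa \in \mathbb{R}^m_{>0}$ and $c \in \mathbb{R}^d_{\neq 0}$ no coefficient of $f_{c, \kappa, i}$ vanishes; for the specific pair $(\kappa^*, c^*)$, coefficients can only vanish, so $\newt(f_{c^*, \kappa^*, i}) \subseteq \newt(f_{c, \kappa, i})$ for each $i$. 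Monotonicity of mixed volume under componentwise inclusion of polytopes---a classical convex-geometric fact---then gives that the right-hand side is at most the mixed volume of the network, completing the proof.

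The main obstacle I anticipate is the perturbation step in the first inequality: one must argue that, even when $(\kappa^*, c^*)$ cuts out a positive-dimensional complex variety, generic nearby parameters still admit at least $k$ positive steady states. This is handled by standard semicontinuity in real algebraic geometry -- the locus where the complex variety has positive dimension is a proper closed subvariety of parameter space, and the implicit function theorem guarantees persistence of isolated positive roots under small perturbation; alternatively, one can pass through a generic real-analytic path $(\kappa^* + t\Delta\kappa,\, c^* + t\Delta c)$ and use continuous dependence of simple roots. A secondary subtlety is that $c^*$ may have zero coordinates, so the Newton polytope of $(Wx - c^*)_k$ is strictly smaller than in the generic case (it loses the origin vertex); this is harmless because the containment $\newt(f_{c^*,\kappa^*,i}) \subseteq \newt(f_{c,\kappa,i})$ is all that monotonicity of mixed volume requires. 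The monotonicity property itself, while a standard fact, is nontrivial; I would cite it from a standard convex-geometry reference rather than reprove it.
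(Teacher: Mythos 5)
Your argument follows the same route as the paper's proof, which is essentially a single sentence: the first inequality holds because $\mathbb{R}^s_{>0}\subseteq(\mathbb{C}^*)^s$, and the second follows from Bernstein's theorem (Proposition~\ref{prop:bernstein}) applied to the augmented system. You are more careful than the paper on two points it silently elides. First, the parameters witnessing the maximum number of positive steady states could cut out a positive-dimensional complex variety, in which case they witness nothing for the $\mathbb{C}^*$ count; second, the parameters witnessing the $\mathbb{C}^*$ count need not be generic, so their Newton polytopes may be strictly contained in the generic ones, and one needs monotonicity of mixed volume under componentwise inclusion. Your treatment of the second point is correct. Your patch for the first point is not fully airtight as written: the implicit function theorem guarantees persistence only of \emph{nondegenerate} isolated roots, and a degenerate isolated positive steady state can vanish under perturbation (compare $(x-1)^2=0$ with $(x-1)^2+\epsilon=0$). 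Since the paper's proof does not address this case at all, your argument is a refinement of, not a regression from, the published one; to make it rigorous, either observe that the inequality is immediate whenever the optimal witness already has a finite complex zero set (which is all the paper uses), or justify persistence of the positive roots without assuming nondegeneracy.
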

\begin{proof}
This result follows from Proposition~\ref{prop:bernstein} and Definitions~\ref{def:capacity}--\ref{def:max}.
\end{proof}

We investigate the numbers in Proposition~\ref{prop:bounds} for ERK networks in the following result.

\begin{proposition} \label{prop:mixed-vol}
Consider four ERK networks: the full ERK network, the full ERK network with the reaction $\kon$ removed, the fully irreversible network, and the reduced network.
For these networks, the following numbers (or bounds on them) are given in Table~\ref{tab:number}: 
the maximum number of positive steady states, 
the maximum number of steady states over $\mathbb{C}^*$, and
the mixed volume of the network (with respect to the consveration laws 
\eqref{eq:conservation} or~\eqref{eq:cons-law-reduced}).
\end{proposition}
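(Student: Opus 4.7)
The plan is to establish the entries of the referenced table for each of the four ERK networks by combining three kinds of computation: a polyhedral computation of the mixed volume, a numerical algebraic-geometry computation of the number of complex-valued steady states, and (where not already known) a multistationarity analysis via the critical function. Throughout, I would use the augmented systems $f_{c,\kappa}$ arising from Proposition~\ref{prop:param-irrev} and~\ref{prop:param-reduced} (with the conservation laws~\eqref{eq:conservation} or~\eqref{eq:cons-law-reduced}), so that the number of positive roots of $f_{c,\kappa}=0$ agrees with the number of positive steady states in the compatibility class defined by $c$.

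First, for the mixed volume, I would extract the Newton polytopes $P_1,\dots,P_s$ of the coordinates of $f_{c^*,\kappa^*}$ for a generic $\kappa^*$ and any $c^*\in\mathbb{R}^d_{\neq 0}$, exactly as prescribed in Definition~\ref{def:mv-crn} (Remark~\ref{ref:mv-well-defined} guarantees the supports do not depend on the choice). The mixed volumes $\vol(P_1,\dots,P_s)$ for all four networks can then be evaluated with \polymake (or the \texttt{MixedVolume} routines in \bert or \sage); this is essentially a routine combinatorial computation once the supports are recorded. Second, by Proposition~\ref{prop:bernstein}, the mixed volume is an upper bound on the number of isolated solutions in $(\mathbb C^*)^s$ of $f_{c,\kappa}=0$ for any $c,\kappa$, so it also bounds the maximum number of steady states over $\mathbb C^*$. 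To show that the bound is actually attained, I would use a polyhedral homotopy (for instance in \bert or \texttt{HomotopyContinuation.jl}) to solve the system at specific, well-chosen values of $c$ and $\kappa$ and count the isolated $(\mathbb C^*)^s$-solutions; reaching the mixed-volume count certifies equality.

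Third, for the maximum number of positive steady states, much is already in hand: the reduced ERK network is monostationary by Proposition~\ref{prop:bistable}, and the fully irreversible ERK network is monostationary by Corollary~\ref{cor:irrev-monostationary}, so both contribute a $1$ in the corresponding column. For the full ERK network and for the ``$\kon=0$'' variant, I would count positive real solutions among the complex solutions produced by the numerical run above, and corroborate by sampling rate constants around the bistability witness of Theorem~\ref{thm:bistable} and around the parameters of Rubinstein \emph{et al.}~\cite{long-term}. The remaining entries would then be filled in, leaving open only the conjectured exact maximum of $3$ for the full network, which the paper flags as a conjecture.

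The main obstacle will be the certification step in the second part: the Newton polytopes for the twelve-variable augmented systems are large and rather skew, so the mixed-volume computation itself is heavy, and more importantly, exhibiting a $(c^*,\kappa^*)$ whose system has exactly $\vol(P_1,\dots,P_s)$ isolated $(\mathbb C^*)^s$-solutions requires either a lucky choice of parameters or a genericity argument (e.g., verifying that for random real $\kappa^*$ all Bernstein-bound solutions are attained and none coalesce or lie on a coordinate hyperplane). I would address this by running a polyhedral-homotopy continuation at several random parameter vectors, checking that each terminates with the expected number of nonzero endpoints and no path failures, and complementing this with trace tests to certify the count. Once these computational certificates are assembled, the proposition follows by collating the three columns network-by-network, with Proposition~\ref{prop:bounds} guaranteeing consistency of the three bounds.
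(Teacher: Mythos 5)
Your proposal is correct and follows essentially the same route as the paper: compute the mixed volumes by software, invoke Proposition~\ref{prop:bounds} for the upper bound on steady states over $\mathbb{C}^*$, certify attainment by exhibiting explicit parameter values whose augmented systems have the full count of solutions in $(\mathbb{C}^*)^s$, and fill in the positive-steady-state column from Proposition~\ref{prop:bistable}, Corollary~\ref{cor:irrev-monostationary}, Theorem~\ref{thm:bistable}, and the known count for the full network. The only difference is cosmetic (choice of software and your added trace-test certification, which the paper does not perform).
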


\begin{table}[ht]
\begin{center}
\begin{tabular}{lccc}
\hline
 ERK &  Max \# & Max \# & Mixed\\
network                                & positive steady states & over $\mathbb C^*$ & volume\\
\hline
Full      & $\geq 3$                        & 7    & 7                            \\
Full with $\kon=0$ & $\geq 3$ & 5 & 5 \\ 
Fully irreversible                             & 1                                    & 3  & 3                            \\
Reduced                                    & 1                                    & 3 & 3    \\                       \hline
\end{tabular}
\caption{Results on ERK networks. \label{tab:number} }
\end{center}
\end{table}

\begin{proof}
The results on the mixed volume were computed using the {\tt PHCpack}~\cite{phcpack} package in {\tt Macaulay2}~\cite{M2}. See the supplementary file {\tt ERK-mixedVol.m2}.  

The mixed volume is an upper bound on the maximum number of steady states over $\mathbb{C}^*$ (Proposition~\ref{prop:bounds}), so we need only show that each network admits the number 
shown in Table~\ref{tab:number} for steady states over $\mathbb{C}^*$.

The full ERK network admits 7 steady states over $\mathbb{C}^*$ (including 3 positive steady states)~\cite[Example 3.18]{DPST}.  
Next, we consider the remaining three networks (see the supplementary file {\tt ERK-MaxComplexNumber.nb}).

For the full ERK network with $k_{on}=0$, 
 when $(c_1 , c_2,  c_3) =(1, 2,  3)$ 
 and 
$(k_1, k_2,
    $ 
    $
    	k_3, 
	\kcat, 
	\kon, 
	\koff, 
    	{\ell}_1,
    	{\ell}_2,
	{\ell}_3, 
	\lcat, 
	\lon,
	\loff, 
	m_1,
    	m_2, 
	m_3,
    	n_1, 
    	n_2,
	n_3
    ) 
    =
  (    
  	3, 
  	25,  
    	1, 
	5,  
	0, 
	6, 
    	5, 
    	23, 
	11, 
	13, 
    $ 
    $
	  43, 
	41, 
	12, 
    	7, 
	8, 
    	12, 
    	31, 
	21 
     ),$
we obtain $5$ steady states over $\mathbb{C}^*$, three real and one complex-conjugate pair, which are approximately as follows:
{\footnotesize 
\begin{align*}
\begin{tabular}{llll}
(21.7475,  1.97705,& 2.40601,  2.64849, &
  0.760404,  0.564871,& -24.1306, -0.973762,  \\
  -2.51373, -0.28488, &  -7.81077, -18.495), &&\\
  (5.4105 + 14.8132 i, &
  0.491864 + 1.34665 i,  &
  1.97942 - 3.45492 i,  &
  1.66315 - 1.90055 i, \\
  0.189178 + 0.517943 i, &
  0.140532 + 0.384758 i, &
  -5.88178 - 12.7049 i,  &
  1.00714 + 0.997852 i, \\
 1.13283 + 0.533085 i, &
  0.470121 + 0.662785 i, & 
  -9.72843 - 0.81303 i, &
  -0.749157 - 12.0899 i), \\
  (5.4105 - 14.8132 i, &
 0.491864 - 1.34665 i, & 
 1.97942 + 3.45492 i, &
  1.66315 + 1.90055 i, \\
   0.189178 - 0.517943 i, &
   0.140532 - 0.384758 i, &
   -5.88178 + 12.7049 i, &
  1.00714 - 0.997852 i, \\
  1.13283 - 0.533085 i, &
  0.470121 - 0.662785 i, &
  -9.72843 + 0.81303 i, &
   -0.749157 + 12.0899 i)\\
    (9.63546, 0.875951, &
  -0.488295, 0.0430355, &
 0.336904,  0.250272,  &
  -8.02311,  2.36979,\\
   0.45764,  0.173889, &
   -10.4083,  0.123488), & \text{and} &\\
    ( 0.163415,  0.0148559,&
  0.00111949, 0.00756688,  &
  0.00571382, 
   0.00424455, &
   1.82061, 2.98247, \\
   0.00616705,
  0.00175686,& 0.777908, 0.0172524). &&
\end{tabular}
  \end{align*}}

For the fully irreversible ERK network, 
when $(c_1 , c_2, c_3) =(1, 2,  3)$ and 
$(k_1, 
    	k_3, 
	\kcat, 
    $ 
    $
	\koff, 
    	{\ell}_1,
	{\ell}_3, 
	\lcat, 
	\loff, 
    	m_2, 
	m_3,
    	n_1, 
	n_3
    ) 
    =
     (    
  	3, 
  		1, 
	5,  
	6, 
    	5, 
    11, 
	13, 
	41, 
		7, 
	8, 
    	12, 
	21 
     )$,
there are 3 steady states over $\mathbb{C}^*$, all real, 
with approximate values:
{\footnotesize
\begin{align*}
\begin{tabular}{llllllll}
(14.199, & 1.29082, & 2.5444, &  2.43721, & 
   0.496468,&  0.368805,&  -16.0342,&  -0.302478,\\
  -2.13373, &
   -0.181355, & -0.295181, & 
   -17.7264),& &&&\\
   ( 0.490202,& 0.0445638, & 0.0878422,&
    0.0841415, & 0.0171399, & 0.0127325,&  1.37739, &
  2.88599, \\
  0.00772073, &
  0.0728849, & 0.118631, & 
   0.0641415), &\text{and} &&&\\  
(1.9419, & 0.176536, & 0.34798, & 
   0.33332, &  0.0678986,& 0.050439, &-0.466416,&  
   2.54834, \\
   0.0346375, &
   -0.852654, &    -1.38782, & 
  0.287758). &&&&
  \end{tabular}
   \end{align*}
}  
For the reduced ERK network, 
  let $(c_1 , c_2,  c_3) = (1, 2,  3)$ and 
$(k_1,
    	k_3, 
	\kcat, 
	\koff, 
	m,
	n,
    	{\ell}_1,
    $ 
    $
	{\ell}_3, 
	\lcat, 
	\loff
	) 
   =
    (  	3, 
    	4, 
	1,  
	5, 
	6,8,
    	7, 
    11, 
	12, 
	5 
	 )$. 
We obtain $3$ steady states over $\mathbb{C}^*$, all real, which are approximately:
{\footnotesize
\begin{align*}
\begin{tabular}{llllllll}
( -0.843105, & -37.1185, & 23.4711, &  15.6474, & 
  -9.92245, &-30.6429,&  -0.0292745,& -0.319149, \\
  2.0152, & 1.30395), &&&&&& \\
 (0.314129, & 1.4361, & 
   0.338341, & 0.22556, & 0.015463, & 0.0477534, &
   0.0109073, & 2.95215,  \\
   0.0290494, &  0.0187967), & \text{and} &&&&&\\
(-2.47545,& -0.954967, & 1.77298, &
  1.18199,&  0.087009, & 0.268704,& -0.0859532, & 
  2.74928,\\
  0.152226, & 0.0984989).&&&&&&
  \end{tabular}
   \end{align*} }

Finally, we examine the maximum number of positive steady states.  We already saw that the fully irreversible and reversible networks are monostationary (Corollary~\ref{cor:irrev-monostationary} and Proposition~\ref{prop:bistable}, respectively).  
For the ``partially irreversible'' network, we saw in the proof of Theorem~\ref{thm:bistable} that it admits 3 positive steady states.  As for the full network, as noted above, 3 positive steady states were shown in~\cite[Example 3.18]{DPST}. 
\end{proof}

Table~\ref{tab:number} suggests that the mixed volume is a measure of the complexity of a network.  The full ERK network is multistationary, and its mixed volume is 7. The mixed volume drops to 5 when $\kon=0$. When the network is further simplified to the fully irreversible, or even to the reduced ERK network, the mixed volume becomes 3, and bistability is lost as well. 

Finally, we conjecture that the bounds in Table~\ref{tab:number} 
are strict, and ask about stability.
\begin{conjecture} \label{conj:number-steady-states}
For the full ERK network and the full ERK network with $\kon = 0$, the maximum number of positive (respectively, positive stable) steady states is 3 (respectively, 2).
\end{conjecture}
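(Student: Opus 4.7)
The plan is to combine the steady-state parametrization of Proposition~\ref{prop:param-irrev} with elimination-theoretic methods for the upper bound on positive steady states, and a Brouwer-degree argument for the upper bound on positive stable steady states. The lower bound of~$3$ positive steady states is established in~\cite[Example~3.18]{DPST} for the full ERK network and is implicit in the proof of Proposition~\ref{prop:mixed-vol} (together with Theorem~\ref{thm:bistable}) for the full network with $\kon = 0$, so the content of the conjecture is the two upper bounds.

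For the upper bound of~$3$ on positive steady states, I would first apply the parametrization $\phi$ of Proposition~\ref{prop:param-irrev} to encode positive steady states, for fixed rate constants~$\kappa$ and conservation totals~$c$, as positive solutions $(\hat a; x)$ of the combined system $\kappa = \widetilde\pi(\phi(\hat a; x))$ and $Wx = c$. Solving the rational relations for~$\hat a$ in terms of~$x$ and~$\kappa$ and substituting into $Wx = c$ produces a polynomial system purely in the twelve $x$-coordinates. Applying Gr\"obner bases or iterated resultants to eliminate eleven of the twelve variables yields a univariate polynomial $P(z;\kappa,c)\in\mathbb{Q}[\kappa,c][z]$ of degree at most~$7$ (respectively~$5$ when $\kon = 0$) whose positive real roots are in bijection with the positive steady states in $\scc_c$. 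The goal is to show that $P(\,\cdot\,;\kappa,c)$ has at most~$3$ positive real roots for every $(\kappa,c)\in \R^{m+3}_{>0}$. The most natural tool is Descartes' rule of signs combined with case analysis on sign patterns of the coefficients; when Descartes is too loose, one can invoke Sturm sequences, Khovanskii-style fewnomial bounds, or an explicit analysis of the discriminant of~$P$ in parameter space, checking that each crossing of the discriminant changes the positive-root count by~$\pm 2$ and that $3$ is never exceeded on any connected region of the nondiscriminant locus.

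For the upper bound of~$2$ on positive stable steady states, I would use the following Brouwer-degree argument. By Remark~\ref{rmk:boundary-conservative} each network is conservative and lacks boundary steady states, so each stoichiometric compatibility class~$\scc_c$ is a compact convex polytope of dimension $\sigma = \rank(N) = 9$, and the vector field $f|_{\scc_c}$ points inward on its boundary. Hence the sum of fixed-point indices of the positive steady states in~$\scc_c$ equals the Euler characteristic of~$\scc_c$, which is~$+1$. At a nondegenerate steady state~$x^*$ the index equals $(-1)^\sigma \sign\det\jac(f)|_\St(x^*)$, and exponential stability forces this quantity to be~$+1$. Consequently, in a compatibility class with exactly three nondegenerate positive steady states, the three indices in $\{+1,-1\}$ must sum to~$1$, so the multiset of indices is $\{+1,+1,-1\}$ and at most two of the three steady states can be stable. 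Combined with the bound of~$3$ on the number of positive steady states, this gives the upper bound of~$2$ on the number of positive stable steady states.

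The main obstacle is the third step of the first argument: showing, uniformly in $(\kappa,c)$, that the elimination polynomial~$P$ has at most three positive real roots. Although elimination is routine in principle, the coefficients of~$P$ will be polynomials of substantial degree in many variables, and Descartes' rule of signs may significantly overcount. A hybrid strategy is likely needed: first verify numerically via \texttt{PHCpack} or \texttt{Bertini} across a dense random sample that four or more positive steady states never arise, and then close the argument symbolically through a careful stratification of the discriminant locus of~$P$, possibly with the help of the Newton-polytope and cone techniques developed here for the Hopf bifurcation. A fully parameter-free proof may require a change of variables tailored to the combinatorial structure of the ERK cascade.
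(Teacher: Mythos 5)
The statement you are proving is labeled a \emph{conjecture} in the paper: the authors do not prove it, and the only things actually established there are the lower bounds (3 positive steady states are attained for the full network by \cite[Example 3.18]{DPST}, and for the variants via Theorem~\ref{thm:bistable} and lifting; 2 stable ones are attained in the proof of Theorem~\ref{thm:bistable}) together with the upper bounds of 7 and 5 on steady states over $\mathbb{C}^*$ via mixed volumes (Proposition~\ref{prop:mixed-vol}). So there is no paper proof to compare against, and the relevant question is whether your proposal closes the gap between 3 and 7 (respectively 3 and 5). It does not: you yourself flag the upper bound of 3 on positive steady states as ``the main obstacle,'' and the strategy you offer for it --- eliminate to a univariate $P(z;\kappa,c)$ of degree at most 7 and control its positive roots by Descartes, Sturm sequences, or a stratification of the discriminant locus --- is a research program, not an argument. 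Descartes' rule applied to a degree-7 eliminant with coefficients of unknown sign pattern gives no bound better than 7 without a concrete computation, the claimed bijection between positive roots of $P$ and positive steady states in $\scc_c$ needs justification (eliminants can acquire real roots that do not lift to real points of the variety, and non-generic projections can collapse distinct solutions), and a ``dense random numerical sample'' via \texttt{PHCpack} or \texttt{Bertini} is evidence, not proof. Since everything else in your write-up is conditional on this bound, the proposal as a whole does not establish the conjecture.

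That said, your second paragraph contains a correct and worthwhile observation: conditional on the count being at most 3, the bound of 2 on exponentially stable steady states follows from a degree argument. This is essentially the machinery already underlying Proposition~\ref{prop:c-general} (via \cite{CFMW,DPST}): for a conservative network with no boundary steady states the Brouwer degree of $f_{c,\kappa}$ with respect to $\scc_c$ is $(-1)^{\rank(N)}$, each nondegenerate steady state contributes $\mathrm{sign}\,\det \jac(f_{c,\kappa})(x^*) \in \{+1,-1\}$, and an exponentially stable one contributes exactly $(-1)^{\rank(N)}$; with at most three steady states summing to $(-1)^{\rank(N)}$, at most two can carry that sign. You should phrase it through the degree of $f_{c,\kappa}$ rather than through ``the vector field points inward on the boundary,'' which is not literally what absence of boundary steady states gives you, and you should handle the case of a degenerate steady state among the three (harmless, since degenerate steady states are not exponentially stable, but it breaks the clean $\{+1,-1\}$ index bookkeeping). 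In short: the stable-count reduction is a sound conditional lemma, but the heart of the conjecture --- ruling out 4 through 7 positive steady states --- remains open in your proposal exactly as it does in the paper.
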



\section{Discussion} \label{sec:discussion}
Phosphorylation plays a key role in cellular signaling networks, such as 
{\em mitogen-activated protein kinase (MAPK) cascades}, which enable cells to make decisions (to differentiate, proliferate, die, and so on)~\cite{chang-karin}.  
This decision-making role of MAPK cascades suggests that they exhibit switch-like behavior, i.e., bistability.  Indeed, bistability in such cascades has been seen in experiments \cite{bistab-cascade, mapk-bistable}.  Oscillations also have been observed~\cite{yeast-mapk-oscillations,oscillations-mapk-cancer}, hinting at a role in timekeeping. Indeed, multisite phosphorylation is the main mechanism for establishing the 24-hour period in eukaryotic circadian clocks~\cite{ode,beat}.

These experimental findings motivated the questions we pursued.  Specifically, we investigated robustness of oscillations and bistability in models of ERK regulation by dual-site phosphorylation.  
Bistability, we found, is quickly lost when reactions are made irreversible.  
Indeed, bistability is characterized by the presence of two specific reactions.  
Oscillations, in contrast, persist even as the network is greatly simplified.  Indeed, we discovered oscillations in the reduced ERK network. 
Moreover, this network has 
the same number of reactions (ten) as the mixed-mechanism network which Suwanmajo and Krishnan surmised ``could be the simplest enzymatic modification scheme that can intrinsically exhibit oscillation''~\cite[\S 3.1]{SK}.  Our reduced ERK network, therefore, may also be such a minimal oscillatory network.

Returning to our bistability criterion (Theorem~\ref{thm:bistable}),
recall that this result elucidates which reactions can be safely {\em deleted}
while preserving bistability -- in contrast to standard results 
concerning reactions that can be {\em added}~\cite{BP-inher, FeliuWiuf, Joshi-Shiu-2013}.  
We desire more results of this type, so we comment  
on how we proved our result.  The key was the special form of the steady-state parametrization.
In particular, following \cite{DPST}, our parametrizations allow both species concentrations and rate constants to be solved (at steady state) in terms of other variables.  
Additionally, a single parametrizations 
specialized (by setting rates to zero for deleted reactions) 
to obtain parametrizations for a whole family of networks.
Together, these properties gave us access to new information on how bistability is controlled. 
We are interested, therefore, in the following question: {\em Which networks admit a steady-state parametrization that specializes for irreversible versions of the network?}  

Our results on oscillations were enabled by new mathematical approaches to find Hopf bifurcations.  Specifically, building on~\cite{mixed}, we gave a Hopf-bifurcation criterion for networks admitting a steady-state parametrization.  Additionally, we successfully applied this criterion to the reduced ERK network by analyzing the Newton polytopes of certain Hurwitz determinants.  We expect these techniques to apply to more networks.

Finally, our work generated a number of open questions.  First,
what are the mixed volumes of irreversible versions of the ERK network (beyond those shown in Table~\ref{tab:number})?  
In particular, is there a mixed-volume analogue of our bistability criterion, which is in terms of the reactions $\kon$ and $\lon$? 
And, what is the maximum number of (stable) steady states in the full ERK network (Conjecture~\ref{conj:number-steady-states})?
Progress toward these questions will yield further insight into robustness of bistability and oscillations in biological signaling networks.




\subsection*{Acknowledgements}
{\footnotesize
NO, AS, and XT were partially supported by the NSF (DMS-1752672).
AT was partially supported by the Independent Research Fund Denmark. 
The authors thank 
Elisenda Feliu for insightful comments on an earlier draft, 
and thank
Carsten Conradi, 
Elizabeth Gross, 
Cvetelina Hill, 
Maya Mincheva, 
Stanislav Shvartsman, 
Frank Sottile, 
Elise Walker, 
and Timo de Wolff for helpful discussions.
This project was initiated while AT was a visiting scholar at Texas A\&M University, and while XT was hosted by ICERM. We thank Texas A\&M University and ICERM for their hospitality.
}

\appendix
\section{Files in the Supporting Information} \label{app:supp-files}
Table~\ref{tab:supp-files} lists the 
files in the Supporting Information, and the result/proof each file supports. 
All files can be found at the online repository: \url{https://github.com/neeedz/ERK}.

\begin{table}[ht]
\centering
\begin{tabular}{lll}
\hline
Name & File type & Result   \\
\hline
{\tt ERK-Matcont.txt}     &         text file with {\tt MATCONT} instructions
    & Figures~\ref{fig:oscill-in-irr-erk} and~\ref{fig:oscill-in-red-erk}       \\
{\tt irreversibleERK.mw}     &         {\tt Maple}
    & Theorem~\ref{thm:bistable}         \\
{\tt reducedERK-noMSS.mw}     &         {\tt Maple}
    &  Proposition~\ref{prop:bistable}        \\
{\tt reducedERK-hopf.mw}     &         {\tt Maple}
    &   Theorem~\ref{thm:hopf}       \\
{\tt reducedERK-cones.sws}     &         {\tt Sage}
    &  Theorem~\ref{thm:hopf}        \\
{\tt ERK-mixedVol.m2}     &         {\tt PHCPack}
    &  Proposition~\ref{prop:mixed-vol}        \\
{\tt ERK-MaxComplexNumber.nb}   &         {\tt Mathematica}                 
    & Proposition~\ref{prop:mixed-vol} \\
\hline
\end{tabular}
\caption{Supporting Information files and the results they support. \label{tab:supp-files}}
\end{table}

\section{Newton-polytope method}\label{app:cones}
Here we show how analyzing the Newton polytopes of two polynomials can reveal whether there is a positive point at which one polynomial is positive and simultaneously the other is zero (Proposition~\ref{app_prop:cones_method} and Algorithm~\ref{alg:cones}).  In Appendix~\ref{sec:appendix-using-cones-method}, we show how we used this approach, which we call the Newton-polytope method, to find a Hopf bifurcation leading to oscillations in the reduced ERK network (in Theorem~\ref{thm:hopf}).

\begin{notation} \label{notation:NP}
Consider a polynomial $f = b_1 x^{\sigma_1} + b_2 x^{\sigma_2} + \dots + b_{\ell} x^{\sigma_{\ell}} \in \mathbb{R}[x_1,x_2,\dots, x_s]$, where 
the exponent vectors $\sigma_i \in \mathbb{Z}_{\geq 0}^s$ 
are distinct and $b_i \neq 0$ for all $i$.
A vertex $\sigma_i$ of $\newt(f)$, 
the Newton polytope of $f$, 
is a \defword{positive vertex} (respectively, \defword{negative vertex}) 
if the corresponding monomial of $f$ is positive, i.e., $b_i >0$ (respectively, $b_i<0$). 
Also, $N_{f}(\sigma)$ denotes the \defword{outer normal cone of the vertex $\sigma$} of $\newt(f)$, i.e., the cone generated by the outer normal vectors to all supporting hyperplanes of $\newt(f)$ containing the vertex $\sigma$.  Finally, for a cone $C$, let $\mathrm{int}(C)$ denote the relative interior of the cone.
\end{notation}

For an extensive discussion on polytopes and normal cones, see \cite{ziegler}.

\begin{proposition} 
\label{app_prop:cones_method}
Let $f,g \in \R[x_1,x_2,\ldots x_s]$. 
Assume that $\alpha$ is a positive vertex of ${\rm Newt}(f)$, 
$\beta_+$ is a positive vertex of ${\rm Newt}(g)$, and $\beta_-$ is a negative vertex of 
${\rm Newt}(g)$.
Then, if 
$\mathrm{int} (N_{f}(\alpha)) \cap \mathrm{int} (N_g(\beta_+)) $
and 
$\mathrm{int}(N_{f}(\alpha)) \cap \mathrm{int} (N_g(\beta_-))$ are both nonempty,
then there exists $x^* \in \mathbb{R}^s_{>0}$ such that $f(x^*)>0$ and $g(x^*)=0$.
\end{proposition}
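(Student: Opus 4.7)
The plan is to exploit the standard principle that for $v$ in the relative interior of the outer normal cone $N_f(\alpha)$, substituting $x_i = e^{v_i t}$ causes the monomial $b_\alpha x^\alpha$ to dominate all other monomials of $f$ as $t \to \infty$; in particular, $\mathrm{sign}(f(e^{vt})) = +1$ for all sufficiently large $t$. Choose, using the nonemptiness hypotheses, vectors $v_+ \in \mathrm{int}(N_f(\alpha)) \cap \mathrm{int}(N_g(\beta_+))$ and $v_- \in \mathrm{int}(N_f(\alpha)) \cap \mathrm{int}(N_g(\beta_-))$. Since $\mathrm{int}(N_f(\alpha))$ is convex (the relative interior of a convex cone), the segment $v(\lambda) := (1-\lambda) v_- + \lambda v_+$ for $\lambda \in [0,1]$ lies entirely in $\mathrm{int}(N_f(\alpha))$. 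Define a continuous path $\gamma_t : [0,1] \to \mathbb{R}^s_{>0}$ by $\gamma_t(\lambda) := (e^{v_1(\lambda) t}, \dots, e^{v_s(\lambda) t})$.

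The crux is to select a single parameter $T$ forcing $f$ to be positive along the entire path $\gamma_T([0,1])$, while $g$ takes opposite signs at the two endpoints. Writing $f = b_\alpha x^\alpha + \sum_{\sigma_i \neq \alpha} b_i x^{\sigma_i}$, factor
\[
f(\gamma_t(\lambda)) \;=\; e^{\langle v(\lambda),\,\alpha\rangle t}\Bigl( b_\alpha + \sum_{\sigma_i \neq \alpha} b_i\, e^{-\epsilon_i(\lambda) t} \Bigr),
\]
where $\epsilon_i(\lambda) := \langle v(\lambda),\, \alpha - \sigma_i\rangle$. Because $v(\lambda) \in \mathrm{int}(N_f(\alpha))$ for every $\lambda$, each function $\epsilon_i$ is continuous (in fact affine) and strictly positive on the compact segment $[0,1]$, so $\epsilon := \min_{\lambda,\, \sigma_i \neq \alpha} \epsilon_i(\lambda) > 0$ is achieved. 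This makes the parenthesized sum exceed $b_\alpha/2 > 0$ for all $t \geq T_0$ uniformly in $\lambda$, where $T_0$ depends only on $\epsilon$, $b_\alpha$, and an upper bound on the $|b_i|$. The same monomial-dominance argument applied to $g$ at the two endpoints produces $T_1, T_2$ beyond which $g(\gamma_t(0)) < 0$ and $g(\gamma_t(1)) > 0$, respectively; set $T := \max(T_0, T_1, T_2)$.

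With this $T$ fixed, the map $\lambda \mapsto g(\gamma_T(\lambda))$ is continuous on $[0,1]$ and changes sign, so the intermediate value theorem furnishes $\lambda^* \in (0,1)$ with $g(\gamma_T(\lambda^*)) = 0$; by construction $f(\gamma_T(\lambda^*)) > 0$ and $\gamma_T(\lambda^*) \in \mathbb{R}^s_{>0}$, so $x^* := \gamma_T(\lambda^*)$ is the desired witness. The only delicate point is the uniform-in-$\lambda$ positivity of $f$ along the path; this is handled cleanly by the compactness of $[0,1]$ together with continuity of the linear functions $\epsilon_i(\lambda)$, and I do not anticipate further obstacles.
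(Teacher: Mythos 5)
Your proof is correct and follows essentially the same route as the paper's: choose $v_\pm$ in the two cone intersections, connect them by a segment inside $\mathrm{int}(N_f(\alpha))$, use compactness of $[0,1]$ to get uniform domination of the $\alpha$-term of $f$ along the path for large $t$, and apply the intermediate value theorem to $g$ along that path. The only cosmetic differences are that the paper uses the substitution $x_i = t^{w_i}$ rather than $e^{v_i t}$ and packages the argument as a constructive algorithm (doubling $T$ and minimizing $g^2$ over the segment), but the underlying argument is identical.
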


To prove Proposition~\ref{app_prop:cones_method} we use the following well-known lemma and its proof. 

\begin{lemma} \label{lem:NP}
For a
real, multivariate polynomial 
$	f = b_1 x^{\sigma_1} + b_2 x^{\sigma_2} + \dots + b_{\ell} x^{\sigma_{\ell}} 
		\in \mathbb{R}[x_1,x_2,\dots, x_s],$
if $\sigma_i$ is a positive vertex (respectively, negative vertex) of ${\rm Newt}(f)$,
then  there exists $x^* \in \mathbb{R}^s_{>0}$ such that $f(x^*)>0$ (respectively, $f(x^*)<0$).
\end{lemma}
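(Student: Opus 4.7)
The plan is to exploit the standard fact that a vertex of the Newton polytope corresponds to a monomial which dominates all others along a suitable one-parameter family of positive points in $\mathbb{R}^s_{>0}$.

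More precisely, here are the steps I would carry out. First, since $\sigma_i$ is a vertex of $\newt(f)$, the outer normal cone $N_f(\sigma_i)$ is full-dimensional, so I can pick a weight vector $w \in \mathrm{int}(N_f(\sigma_i)) \subseteq \mathbb{R}^s$. By the defining property of the outer normal cone at a vertex, this $w$ strictly separates $\sigma_i$ from the other exponent vectors, i.e.\
\[
    \langle w, \sigma_i \rangle ~>~ \langle w, \sigma_k \rangle \quad \text{for all } k \neq i.
\]
Second, I parametrize a ray in $\mathbb{R}^s_{>0}$ by $x(t) := (t^{w_1}, t^{w_2}, \dots, t^{w_s})$ for $t > 0$. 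Note that $x(t)$ is indeed in the positive orthant for every $t > 0$, regardless of the signs of the $w_j$'s, since a positive real raised to any real power is positive.

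Third, I evaluate $f$ along this ray. Using $x(t)^{\sigma_k} = t^{\langle w, \sigma_k \rangle}$, we obtain
\[
    f(x(t)) ~=~ \sum_{k=1}^{\ell} b_k\, t^{\langle w, \sigma_k \rangle}~.
\]
Factoring out the dominant power $t^{\langle w, \sigma_i \rangle}$ yields
\[
    t^{-\langle w, \sigma_i \rangle}\, f(x(t)) ~=~ b_i ~+~ \sum_{k \neq i} b_k\, t^{\langle w, \sigma_k \rangle - \langle w, \sigma_i \rangle}~,
\]
and by the strict inequality from Step~1, every exponent appearing in the sum on the right is strictly negative. Hence the right-hand side tends to $b_i$ as $t \to \infty$. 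Consequently, for all sufficiently large $t > 0$, the sign of $f(x(t))$ is the sign of $b_i$: positive if $\sigma_i$ is a positive vertex, and negative if $\sigma_i$ is a negative vertex. Choosing $x^* := x(t)$ for any such large $t$ completes the proof.

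There is essentially no obstacle here; the only small technical point is to observe that the ray $t \mapsto x(t)$ stays in $\mathbb{R}^s_{>0}$ even when some coordinates $w_j$ are negative, which is immediate from $t > 0$. Everything else is a one-line tropical-style dominance argument.
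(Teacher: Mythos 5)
Your argument is correct and is essentially the same as the paper's: both choose a weight vector $w$ in the interior of the outer normal cone $N_f(\sigma_i)$, substitute the monomial curve $t \mapsto (t^{w_1},\dots,t^{w_s})$, and observe that the term $b_i t^{\langle w,\sigma_i\rangle}$ dominates for large $t$, so the sign of $f$ along the curve equals $\operatorname{sign}(b_i)$. Your write-up just makes the dominance step slightly more explicit by factoring out $t^{\langle w,\sigma_i\rangle}$.
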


\begin{proof}
Let $\sigma_i$ be a vertex of ${\rm Newt}(f)$.
Pick $w=(w_1,w_2,\ldots,w_s)$ in the relative interior of the outer normal cone $ N_f(\sigma_i)$, which exists because $\sigma_i$ is a vertex.
Then, by construction, the linear functional 
$\langle w, - \rangle$ is maximized over the exponent-vectors $\sigma_1, \sigma_2, \dots, \sigma_{\ell}$ at $\sigma_i$.
Thus, we have the following univariate ``polynomial with real exponents'' in $t$:
\[
f(t^{w_1},t^{w_2},\ldots,t^{w_s})
 ~=~
 b_1 t^{\langle w, \sigma_1\rangle} + 
 b_2 t^{\langle w, \sigma_1\rangle} + 
    \dots + 
 b_{\ell} t^{\langle w, \sigma_{\ell} \rangle}
 ~=~
 b_i t^{\langle w, \sigma_i \rangle} + \textrm{lower-order~terms}~.
\]
%
So, for $t$ large, 
$\sign(f(t^{w_1},t^{w_2},\ldots,t^{w_s}))=\sign(b_i)$. 
Note that $(t^{w_1},t^{w_2},\ldots,t^{w_s})\in \R^s_{>0}$. 
\end{proof}

Our proof of Proposition~\ref{app_prop:cones_method} is constructive, through the following algorithm, where we use the notation
$f_w(t):=f(t^{w_1}, t^{w_2},\ldots,t^{w_s})$, for $t\in \R$ and $w=(w_1, w_2, \dots, w_s)\in\R^s$.
\begin{algorithm}[H] \label{alg:cones}
\SetKwInOut{Input}{input}\SetKwInOut{Output}{output}
\SetKwInOut{Return}{return}

\Input{polynomials $f,g$, and vertices $\alpha, \beta_+, \beta_-$, as in Proposition~\ref{app_prop:cones_method}
}
\Output{a point $x^*\in \R^s_{>0}$ s.t. $f(x^*)>0$ and $g(x^*)=0$}
 define $C_0\coloneqq  {\rm int}(N_{f}(\alpha)) \cap {\rm int}( N_g(\beta_+))$ and $C_1:={\rm int}(N_{f}(\alpha)) \cap {\rm int}(N_g(\beta_-))$\;
 pick $\ell =(\ell_1,\ell_2,\ldots,\ell_s)\in C_0$ and $m=(m_1,m_2,\ldots,m_s)\in C_1$\;
 define $f_{\ell}(t):=f(t^{\ell_{1}},t^{\ell_{2}},\ldots,t^{\ell_{s}})$; define $f_{m}(t)$; define $g_{\ell}(t)$; define $g_m(t)$\;
 define $\tau_\ell:=\inf\{t^*\in\R_{> 0}\ |\ f_{\ell}(t)> 0\mathrm{\ and\ } g_{\ell}(t)> 0 \mbox{ for all } t>t^*$\}\;
 define $\tau_m:=\inf\{t^*\in\R_{> 0}\ |\ f_{m}(t)> 0\mathrm{\ and\ } g_{m}(t)< 0 \mbox{ for all } t>t^*\}$\;
 define $T:=\max \{\tau_\ell,\tau_m\}+1$\; 
 define $h(r):=f_{r \ell+(1-r) m}(T)$\;
 \While{$\min\{h(r)\ |\ r\in [0,1]\}\leq 0$}{
  $T:=2T$\;
  $h(r):=f_{r \ell+(1-r) m}(T)$\;
 }
 define $r^*:=\argmin\{\left(g_{r\cdot \ell +(1-r)m}(T)\right)^2\ |\ r\in[0,1]\}$  (pick one $r^*$ if there are multiple)\;
 \Return{$T^{r^*\ell +(1-r^*)m} 
 := \left(T^{r^*\ell_1 +(1-r^*)m_1}, T^{r^*\ell_2 +(1-r^*)m_2}, \ldots , T^{r^*\ell_s +(1-r^*)m_s}\right)$
 }
 \caption{Newton-polytope method}
\end{algorithm}

\begin{proof}[Proof of Proposition~\ref{app_prop:cones_method}]
Let $a_+x^{\alpha}$ be the term of $f$ corresponding to the vertex $\alpha$ of $\newt(f)$, and similarly 
let $b_+x^{\beta_+}$ (respectively, $b_-x^{\beta_-}$)
be the term of $g$ corresponding to the vertex $\beta_+$ 
(respectively, $\beta_-$)
of $\newt(g)$. Thus, $a_+>0$, $b_+>0$, and $b_-<0$.
Let $\{a_1,a_2, \ldots ,a_{d}\}\subseteq \mathbb{R}$ denote the remaining set of coefficients of $f$, so that $f= a_+ x^{\alpha} + 
(a_1 x^{\sigma_1}+  a_2 x^{\sigma_2}+ \dots + a_d x^{\sigma_d})$, for some exponent vectors $\sigma_i \in \mathbb{Z}_{\geq 0}^s$.

\underline{Algorithm~\ref{alg:cones} terminates}: 
First, $\ell$ and $m$ in line 2 exist by hypothesis.
Also, $\tau_{\ell}$ and $\tau_m$ in lines 4--5 exist by the proof of Lemma~\ref{lem:NP} and by construction.
Next, $\min h(r)$ in line 8 exists because $h$ is a continuous univariate function defined on a compact interval. 

By construction and because cones are convex, the vector $r \ell+(1-r) m$, 
which is a convex combination of $\ell$ and $m$,
is in the relative interior of $N_{f}(\alpha)$ 
 for all $r\in [0,1]$.  
Thus, $\langle r \ell + (1-r) m, ~\alpha - \sigma_i \rangle >0$ for all $i=1,2,\dots, d$ and for all $r \in [0,1]$.  This (together with a straightforward argument using continuity and compactness) implies the following:
\[
    \delta~:=~
    \inf_{r \in [0,1]} \min_{i=1,2,\dots, d}
    \langle r \ell + (1-r) m, ~ \alpha - \sigma_i \rangle 
    ~>~0.
\]
Next, let $\beta:= \inf_{r\in[0,1]} \langle r \ell + (1-r) m, ~ \alpha  \rangle$.  
Then, for all $r \in [0,1]$ and $t>0$, 
\begin{align} \label{eq:bound} \notag
    f_{r \ell+(1-r) m}(t) ~&=~
    a_+ t^{\langle r \ell+(1-r) m), \alpha \rangle}
    +
    \left( a_1 t^{\langle r \ell+(1-r) m), \sigma_1 \rangle}
    +
    \dots +
    a_d t^{\langle r \ell+(1-r) m), \sigma_d \rangle} \right) \\
    ~&>~
    a_+ t^{\beta} 
    -
    (|a_1|+|a_2|+ \dots + |a_d|) t^{\beta - \delta} ~=:~ \widetilde{f}(t)~.
\end{align}
In $\widetilde{f}(t)$, 
the term $a_+ t^{\beta}$ dominates the other term, for $t $ large, 
so there exists $T^*>0$ such that 
$\widetilde{f}(t)\geq 0$ when $t \geq T^*$.  
So, by~\eqref{eq:bound}, the while loop in line 8 ends when $T\geq T^*$ (or earlier). 



\underline{Algorithm~\ref{alg:cones} is correct}: 
For $T$ fixed, the minimum of $\psi(r):=\left(g(T^{r\ell+(1-r)m})\right)^2$ over 
the compact set $[0,1]$ is attained, because $\psi$ is continuous.  Next we show that 
this minimum value is 0, or equivalently that for 
$\chi(r) := g(T^{r \ell +(1-r)m})$ 
there exists some $r^*\in (0,1)$ such that 
$\chi(r^*) 
=0$. 
Indeed, this follows from the Intermediate Value Theorem, 
because $\chi$ is continuous,  
$\chi(0) 
=g(T^m)<0$ (because $T > \tau_m$), 
and
$\chi(1) 
=g(T^{\ell})>0$ (because $T>\tau_{\ell})$. 

Finally, the inequality $f(T^{r^*\ell +(1-r^*)m})>0$ holds by construction of $T$, so defining 
$x^*:=T^{r^*\ell +(1-r^*)m} \in \R^s_{>0}$ yields the desired vector satisfying $f(x^*)>0$ and $g(x^*)=0$.
\end{proof}

\section{Using the Newton-polytope method} \label{sec:appendix-using-cones-method}
Here we show how we used Algorithm~\ref{alg:cones} to find the Hopf bifurcation in Theorem~\ref{thm:hopf}.  
(For details, 
see the supplementary files {\tt reducedERK-hopf.mw} and {\tt reducedERK-cones.sws}).
Recall from the proof of that theorem, that our goal was to find some $x^* \in \mathbb{R}^{10}_{>0}$ and $\hat{\kappa}^* = (\kcat^*, \koff^*, \loff^*) \in \mathbb{R}^3_{>0}$ satisfying the following conditions from Proposition~\ref{prop:hopfcriterion}:
    \begin{align} \label{eq:hurwitz-hopf-conditions-reduced-ERK-again}
   \mathfrak{h}_4 (\hat{\kappa}^*; x^*) >&0~, ~ 
    \mathfrak{h}_5 (\hat{\kappa}^*; x^*) >0~, ~ 
    \mathfrak{h}_6 (\hat{\kappa}^*; x^*) =0~, ~{\rm and} ~ 
    \frac{\partial}{\partial \kcat }& \mathfrak{h}_6 (\hat{\kappa}; x) |_{(\hat{\kappa}; x)=(\hat{\kappa}^*; x^*)} \neq 0~.
    \end{align}

{\bf Step One.}
Specialize some of the parameters: set $k_{\rm off}=\ell_{\rm off}=1$ and $x_3=x_4=x_6 = x_7= x_9= x_{10} = 1$.    (Otherwise, $\mathfrak{h}_5$ and $\mathfrak{h}_6$ are too large to be computed.)

{\bf Step Two.}
Do a change of variables: let 
$y_i = 1/x_i$ for $i=1,2,5,8$.
These variables $x_i$ were in the denominator, so switching to the variables $y_i$ yield polynomials.  

Let 
$\mathcal{H}_4$, $\mathcal{H}_5$, and $\mathcal{H}_6$ denote the resulting polynomials in $\mathbb{Q}[\kcat, y_1, y_2, y_5, y_8]$ after performing Steps One and Two.  Accordingly, our updated goal is to find $(\kcat^*, y^*_1, y^*_2, y^*_5, y^*_8) \in \mathbb{R}^{5}_{>0}$ at which $\mathcal{H}_4$ and $\mathcal{H}_5$
are positive and $\mathcal{H}_6$ is zero.  (In a later step, we must also check the partial-derivative condition in~\eqref{eq:hurwitz-hopf-conditions-reduced-ERK-again}.)

{\bf Step Three.}
Apply (a straightforward generalization of)
Algorithm~\ref{alg:cones} as follows.
\begin{enumerate}[(i)]
    \item Find a positive vertex of $\mathcal{H}_4$ and a positive vertex of $\mathcal{H}_5$
    whose outer normal cones intersect (denote the intersection by $C$), and a positive vertex and a negative vertex of $\mathcal{H}_6$ (denote their outer normal cones by $D_+$ and $D_-$, respectively) for which:
        \begin{enumerate}
            \item the intersection $D_+ \cap D_-$ is 4-dimensional, and 
            \item the intersections $C \cap D_+$ and $C \cap D_-$ are both 5-dimensional.
        \end{enumerate}
    \item By Proposition~\ref{app_prop:cones_method}, a vector $(\kcat^*, y^*_1, y^*_2, y^*_5, y^*_8)$ that accomplishes our updated goal, is guaranteed.  To find such a point, we follow Algorithm~\ref{alg:cones} to obtain 
     $\kcat^*=729,y_1^*\approx 16.79978292, y_2^*\approx 453.5941389, y_5^*\approx 6.587368051$, and $y_8^*\approx 80675.77181$.
 \end{enumerate}

Recall the specializations in Step One and change of variables in Step Two,
to obtain 
$\hat{\kappa}=(729,1,1)$ and  
\[
   x^*~\approx~ (
0.05952457867,~ 
0.002204614024,~ 
1,~ 
1,~ 
0.1518056972,~ 
1,~ 
1,~ 
0.00001239529511,~ 
1,~ 
1 
)~.\]


{\bf Step Four.}
Verify that the conditions in~\eqref{eq:hurwitz-hopf-conditions-reduced-ERK-again} hold. 

\end{document}